\renewcommand*\backref[1]{\ifx#1\relax \else (pg. #1) \fi}
\crefname{lemma}{Lemma}{Lemmas}
\crefname{fact}{Fact}{Facts}
\newcommand{\colorconstraints}{\text{Color Constraints}}
\crefname{colorconstraints}{(color constraints)}{Color Constraints}
\crefname{indsetconstraints}{(indset constraints)}{IndSet Constraints}
\crefname{theorem}{Theorem}{Theorems}
\crefname{mtheorem}{Theorem}{Theorems}
\crefname{itheorem}{Theorem}{Theorems}
\crefname{corollary}{Corollary}{Corollaries}
\crefname{claim}{Claim}{Claims}
\crefname{example}{Example}{Examples}
\crefname{algorithm}{Algorithm}{Algorithms}
\crefname{problem}{Problem}{Problems}
\crefname{definition}{Definition}{Definitions}
\crefname{equation}{Eq.}{Eq.}
\crefname{strategy}{Strategy}{Strategies}
\newtheorem{theorem}{Theorem}[section]
\newtheorem{lemma}[theorem]{Lemma}
\newtheorem*{lemma*}{Lemma}
\newtheorem{claim}[theorem]{Claim}
\newtheorem{proposition}[theorem]{Proposition}
\newtheorem{fact}[theorem]{Fact}
\theoremstyle{definition}
\newtheorem{definition}[theorem]{Definition}
\newtheorem*{definition*}{Definition}
\newtheorem{remark}[theorem]{Remark}
\newtheorem{observation}[theorem]{Observation}
\newtheorem{algorithm}{Algorithm}
\newtheorem{algorithm-thm}[theorem]{Algorithm}
\let\mathbb\varmathbb
\newcommand{\FormatAuthor}[3]{
\begin{tabular}{c}
#1 \\ {\small\texttt{#2}} \\ {\small #3}
\end{tabular}
}
\newcommand{\R}{{\mathbb R}}
\newcommand{\N}{{\mathbb N}}
\newcommand{\eps}{\varepsilon}
\newcommand{\F}{{\mathbb F}}
\newcommand{\E}{{\mathbb E}}
\newcommand{\1}{\mathbf{1}}
\newcommand{\C}{\mathbb C}
\newcommand{\Bits}{\{0,1\}}
\newcommand{\zo}{\Bits}
\newcommand{\cH}{\mathcal H}
\newcommand{\poly}{\mathrm{poly}}
\newcommand{\val}{\mathrm{val}}
\newcommand{\mper}{\,.}
\newcommand{\mcom}{\,,}
\newcommand{\Norm}[1]{\left\lVert#1\right\rVert}
\newcommand{\polylog}{\operatorname{polylog}}
\renewcommand{\emptyset}{\varnothing}
\renewcommand{\geq}{\geqslant}
\renewcommand{\leq}{\leqslant}
\renewcommand{\epsilon}{\varepsilon}
\newcommand{\codegree}[2]{d_{#1,#2}}
\begin{document}

\title{Improved Lower Bounds for all Odd-Query Locally Decodable Codes}

\author{
\begin{tabular}[h!]{ccc}
\FormatAuthor{Arpon Basu}{ab5541@princeton.edu}{Princeton University}
\FormatAuthor{Jun-Ting Hsieh\thanks{Supported by NSF CAREER Award \#2047933.}}{juntingh@cs.cmu.edu}{Carnegie Mellon University} \\ \\
\FormatAuthor{Pravesh K.\ Kothari\thanks{Supported by  NSF CAREER Award \#2047933, Alfred P. Sloan Fellowship and a Google Research Scholar Award.}}{kothari@cs.princeton.edu}{Princeton University}
\FormatAuthor{Andrew D. Lin}{andrewlin@princeton.edu}{Princeton University}
\end{tabular}
} 
\date{November 21, 2024}

\maketitle
\vspace{-1em}

\begin{abstract}
We prove that for every odd $q\geq 3$, any $q$-query binary, possibly non-linear locally decodable  code ($q$-LDC) $E:\{\pm1\}^k \rightarrow \{\pm1\}^n$ must satisfy $k \leq \tilde{O}(n^{1-2/q})$. For even $q$, this bound was established in a sequence of works~\cite{KT00,GKST06,KdW04}. For $q=3$, the above bound was achieved in a recent work~\cite{AlrabiahGKM23} using an argument that crucially exploits known exponential lower bounds for $2$-LDCs. Their strategy hits an inherent bottleneck for $q \geq 5$. 

Our key insight is identifying a general sufficient condition on the hypergraph of local decoding sets called $t$-\emph{approximate strong regularity}. This condition demands that 1) the number of hyperedges containing any given subset of vertices of size $t$ (i.e., its \emph{co-degree}) be equal to the same but arbitrary value $d_t$ up to a multiplicative constant slack, and 2) all other co-degrees be upper-bounded \emph{relative} to $d_t$. This condition significantly generalizes related proposals in prior works~\cite{GuruswamiKM22,HsiehKM23,AlrabiahGKM23,HKMMS24} that demand \emph{absolute} upper bounds on all co-degrees. 

We give an argument based on spectral bounds on \emph{Kikuchi Matrices} that lower bounds the blocklength of any LDC whose local decoding sets satisfy $t$-approximate strong regularity for any $t \leq q$. Crucially, unlike prior works, our argument works despite having no non-trivial absolute upper bound on the co-degrees of any set of vertices. To apply our argument to arbitrary $q$-LDCs, we give a new, greedy, approximate strong regularity decomposition that shows that arbitrary, dense enough hypergraphs can be partitioned (up to a small error) into approximately strongly regular pieces satisfying the required relative bounds on the co-degrees. 


\end{abstract}


\thispagestyle{empty}
\setcounter{page}{0}

\clearpage
 \microtypesetup{protrusion=false}
  \tableofcontents{}
  \microtypesetup{protrusion=true}
\thispagestyle{empty}
\setcounter{page}{0}

\clearpage

\pagestyle{plain}
\setcounter{page}{1}


\section{Introduction}
\label{sec:intro}
A binary error-correcting code $E: \on^k \rightarrow \on^n$ is \emph{locally decodable} if it admits a simple ``local'' decoding algorithm. Given a corrupted received word $y \in \on^n$ and an index $1 \leq i \leq k$ of the message, the (randomized) algorithm reads at most $q$ locations of $y$ (where $q$ is the \emph{query complexity}) and is required to output the $i$-th bit of the message correctly with probability $>1/2+\Omega(1)$.
The study of locally decodable codes (LDCs) goes back to the early days of algorithmic coding theory. Starting with the proof of the PCP theorem~\cite{AS98,ALMSS98}, their applications form a diverse list that includes worst-case to average-case reductions~\cite{Tre04}, private information retrieval~\cite{Yekhanin10}, secure computation~\cite{IshaiK04}, derandomization~\cite{DvirS05}, matrix rigidity~\cite{Dvir10}, fault-tolerant computation~\cite{Romashchenko06}, and even data structure lower bounds~\cite{Wolf09,ChenGW10}.

The central question in the study of LDCs is the optimal trade-off between the \emph{rate}, i.e., the smallest possible blocklength $n$ as a function of $k$, and the query complexity $q$, assuming that the (normalized) distance of the code is a fixed constant $\delta \in (0,1)$. For $q=2$ (the smallest interesting query complexity), the binary Hadamard code achieves $n = 2^{O(k)}$. This exponential blow-up was eventually shown to be necessary even for non-linear codes over all finite alphabets~\cite{KT00,GKST06,KdW04}.\footnote{For $2$-query codes on alphabet sizes that grow with $n$, known lower bounds get progressively weaker~\cite{WdW05}. For large enough alphabet sizes, this is inevitable as the problem is closely related to the 2-server private information retrieval (PIR) that does admit significant improvements on the parameters of Hadamard codes over large fields~\cite{DG16}.}

For $q\geq 3$, our knowledge is starkly limited. Over 15 years ago, a breakthrough result~\cite{Yek08,Efremenko09} showed that there are $q$-query LDCs ($q$-LDCs, henceforth) with a sub-exponential blocklength of $n \leq \exp\parens{2^{\tilde{O}(\log^{1/(q-1)} k)}} \leq 2^{k^{o(1)}}$. On the other hand, a sequence of results~\cite{GKST06,KdW04}, beginning with the pioneering work of Katz and Trevisan~\cite{KT00}, showed that $k \leq \tilde{O}(n^{1-2/q})$ for any \emph{even} integer $q$. For odd $q$, since $q$-LDCs are $(q+1)$-LDCs, we can invoke the above inequality to yield a weaker bound of $k \leq \tilde{O}(n^{1-2/(q+1)})$. In the following two decades, these bounds were state-of-the-art up to polylogarithmic factor improvements~\cite{Woo07,Woo08,Woo12,BCG20}.

A recent work~\cite{AlrabiahGKM23} introduced a new connection between $q$-LDCs and the problem of semirandom $q$-CSP refutation. By relying on spectral methods based on \emph{Kikuchi Matrices}~\cite{WeinAM19} for semirandom CSP refutation~\cite{GuruswamiKM22,HsiehKM23}, they obtained the first polynomial improvement on the above bounds for the case of $q=3$. In particular, their bound of $k \leq \tilde{O}(n^{1/3})$ matches the expression $\tilde{O}(n^{1-2/q})$ proved in~\cite{KdW04} for even $q$. The argument of~\cite{AlrabiahGKM23} was later simplified in~\cite{HKMMS24} to obtain a purely combinatorial proof for linear $3$-LDCs that also removes all but one logarithmic factor. This framework eventually led to new progress giving exponential lower bounds on the blocklength of $3$-query locally \emph{correctable} codes \cite{KothariM23,KothariM24,Yankovitz24,AlrabiahG24}. 

\parhead{Beyond $q=3$?} Despite the quantitative improvements for $q=3$, the ideas developed in~\cite{AlrabiahGKM23,HKMMS24} fail to give any improvement for any odd $q>3$. Indeed, their arguments strongly exploit properties specific to $q=3$, as we next explain. At a high level, the proof involves a decomposition step that starts from the query sets of a local decoder for a purported $3$-LDC (with $k \gg n^{1/3}$) and outputs either (1) a $3$-LDC where the query sets satisfy an additional pseudo-random property, or (2) an object that is essentially a $2$-LDC with (a significantly reduced) message length of $\sim O(\log n)$. One must obtain a contradiction in either of the two cases to show that $3$-LDCs with $k \gg n^{1/3}$ do not exist. One can only get the $2$-LDC-like object to have message length $\sim \log n$, but this suffices in their setting since $2$-LDCs are known to require an exponential blocklength~\cite{GKST06,KdW04}. The analogous plan for $5$-LDCs would require an exponential lower bound on $q$-LDCs for all $q\leq 4$. But for $q=4$, such a bound is not only far from the best-known quadratic lower bound but, given the Efremenko-Yekhanin codes, also \emph{false}. This inherent bottleneck has so far prevented any improvement in the lower bounds for odd $q\geq 5$. Indeed, upgrading the lower bounds for $q\geq 5$ was explicitly identified as an outstanding open question in prior work~\cite{AlrabiahGKM23}. 

In this work, we resolve this question and prove: 
\begin{theorem}\label{main}
 For any $q$-LDC $E:\on^k \rightarrow \on^n$ with a constant distance,  $k\leq O(n^{1-2/q} \log^4 n)$. If $E$ is linear, then, $k \leq O(n^{1-2/q} \log^2 n)$. 
\end{theorem}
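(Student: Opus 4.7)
The plan is to follow the Kikuchi-matrix paradigm from prior work on $q=3$, combined with the new decomposition step advertised in the abstract. I would first invoke the standard Katz--Trevisan reduction: from any purported $q$-LDC with $k$ larger than the claimed bound, extract $q$-uniform hypergraph matchings $M_1, \ldots, M_k$ on $[n]$, each of size $\Omega(n)$, together with signs $\chi_C \in \Fits$ for every $C \in M_i$, such that for every valid codeword $y = E(x)$ one has $x_i \cdot \chi_C = \prod_{j \in C} y_j$ on a constant fraction of $C \in M_i$. Writing $H = \bigsqcup_i M_i$, the remaining task is to derive a contradiction from $H$ and the signs alone.

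The core reduction is to the case where $H$ enjoys $t$-approximate strong regularity for some $t \leq q$: a single target co-degree $d_t$ such that every $t$-subset of vertices lies in $\Theta(d_t)$ hyperedges, with lower-order co-degrees $d_s$ (for $s < t$) controlled \emph{relative} to $d_t$ in the way needed by the Kikuchi analysis. I would establish a greedy decomposition lemma: any sufficiently dense $q$-uniform hypergraph partitions, up to deletion of a small constant fraction of edges, into $\polylog(n)$ pieces, each $t$-approximately strongly regular for some $t \leq q$. The procedure dyadically buckets the possible co-degree profiles into $O(\log n)$ scales, and at each step identifies a scale violating regularity, peels off either the offending heavy sub-hypergraph (which, being heavy at some level, is small in edge count) or a regular remainder, and recurses.

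On each strongly regular piece I would run the Kikuchi-matrix machinery at a level $\ell$ to be optimized. Concretely, define matrices $A^{(i)}$ on $\binom{[n]}{\ell}$ by $A^{(i)}_{S,S'} = \chi_C$ whenever $S \triangle S' = C$ for some $C \in M_i$, and zero otherwise. The LDC property, via a Cauchy--Schwarz / random restriction step as in the $q=3$ case, produces a vector $z$ realizing $\ip{z, \sum_i x_i A^{(i)} z}$ as a large constant fraction of $k \binom{n}{\ell} (\ell/n)^{q/2}$ for an explicit $x \in \Fits^k$. A matching upper bound on $\Norm{\sum_i x_i A^{(i)}}$ comes from a trace-moment or matrix-Bernstein calculation with random signs, and crucially each moment contribution can be bounded using only $d_t$ and the \emph{relative} profile $(d_s/d_t)_{s \leq t}$ from strong regularity---precisely where the absolute co-degree controls used in prior works are both unavailable and unnecessary. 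Balancing the two bounds and optimizing $\ell$ gives $k \leq \tilde{O}(n^{1-2/q})$ per piece; summing the $\polylog(n)$ pieces and carefully tracking the concentration slack yields $\log^4 n$ overhead in general and $\log^2 n$ in the linear case.

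The main obstacle I anticipate is aligning the two halves of the argument through the right relative co-degree controls. Absolute upper bounds on every $d_s$ are destroyed by even a single round of peeling (heavy low-level co-degrees can be created when high-level structure is removed), so any iterable decomposition must allow the $d_s$ to be arbitrary. This then forces a more delicate trace-moment analysis on the Kikuchi side: one must show that, for the chosen $t$ and the choice of $\ell$, the contributions from all levels $s < t$ are dominated by those from level $t$, so that the final spectral bound depends only on $d_t$. Making this precise---and pinning down exactly which relative bound on $d_s / d_t$ is simultaneously achievable by the greedy decomposition and sufficient for the Kikuchi calculation---is the heart of the proof.
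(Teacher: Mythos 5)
Your high-level plan matches the paper's (normal form, strong-regularity decomposition, Kikuchi matrix plus a spectral bound on each piece), but two load-bearing ingredients are missing or stated incorrectly, and they happen to be exactly where the $q=3$ machinery breaks for larger odd $q$.

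First, you write that the spectral bound should depend only on $d_t$ and the relative profile $(d_s/d_t)_{s \le t}$, and later that ``contributions from all levels $s < t$'' must be dominated by level $t$. This is the wrong half of the problem. The good-index conditions in the paper (\cref{def:goodindex}) require relative bounds on $d_r/d_t$ both for $r < t$ \emph{and} for $r > t$, and it is the $r > t$ direction that kills the naive $t = 1$ choice: for $q = 5$, a heavy triple (so $d_3 \gg \polylog n$, with $t=1$ and hence $r = 3 > t$) forces a constant fraction of Kikuchi vertices $S$ to see $\gtrsim d_3$ edges in a single color, wrecking the proper-coloring/approximate-regularity condition needed for the spectral bound (this is exactly \cref{fig:clustering_diagram}). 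Your version of strong regularity, which constrains only $s \le t$, would let this clustering through and the Kikuchi degree bound would not hold.

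Second, and more fundamentally, your decomposition sketch takes for granted that at each stage one can always exhibit a ``regular remainder'' for \emph{some} $t \le q$; you do not explain why a suitable $t$ exists once you allow arbitrary co-degree sequences. This is the paper's key new technical fact (\cref{lem:good-index-exists}): for \emph{every} nonincreasing sequence $\gamma_1 \ge \cdots \ge \gamma_q$ there is an index $t$ satisfying all the relative inequalities simultaneously, proved by an extremal choice of $t_0 < q/2$ followed by a possible correction to some $t > q/2$. Without this, the peeling step in your decomposition has no termination guarantee with a regular output, and the ``heavy sub-hypergraph is small in edge count'' heuristic (borrowed from the absolute-threshold decompositions of prior work) does not apply, since there is no absolute threshold to make the heavy part small. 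The paper instead iterates: find a good index $t$ for the current hypergraph, greedily extract hyperedges through heavy $t$-sets until $d_t$ halves, and then recurse; termination in $O(q\log|\cH|)$ rounds follows because some $d_t$ halves each round. Also, as a smaller point, the Kikuchi matrix you define via $S \triangle S' = C$ is the even-$q$ one and does not make sense for odd $|C|$; for odd $q$ the paper's Kikuchi graph has edges indexed by \emph{pairs} $(C, C')$ sharing a $t$-set $Q_\theta$, which is what the Cauchy--Schwarz step produces, and where the good index $t$ enters the construction.
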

While we omit a formal argument here, as in~\cite{AlrabiahGKM23}, our methods naturally extend to give a similar lower bound for LDCs over any finite (even $n^{o(1)}$) size alphabet. We note that in a concurrent and independent work, Janzer and Manohar~\cite{JM24} prove a result that obtains a similar bound as in \Cref{main} but with a single logarithmic factor.

\subsection{Brief Overview} Let us explain our main ideas briefly (see \Cref{sec:overview} for a guided exposition with a complete proof for linear codes). After standard reductions (see \Cref{fact:normal-form}), the local decoder for a $q$-LDC can be described by a collection of $q$-uniform matchings $\cH_1, \cH_2, \ldots, \cH_k$, one for every bit $1 \leq i \leq k$ of the message. To decode the $i$-th bit in a received corrupted codeword $y$, the decoder queries a uniformly random $q$-tuple in $\cH_i$. 

\parhead{The idea from~\cite{AlrabiahGKM23,HKMMS24}:} The key technical idea in~\cite{AlrabiahGKM23} (and a combinatorial analog in~\cite{HKMMS24}) is identifying a certain \emph{well-spread} property of the hypergraph of the query sets $\bigcup_{i \in [k]} \cH_i$ that allows spectral methods based on \emph{Kikuchi Matrices} to succeed in proving that $k \lesssim n^{1-2/q}$. For any $q \geq 3$, their well-spread condition requires that no pair of $\{a,b\}$ of the codeword coordinates are queried in more than $O(\log n)$ different matchings $\cH_i$. Briefly speaking, the well-spread property is used to argue that certain  ``Kikuchi graphs'' obtained from the $\cH_i$s are approximately regular, i.e., after dropping a negligible fraction of outlier vertices, all the remaining vertices have their degrees within a $\polylog n$ factor of the average. 

The well-spread property (also called being a \emph{design} in \cite{BarakDYW11}) is a natural pseudo-random condition and is satisfied, e.g., by a collection of uniformly random hypergraph matchings with high probability. Its use in~\cite{AlrabiahGKM23} was directly motivated by works on semirandom CSP refutation and the \emph{hypergraph Moore bound}~\cite{GuruswamiKM22,HsiehKM23}, where it leads to near-optimal results. Indeed, the first step in these works is to find a simple deterministic property of random hypergraphs that suffices for the success of spectral methods for ``refuting'' (i.e., certifying unsatisfiability) random CSP instances.

Given an arbitrary $3$-LDC (i.e., arbitrary collection of matchings $\cH_i$), one must now reduce the problem to the case where the local decoding sets satisfy the well-spread condition. This is easily ensured in their setting through a greedy decomposition: we iteratively peel off pairs that occur $\gg O(\log n)$ times (call such pairs ``heavy''). After the peeling-off process ends, the remaining elements in the matching must satisfy the well-spread condition by design. If the matchings retain a constant fraction of the constraints we began with, we can apply the argument for the well-spread case referenced above. If not, we look at all the query sets peeled off. By extending the original code to have coordinates for the peeled off ``heavy pairs'', one can think of the query sets as describing an object that behaves essentially as a $2$-LDC with message length $\lesssim \log n$ (this bound corresponds to the number of times a pair was queried in the original $\bigcup_i \cH_i$). Given the known exponential lower bounds for $2$-LDCs, this is enough for a contradiction. 

The well-spread condition, when formalized for an arbitrary odd $q \geq 5$ requires control of co-degrees of all sets of size $r$ for $r \leq 5$. Indeed, such conditions were formulated as notions of regularity in  prior works~\cite{GuruswamiKM22,HsiehKM23}. Specifically, these works identified appropriate absolute thresholds $\tau_1, \tau_2, \ldots, \tau_t$ in terms of the number of vertices $n$ and related parameters, and posited that the co-degree of any size-$r$ set $Q$ (i.e., the number of hyperedges in $\cH$ containing $Q$) be upper bounded by $\tau_r$. The precise thresholds are chosen to ensure the success of a certain spectral bound on Kikuchi matrices that forms the technical core of their argument. 

For $5$-LDCs, however, such an argument fails immediately since the threshold $\tau_3$ for triples still turns out to be $\sim \log n$, and if we were to remove $\gg O(\log n)$-heavy triples, we get an object that behaves essentially as a $4$-LDC (instead of a $2$-LDC) but still with message length $\sim \log n$. Unlike $2$-LDCs, now there is no contradiction since there are $4$-LDCs~\cite{Efremenko09,Yek08} with $\gg \log^{\omega(1)} n$ message length (and moreover, the best known lower bounds for $4$-LDCs are just quadratic $k \lesssim n^{1/2}$).

\parhead{Approximate strong regularity:} Our key conceptual idea is to depart from the philosophy of reducing to well-spread $\cH=\bigcup_i \cH_i$s. Instead, we prove that a significant relaxation that we call \emph{approximate strong regularity} suffices for our spectral argument to succeed.

Approximate strong regularity does not demand any absolute upper bounds on the co-degrees of subsets of vertices in $\cH$. Instead, given a $t \leq q$, we say that a $\cH$ is $t$-approximately strongly regular if 1) all sets of size $t$ have roughly the same co-degree, say $d_t$, up to a fixed constant multiplicative slack, and 2) the co-degrees of every set of size $1 \leq r \leq q$ is appropriately bounded \emph{relative to} $d_t$. Crucially, our relative bounds do not imply any non-trivial absolute upper bounds on the co-degrees. Thus, the co-degree of triples may be $\poly(n)$ (as opposed to the $\sim \polylog n$ tolerated in prior arguments).

We show that if $\cH =\bigcup_i \cH_i$ is $t$-approximately strongly regular, then certain Kikuchi graphs built from the $\cH_i$s that arise in our spectral argument satisfy approximate regularity (i.e., dropping a few outlier vertices makes all degrees roughly equal). Our relative thresholds (see \Cref{def:goodindex}) are chosen to make a key technical piece in the argument (where we prove that certain Kikuchi graphs obtained from the hypergraph $\cH$ are approximately regular) go through (and may not look motivated in first reading).  As a result, we can argue that spectral methods succeed in establishing $k \lesssim n^{1-2/q}$ given approximate strong regularity of the underlying $\cH = \bigcup_i \cH_i$. We stress that this result holds regardless of how large $d_r$s themselves are and requires no pruning of heavy tuples, which is the key bottleneck in the prior approaches. We note that arguing the approximate regularity is the key technical component in all prior applications of the Kikuchi matrix method~\cite{GuruswamiKM22,HKMMS24,HsiehKM23,AlrabiahGKM23,KothariM23,Yankovitz24,KothariM24}.

Our second main technical component argues that any hypergraph can be decomposed (up to a negligible fraction of ``error'' hyperedges) into pieces that satisfy the required relative upper bounds on all co-degrees with respect to $d_t$ for some $t$. 

\parhead{Outlook:} Hypergraph decompositions to ensure approximate regularity of Kikuchi graphs have been a key tool in several recent results~\cite{GuruswamiKM22,HsiehKM23,AlrabiahGKM23,HKMMS24,KothariM23,KothariM24,Yankovitz24}. In all these prior applications, regularity of Kikuchi graphs was ensured by forcing absolute upper bounds on the co-degrees of all subsets in the underlying hypergraphs. The key difficulty for improved lower bounds for $q$-LDCs for odd $q \geq 5$ was precisely the inapplicability of this natural idea. We thus expect that the notion of approximate strong regularity and the fact one can make this property hold without loss of generality by hypergraph decomposition will find more applications, especially in the context of the Kikuchi matrix method. 
\section{Warmup: A Combinatorial Proof for the Linear Case}
\label{sec:overview}

In this section, we give a proof for the case of \emph{linear} $q$-LDCs. The argument itself is short, but we provide additional commentary to explain our key ideas.
Our proof exploits connections between LDCs and \emph{even covers} in edge-colored hypergraphs.
We will briefly discuss our approach for the non-linear case (i.e., \Cref{main}) in \Cref{sec:nonlinear-overview}.

\begin{theorem}
\label{thm:linear-LDCs}
Let $q \geq 3$ be odd and let $E:\pmo^k \rightarrow \pmo^n$ be a linear $q$-LDC that corrects up to a $\delta$ fraction of errors. Then, $k \leq O(1/\delta) \cdot  n^{1-2/q} \log^{2}n$.
\end{theorem}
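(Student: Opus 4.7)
The plan follows the three-part strategy outlined in the introduction: reduce to matchings, decompose into approximately strongly regular pieces, and execute a Kikuchi matrix spectral argument on each piece.

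First, I invoke the normal form for locally decodable codes (\Cref{fact:normal-form}) to extract, for each $i \in [k]$, a $q$-uniform matching $\cH_i$ on $[n]$ of size $|\cH_i| \geq \Omega(\delta n/q)$ such that, by linearity, $\prod_{a \in C} E(x)_a = x_i$ holds identically on $\{\pm 1\}^k$ for every $C \in \cH_i$. Write $\cH = \bigsqcup_{i \in [k]} \cH_i$ for the associated colored $q$-uniform hypergraph; it has $m = \Omega(\delta k n/q)$ hyperedges, and the task reduces to upper bounding $k$. I then apply the paper's approximate strong regularity decomposition: after deleting an $o(1)$-fraction of hyperedges, $\cH$ decomposes into pieces each $t_j$-approximately strongly regular for some $t_j \leq q$ and pivot co-degree $d_{t_j}$; a standard averaging argument isolates a single piece $\cH^\star$ with regularity level $t$ and pivot $d_t$ that still carries $\Omega(k)$ colors with near-full matchings.

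On $\cH^\star$ I execute the Kikuchi bound. Choose an integer $\ell$ of size $\ell \asymp (k/n^{1-2/q})\cdot \polylog n$ and build a Kikuchi-type matrix $A$ indexed by $\binom{[n]}{\ell}$ whose entries sum over colored hyperedges $(i, C)$ of $\cH^\star$, equipped with a distinguished half-partition $C = C_L \sqcup C_R$ of sizes $\lceil q/2 \rceil, \lfloor q/2 \rfloor$ (the asymmetry introduced by odd $q$ is absorbed by a routine bipartite variant of the construction). Using the identity $\prod_{a \in C} E(x)_a = x_i$, a short Cauchy--Schwarz computation on a test vector $u$ derived from $E(x)$ gives a bilinear lower bound $\langle u, A u\rangle \gtrsim \delta k \cdot |\cH^\star| \cdot \binom{n}{\ell}/n^{q/2}$ on average over $x$. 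Combined with the trivial $\langle u, A u\rangle \leq \|A\|_{\mathrm{op}}\binom{n}{\ell}$ and a spectral bound of the form $\|A\|_{\mathrm{op}} \leq \tilde{O}(\sqrt{D})$ for the average row-sum $D$ of $A$, this rearranges to $k \leq O(1/\delta)\cdot n^{1-2/q}\log^2 n$.

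The hard part is the spectral bound $\|A\|_{\mathrm{op}} \leq \tilde{O}(\sqrt{D})$, which requires $A$ to be approximately degree-regular after removing a small fraction of outlier rows. The row-sum of a vertex $S \in \binom{[n]}{\ell}$ telescopes over co-degrees in $\cH^\star$ of certain $r$-subsets of $[n]$ for $r \leq q$; prior arguments handled this by imposing absolute upper bounds of order $\polylog n$ on all such co-degrees, a condition that, as the introduction explains, is unattainable for odd $q \geq 5$. The innovation is that the relative thresholds baked into \Cref{def:goodindex} already suffice: a level-by-level Markov argument shows that the set of $\ell$-subsets whose row-sum at some level $r$ exceeds $D \cdot \polylog n$ has density $o(1)$ in $\binom{[n]}{\ell}$, so these outliers can be discarded without spoiling the lower bound on $\langle u, A u\rangle$. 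Verifying that the calibrations in $d_t$ make this step succeed simultaneously for every $r \leq q$, and that the final bookkeeping accumulates only two logarithmic factors, is the delicate technical core of the argument.
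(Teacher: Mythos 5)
Your high-level skeleton (normal form, decompose into approximately strongly regular pieces, Kikuchi argument on each piece) matches the paper, but the engine you put inside that skeleton is different from, and weaker than, the one the paper uses for linear codes — and as stated it has a real gap.

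The paper's proof of this theorem is \emph{purely combinatorial}: it never takes a spectral norm. \Cref{fact:even-cover} says that in a linear LDC, every even cover of $\cH$ uses every color an even number of times, so it suffices to exhibit a single \emph{weak rainbow even cover} (one color used exactly once). \Cref{lem:even-cover-regular-hypergraph} produces such a cover in any approximately strongly regular hypergraph of average degree $\gtrsim n^{1-2/q}\log n$ by reducing to the weak rainbow cycle lemma (\Cref{lem:weak-rainbow-cycle}) on the Kikuchi \emph{graph}, and \Cref{lem:decomp-overview} supplies a piece of the right density. The two $\log$ factors in the theorem come from exactly these two steps (one from the density of the largest piece, one from the average-degree threshold in \Cref{lem:weak-rainbow-cycle}).

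What you describe instead is a spectral argument on a Kikuchi \emph{matrix}: a test vector $u$ built from $E(x)$, a Cauchy--Schwarz lower bound on $\langle u, Au\rangle$, and an upper bound $\|A\|_{\mathrm{op}} \leq \tilde O(\sqrt{D})$. This is essentially the paper's \emph{non-linear} argument (\Cref{lem:refute-approx-regular}, proved via \Cref{lem:cstrickNEW}, \Cref{lem:upper-bound-via-infty-to-1-norm}, \Cref{lem:infty-to-1-norm-bound}), not the one used here. Two concrete problems. First, your $\|A\|_{\mathrm{op}} \leq \tilde O(\sqrt D)$ is stated as a property of the (approximately regular) matrix $A$ itself, but a $\sqrt D$-type bound of that form does not hold for a deterministic $0/1$ matrix with row sums $\approx D$; the square-root gain in the paper comes from Matrix Khintchine applied to the \emph{random sign} combination $\sum_i b_i K_i$, with the $b_i$ (message bits) serving as Rademacher variables, and with a prior row-pruning step to remove heavy rows. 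You would need to make that randomness explicit, split into left/right blocks for independence, prune heavy rows, and only then invoke Khintchine — none of which appears in the sketch, and the intended $\langle u, Au\rangle$ lower bound "on average over $x$" is not obviously compatible with taking $\|A\|_{\mathrm{op}}$ at a fixed $A$. Second, even if you patch all of this, the spectral route as carried out in \Cref{thm:main-detail} costs $\log^4 n$ and $1/\delta^2$ (plug $\eps=\Theta(1)$ into \Cref{thm:main-detail}); you assert "the final bookkeeping accumulates only two logarithmic factors" but give no mechanism by which linearity removes the extra $\log^2 n$ and the extra $1/\delta$. Getting down to $O(1/\delta)\,n^{1-2/q}\log^2 n$ is precisely what the combinatorial weak-rainbow-even-cover argument buys you.

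Two smaller points: your choice $\ell \asymp (k/n^{1-2/q})\polylog n$ would give $\ell \approx \polylog n$ at the target $k \approx n^{1-2/q}$, whereas both proofs in the paper need $\ell \approx n^{1-2/q}$; and the decomposition does not give a piece "carrying $\Omega(k)$ colors with near-full matchings" — it only guarantees $\Omega(|\cH|/(q\log n))$ hyperedges (\Cref{lem:decomp-overview}), which is a $\log n$-factor loss you must account for.
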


In \Cref{sec:even-covers-overview,sec:kikuchi-overview}, we introduce standard notions of even covers and the Kikuchi graph.
In \Cref{sec:regular-hypergraphs}, we state our new notion of hypergraph regularity (called ``approximate strong regularity'').
This notion of regularity is a key technical idea in our work and departs from the usual \emph{well-spread} conditions on the hypergraphs appearing in prior related works (see \Cref{def:approxstrongregdef} for more details).
In \Cref{sec:right-index-overview}, we give some explanation of the conditions, and in \Cref{sec:proof-of-weak-rainbow-even-cover} we prove a key result
(\Cref{lem:even-cover-regular-hypergraph}) on the existence of ``odd-colored'' even covers when the hypergraph satisfies approximate strong regularity.
In \Cref{sec:decomposition-overview}, we discuss our decomposition algorithm that ensures approximate strong regularity for each piece except those containing a negligible fraction of the original hyperedges.
At the end of \Cref{sec:decomposition-overview}, we complete the proof of \Cref{thm:linear-LDCs}.

\subsection{LDCs and Even Covers}
\label{sec:even-covers-overview}

By standard reductions (see \Cref{fact:normal-form}), we can assume that the $q$-LDC is in the normal form. Thus, there are $q$-uniform matchings of size $\Omega(n)$, say $\cH_1, \cH_2, \ldots, \cH_k$, that correspond to the local decoding sets for each of the message bits $1 \leq i \leq k$. Further, the decoder itself is linear. Thus, if $x= E(b)$ for a $q$-LDC $E: \pmo^k \to \pmo^n$, then for every $C \in \cH_i$, it must hold that $\prod_{u \in C} x_u = b_i$. 

An \emph{even cover} in a hypergraph is a collection of hyperedges that use every vertex an even number of times ---  a natural linear algebraic generalization of cycles in graphs. When viewing a hyperedge as the coefficient vector of a linear form over $\F_2$, an even cover is simply a subset of equations that add up to $0$ in $\F_2$. 

We begin with a simple connection between linear LDCs and even covers in edge-colored hypergraphs that has been standard in recent works on LDCs~\cite{AlrabiahGKM23,HKMMS24}.

\begin{fact}[Lemma 2.7 of \cite{HKMMS24}] \label{fact:even-cover}
    Let $E: \on^k \to \on^n$ be a binary, linear $q$-query LDC associated with $q$-uniform matchings $\cH_1, \cH_2, \dots, \cH_k$.
    Suppose we color each hyperedge in $\cH_i$ with color $i$.
    Then, any even cover in $\cH = \bigcup_{i\in [k]} \cH_i$ must use color $i$ an even number of times for each $i\in [k]$.
\end{fact}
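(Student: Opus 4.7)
The approach is to evaluate one scalar product in two different ways and then use the freedom to pick the message $b$. The only algebraic input I need is the ``local parity-check'' identity that linearity combined with the normal form provides, already recorded in the paragraph preceding the statement: for every codeword $x = E(b)$ and every hyperedge $C \in \cH_i$, the relation $\prod_{u \in C} x_u = b_i$ holds.

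Fix an even cover $S \seq \cH$ and let $n_i \defeq |S \cap \cH_i|$ be the number of hyperedges of color $i$ appearing in $S$. For an arbitrary message $b \in \pmo^k$ with codeword $x = E(b)$, I would form
\[
P \defeq \prod_{C \in S} \prod_{u \in C} x_u
\]
and compute it in two ways. Grouping the factors by vertex gives $P = \prod_{v \in [n]} x_v^{m_S(v)}$, where $m_S(v)$ is the multiplicity of $v$ across the hyperedges in $S$. Because $S$ is an even cover, each $m_S(v)$ is even, so every factor equals $1$ and hence $P = 1$. Grouping the factors instead by hyperedge, and applying the parity-check identity to each inner product, gives $P = \prod_{i \in [k]} \prod_{C \in S \cap \cH_i} b_i = \prod_{i \in [k]} b_i^{n_i}$.

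Equating the two evaluations yields $\prod_{i \in [k]} b_i^{n_i} = 1$ for \emph{every} $b \in \pmo^k$. Specializing $b$ so that $b_j = -1$ and $b_i = +1$ for all $i \neq j$ collapses the product to $(-1)^{n_j}$, which must equal $1$; hence $n_j$ is even. Since $j \in [k]$ was arbitrary, every color appears an even number of times in $S$, which is the claim. There is no genuine obstacle in this argument beyond carefully unfolding the double product in both orders: once the linear parity-check identity is in hand, the statement is a short bookkeeping calculation, with the even-cover hypothesis doing nothing more than annihilating the vertex side of the product and the arbitrariness of $b$ forcing parity on the color side.
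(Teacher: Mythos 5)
Your proof is correct and matches the paper's argument: both multiply the parity-check identities $\prod_{u\in C}x_u = b_i$ over the hyperedges of the even cover, observe that the left-hand side squares out to $1$, and then choose $b$ to force each $n_i$ to be even. The only cosmetic difference is that the paper phrases this as a contradiction (assume some $|S_i|$ is odd and set $b_i = -1$), whereas you specialize $b$ coordinate-by-coordinate to read off the parity directly.
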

This fact itself is simple to prove. Suppose an even cover uses a subset $S_i \subseteq \cH_i$ of hyperedges from the $i$-th matching, and suppose $|S_i|$ is odd. Then, for any message vector $b \in \pmo^k$ and the corresponding codeword $x = E(b)$, it must hold that $x_C = \prod_{u \in C}x_u= b_i$ for every $C \in S_i$. If we take product of the equations corresponding to edges in the even cover, then the LHS must square out to $1$ while the RHS is the product of $b_i$s for $i$ such that $S_i$ has odd size. This is a contradiction if we set such message bit to $-1$.

\parhead{LDC Lower Bounds and Weakly Rainbow Even Covers:} Thus, to prove that $k \ll n^{1-2/q}$ in any linear $q$-LDC, it suffices to show that if $k \gg n^{1-2/q}$ (i.e., $\cH = \bigcup_{i} \cH_i$ is a sufficiently dense hypergraph) then there is an even cover in $\cH$ that uses some color an \emph{odd} number of times.

This goal has a nice combinatorial interpretation. Notice that $\cH$ is a \emph{properly hyperedge-colored} hypergraph ---- since the hyperedges incident on any single vertex are all of distinct colors as the $\cH_i$s are matchings. Thus, our goal is to show that a properly edge-colored hypergraph that is dense enough must have an even cover that uses some color an odd number of times. In fact, we will prove the stronger statement that such a hypergraph must contain an even cover that uses exactly one hyperedge of some color. Let us make this into a definition before we move on:

\begin{definition}[Weak Rainbow Even Covers]
A \emph{weakly rainbow} even cover in a properly hyperedge-colored hypergraph $\cH$ is an even cover that uses exactly one hyperedge of some color. 
\end{definition}

Showing the existence of weakly rainbow even covers in dense enough hypergraphs is a variant of the well-studied rainbow cycle problem in graphs. Such problems were introduced in the pioneering work of~\cite{KMSV07}. Their work included, among many results, a \emph{weak rainbow cycle} theorem for graphs: every properly edge-colored graph on $n$ vertices with average degree $\geq O(\log n)$ has a cycle that uses some color exactly once\footnote{We call this version ``weak'' to distinguish from the better-known rainbow cycle conjecture asks for a cycle where every color is used at most once.}. There is a long line of work building on their work
including the ones leading up to the recent (almost) resolution of the related \emph{rainbow cycle conjecture}~\cite{DAS2013905,Janzer2020RainbowTN,Tomon2024Robust,Kim2024, JanzerSudakov,ABSZZ}. Here, let us state a generalization that we will apply as a black-box:
\begin{fact}[Lemma 2.5 of \cite{HKMMS24}] \label{lem:weak-rainbow-cycle}
    Let $G$ be an $n$-vertex graph where each edge in $G$ is assigned a set of $s$ colors, and suppose $G$ has average degree $d \geq 40s\log n$. Suppose that for every vertex $v \in G$ and color $c$, the number of edges incident on $v$ whose assigned set of colors contains $c$ is at most $d/20s \log n$. Then, $G$ contains a closed walk, of size at most $2\log n$, such that some color appears exactly once.
\end{fact}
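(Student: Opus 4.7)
My plan is to adapt the BFS-based argument for weak rainbow cycles in graphs (in the spirit of \cite{KMSV07}) to this color-set generalization. First, I would pass from $G$ to a non-empty subgraph $G'$ of minimum degree at least $d/4$ by iteratively deleting vertices whose degree falls below $d/4$ (a standard pruning, since the total degree is at least $dn$ and each deletion removes fewer than $d/4$ edges). The per-vertex per-color bound is only strengthened by restricting to a subgraph. Fix any vertex $r$ in $G'$ and run BFS for $L := \lfloor \log n \rfloor$ levels, producing a BFS tree $T$ rooted at $r$. Since $d/4 \gg 1$, a simple counting argument shows that at some depth $\leq L$ some vertex $u$ has at least $d/8$ incident edges going to other tree vertices (``back edges''); otherwise the tree would contain more than $n$ vertices.

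Each back edge $e = \{u, w\}$ yields a closed walk $W_e$ of length at most $2L + 1$, obtained by concatenating $e$ with the two tree paths from $u$ and $w$ to their least common ancestor in $T$. Since $W_e$ uses exactly one non-tree edge, it suffices to produce a pair $(e, c)$ with $c \in C(e)$ such that $c$ does not appear on any tree edge of $W_e$; then $c$ appears on $W_e$ exactly once, as required. For this I would perform a union bound on color-blocked edges: the two tree paths together contain at most $2L$ edges and carry at most $2 s L$ distinct colors, and by hypothesis each such color lies in the color set of at most $d/(20 s \log n)$ edges at any given vertex. Summing over the bad colors, the number of edges at $u$ whose color set meets the set of colors already on the tree above $u$ is at most $2 s L \cdot d/(20 s \log n) \leq d/10$, and similarly at $w$. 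Combined with the $\geq d/8$ available back edges at $u$, this leaves a positive surplus of pairs $(e, c)$ meeting the requirement.

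The main obstacle I anticipate is the bookkeeping in the final step: the second endpoint $w$ depends on the choice of $e$, so the ``$w$-side'' accounting is not a direct union bound but requires either a double-counting argument over triples $(e, c, f)$ with $f$ a tree edge on the $w$-path sharing color $c$, or a probabilistic choice of $e$ and then $c \in C(e)$. The hypothesis constants $40$ and $20$ appear to be chosen precisely to give enough slack to absorb this doubling, together with the pruning step and the off-by-one in $2\log n$ versus $2L + 1$. Once the accounting is carried through, the resulting closed walk $W_e$ has length at most $2 \log n$ and uses some color exactly once, matching the statement of the lemma.
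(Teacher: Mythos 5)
The paper states this result as a black box (Fact~\ref{lem:weak-rainbow-cycle}), citing it as Lemma~2.5 of \cite{HKMMS24}; there is no proof of it in the present paper to compare against. So the question is whether your proposal stands on its own.

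Your high-level plan (prune to minimum degree $\Omega(d)$, grow a BFS tree, find a vertex $u$ with $\Omega(d)$ back edges within $\lfloor \log n\rfloor$ levels, and form closed walks $W_e$ from each back edge $e$ plus the tree paths to the LCA) is the natural skeleton for a weak rainbow cycle argument and is fine as far as it goes; the pruning step and the geometric growth argument are both sound. However, the step you yourself flag as ``the main obstacle'' is a genuine gap, not just bookkeeping. Your estimate of at most $2sL\cdot d/(20s\log n)\le d/10$ bad back edges at $u$ is a correct union bound \emph{only} for the $u$-to-root side, because the forbidden color set $A(u)$ there is a fixed set of at most $sL$ colors, independent of $e$, and the hypothesis directly bounds, per color, the number of edges at the fixed vertex $u$ containing that color. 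The phrase ``and similarly at $w$'' does not follow from the hypothesis: the $w$-side color set $B(e)$ depends on $e$, the tree edges on $w(e)$-to-LCA are incident on a changing collection of vertices (not on $u$, and not on any single fixed vertex), and the per-vertex-per-color bound gives no control over ``the number of back edges $e$ at $u$ such that some color of $e$ appears on $w(e)$'s tree path.'' The two resolutions you sketch don't close this either: the ``triple $(e,c,f)$'' count runs into the same issue that the relevant $f$ ranges over ancestor edges of $w(e)$, a set that depends on $e$ and over which the hypothesis gives no uniform bound; and the probabilistic choice of $(e,c)$ merely rephrases the same expectation $\E_e[\,|C(e)\cap B(e)|\,]$ without a way to bound it. In short, the $u$-side is a union bound, but the $w$-side requires an additional structural idea (e.g., choosing $u$ and the back edges more carefully, or a more global double-counting over the BFS tree) that your proposal does not supply. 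Until that is filled in, the proof is incomplete at its central step.
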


Our main result for linear LDCs is a natural hypergraph analog of the weak rainbow cycle theorem for graphs (see \Cref{lem:even-cover-regular-hypergraph} for the key technical component). We will prove this by a reduction to edge-colored graphs and applying \Cref{lem:weak-rainbow-cycle}. This reduction is based on Kikuchi graphs that we next introduce.

\subsection{The Kikuchi Graph}
\label{sec:kikuchi-overview}

As in the prior works, a key technical idea is the use of Kikuchi graphs. 
In the even $q$ case, for $\ell \leq n$, the level-$\ell$ Kikuchi graph associated with an edge-colored hypergraph $\cH$ is an edge-colored graph defined as follows:
\begin{itemize}
    \item The vertex set is $\binom{[n]}{\ell}$, i.e., all $\ell$-sized subsets of $[n]$.
    \item For each hyperedge $C \in \cH$ with color $i$, we add edges $(S,T)$, denoted by $S \xleftrightarrow{C} T$, such that $S \oplus T = C$ and $|S \cap C| = |T \cap C| = q/2$, and color them with $i$.
\end{itemize}
The key to the utility of these graphs is an elementary but important connection~\cite{GuruswamiKM22} between cycles/walks in the Kikuchi graph and even covers in $\cH$. As a result, finding a weak rainbow even cover in $\cH$ reduces to finding a weak rainbow cycle in the edge-colored Kikuchi graph. For even $q$, this approach easily recovers the best known lower bounds for $q$-LDCs~\cite{GKST06,KdW04}.

For odd $q$, the challenge with this approach is that the Kikuchi graph above does not make sense (for $S \oplus T = C$ and $|S| = |T| = \ell$, $C$ must be of even size). To get around this issue, all prior works~\cite{GuruswamiKM22,HsiehKM23,AlrabiahGKM23,HKMMS24} use variants of Kikuchi graphs where each transition corresponds to a pair $(C,C')$ of hyperedges (instead of a single $C$ for the even case above).

Let us informally describe the Kikuchi graph we use (see \Cref{def:kikuchi} and the illustration \Cref{fig:KikuchiGraph}). For a parameter $t \in \{1,2,\dots,q-1\}$ (that will eventually be chosen carefully), our Kikuchi graph has transitions corresponding to pairs $(C,C')$ that intersect in $t$ vertices. In this case, note that $C \oplus C'$ is a set of size $2(q-t)$, which is even.
Our Kikuchi graph takes all such pairs in $\cH$ and includes edges $S \xleftrightarrow{C,C'} T$.\footnote{For the arguments that follow, we need to setup this graph so that $S$ and $T$ each intersect both $C,C'$ in equal, up to a slack of at most $1$ element, but we omit a discussion of this issue here.}

Since $\cH$ is properly hyperedge-colored, $C$ and $C'$ must have different colors $i \neq j$. Thus, each edge of the Kikuchi graph is associated with $2$ distinct colors.  We note that in \cite{AlrabiahGKM23}, the $t$ above is hardcoded to $1$; in our case we must carefully choose $t$ as a function of the hypergraph.

\parhead{Weak rainbow even covers by applying \Cref{lem:weak-rainbow-cycle} to Kikuchi graphs:}
Our Kikuchi graph naturally has edges that have a pair of colors. We say that a graph (where each edge can have multiple colors) is $\Delta$-properly edge-colored if, for each color $i$, every vertex is incident to at most $\Delta$ edges whose color set includes the color $i$.

We want to apply \Cref{lem:weak-rainbow-cycle} to the Kikuchi graph with $s=2$. For this, we need to argue that the Kikuchi graph contains a large subgraph that
\begin{enumerate}[(1)]
    \item has average degree $\ol{d}(K) \geq O(\log N)$,
    \item is $\ol{d}(K)/O(\log N)$-properly edge-colored --- that is, given any color $i$, the number of edges with a pair of colors including $i$ incident on any vertex is at most $\ol{d}(K)/O(\log N)$. 
\end{enumerate}

The average degree of the Kikuchi graph is easily described as a function of $\ell$ and $|\cH|$ (see \Cref{fact:bound-D}), and setting $\ell=n^{1-2/q}$ will guarantee $\ol{d}(K) \geq O(\log N)$. The sticking point is ensuring that the Kikuchi graph is $\ol{d}(K)/O(\log N)$-properly edge-colored.

\begin{remark}[Prior works] \label{rem:prior-works-overview}
    Ensuring the proper edge-coloring property (which is essentially equivalent to approximate regularity of the Kikuchi graphs) is in fact the key technical step in all applications of the Kikuchi matrix method so far~\cite{GuruswamiKM22,HsiehKM23,HKMMS24,AlrabiahGKM23,KothariM23,KothariM24}. In all these prior works, such a property was ensured when the underlying hypergraph satisfies a natural well-spread condition. This well-spread condition in the work on $3$-LDCs~\cite{AlrabiahGKM23} asks that in $\cH$, no pair of vertices appear in more than $O(\log n)$ hyperedges. The remaining part of the argument is then giving a different argument when $\cH$ does not satisfy this well-spread property. In all prior applications, this is done by constructing a hypergraph of lower uniformity. For the case of $3$-LDCs, this gives a graph and the $O(\log n)$ threshold for pairs being ``heavy'' above translates into its average degree being $\gg O(\log n)$. This average degree is sufficient to get cycles in such a graph that use some color exactly once. For the case of $5$-LDCs, however, the same reduction produces a hypergraph of uniformity $4$ with average degree $O(\log n)$. Such a hypergraph can manifestly only have even covers that use every color even number of times (for e.g., those coming from the Efremenko-Yekhanin matching vector codes~\cite{Yek08,Efremenko09}!).
\end{remark}

Our main idea to overcome this obstacle is finding a significantly more general condition --- \emph{approximate strong regularity} --- on hypergraphs such that the associated Kikuchi graphs still possess the $\Delta$-properly edge-colored property. Our more general condition can violate the well-spread property  of hypergraphs drastically, and pairs, triples, quadruples, etc.\ can appear in an arbitrarily large number of hyperedges.

\subsection{Approximate Strong Regularity}
\label{sec:regular-hypergraphs}
Given a hypergraph $\cH$, the co-degree of a set $Q$ of vertices, which we denote as $\codegree{\cH}{Q}$, is the number of hyperedges in $\cH$ that contain $Q$. Moreover, we denote $d_{\cH,t} \coloneqq \max_{|Q| =t} \codegree{\cH}{Q}$ for all $t\in [q]$.
We will omit the dependence on $\cH$ and write $d_t$ for simplicity.
Note that we have $d_1 \geq d_2 \geq \cdots \geq d_q$ and $d_1 \geq |\cH|/n$.

Our key technical piece shows the existence of a weakly rainbow even cover if $\cH$ satisfies a property called approximate strong regularity. This property demands that there be some $t \leq q$ such that 1) $\cH$ be partitioned into groups of roughly the same size $d_t$ such that hyperedges in each group all intersect in a fixed set of size $t$, and, 2) the co-degrees of any subset of $r$ vertices (for $1 \leq r\leq q$ is small \emph{relative to} $d_t$. This demand for relative as opposed to absolute upper bounds is a key departure from similar notions of regularity appearing in prior works (and their insufficiency is a key reason for their inapplicability in proving improved lower bounds for odd $q \geq 5$). 

The precise relative bounds below can look daunting. The choice is dictated by a key technical piece in our proof, where we show the approximate regularity of Kikuchi graphs that arise in our analysis. We note that arguing the approximate regularity of Kikuchi graphs is the centerpiece in all applications of the Kikuchi matrix method so far~\cite{GuruswamiKM22,HsiehKM23,HKMMS24,AlrabiahGKM23,KothariM23,Yankovitz24,KothariM24}). We invite the reader to ignore the quantitative requirements in the first reading of this overview --- we explain their origin in the next subsection (\Cref{sec:right-index-overview}).

\begin{definition}[Good index $t$]
\torestate{
\label{def:goodindex}
Let $d_1\geq d_2 \geq \cdots \geq d_q$. We say that an index $t$ is \emph{good} with respect to the tuple $(d_1, d_2, \ldots, d_q)$ if the following conditions hold:
\begin{enumerate}[(1)]
\item $d_r/d_{t}\leq n^{1-\frac{2r}{q}}$ for every $1 \leq r \leq \lceil \frac{q-t}{2} \rceil$, 
\item $d_r/d_{t} \leq n^{-\frac{2}{q}(r-t) + \frac{1}{q}(t-\1(t\text{ even}))}$ for every $t \leq r \leq \lfloor \frac{q+t}{2} \rfloor$,
\item If $t <q/2$, then, $d_{t} \geq d_1 n^{-\frac{2}{q}(t-1)}$, and, if $t>q/2$, then, $d_{t} \geq d_1 n^{-1+2/q}$. \label{item:d-t-lower-bound}
\end{enumerate}
}
\end{definition}

We now define $t$-approximate strong regularity of a $q$-uniform hypergraph $\cH$ that posits the above conditions are satisfied for some $1 \leq t \leq q$.  

\begin{definition}[Approximate Strong Regularity]
\torestate{
\label{def:approxstrongregdef}
For a given $t\in[q]$, a $q$-uniform hypergraph $\cH$ on $[n]$ is called $t$-approximately strongly regular if there exists a partitioning $\cH = \cH^{(t)}_1 \sqcup \cdots \sqcup \cH^{(t)}_{p_t}$ such that:
    \begin{enumerate}[(1)]
        \item For every $\theta\in[p_t]$, there is some $Q_\theta \in \binom{[n]}{t}$ such that all $C \in \cH^{(t)}_\theta$ contain $Q_\theta$.
        \item $d_{t}/2\leq \big|\cH^{(t)}_\theta\big|\leq d_{t}$ for all $\theta\in[p_t]$.
        \item $t$ is a good index 
        with respect to the sequence $d_1 \geq \cdots \geq d_q$ (recall that $d_r:= \max_{|Q| = r} d_{\cH,Q}$). \label{item:good-index}
    \end{enumerate}
}
\end{definition}

Let us interpret this definition.
First, the numbers $d_r$ give an upper bound on the co-degree of every size-$r$ subset of vertices in $\cH$.
Thus, the first $2$ conditions imply that the hypergraph can be partitioned into pieces, each containing a $(d_t/2)$-heavy subset of $t$ vertices.
Condition (\ref{item:good-index}) states that all other co-degrees are bounded relative to $d_t$, and moreover, item (\ref{item:d-t-lower-bound}) of \Cref{def:goodindex} gives a lower bound on $d_t$ in terms of $d_1$, which we know is at least $|\cH|/n$.

We can now state our weak hypergraph rainbow lemma for approximately strongly regular hypergraphs. 
\begin{lemma}[Weak Rainbow Bound for Approximately Strongly Regular Hypergraphs] \label{lem:even-cover-regular-hypergraph}
    Let $q\in \N$ be odd and $t\in [q]$.
    There is a universal constant $A$ depending only on $q$ such that the following holds:
    Let $\cH$ be a properly hyperedge-colored, $q$-uniform, $t$-approximately strongly regular hypergraph on $[n]$ with average degree $k \geq A n^{1-\frac{2}{q}} \log n$.

    Then, $\cH$ contains a weak rainbow even cover. 
\end{lemma}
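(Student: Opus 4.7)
The plan is to associate with $\cH$ a Kikuchi graph $K$ at an appropriately chosen level $\ell$, show that $K$ is dense enough and, after pruning a small fraction of atypical vertices, is $\ol d(K)/O(\log N)$-properly edge-colored with $s=2$ colors per edge, apply \Cref{lem:weak-rainbow-cycle}, and then translate the resulting closed walk back into a weak rainbow even cover in $\cH$.

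I would take $V(K) = \binom{[n]}{\ell}$ for $\ell$ roughly $n^{1-2/q}$ (the exact value tuned below), and for every ordered pair $(C,C')$ of hyperedges of $\cH$ with $|C\cap C'|=t$, add every edge $S \xleftrightarrow{C,C'} T$ with $S \oplus T = C \oplus C'$ and each of $|S\cap C|, |S\cap C'|, |T\cap C|, |T\cap C'|$ in $\{\lfloor (q-t)/2 \rfloor, \lceil (q-t)/2 \rceil\}$. Since $C \oplus C'$ has even size $2(q-t)$, this is consistent with $|S|=|T|=\ell$. I assign each $K$-edge the unordered pair of colors of its underlying $C, C'$; because each $\cH_i$ is a matching, these two colors are distinct, so every edge carries exactly $s=2$ colors. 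Using the partition $\cH = \cH^{(t)}_1 \sqcup \cdots \sqcup \cH^{(t)}_{p_t}$ from the $t$-approximate strong regularity hypothesis, the number of intersecting hyperedge pairs is $\Omega(|\cH|\cdot d_t)$, and a standard binomial count of extensions yields $\ol d(K) \geq A' \log N$ for any desired constant $A'$, provided the constant $A$ in the statement of the lemma is chosen large enough.

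The technical heart is to show that, after discarding a negligible fraction of vertices, $K$ is $\ol d(K)/(c \log N)$-properly edge-colored, for a small constant $c$ depending only on $q$. Fix $S$ and a color $i$; an incident $K$-edge with $i$ in its color set arises from a pair $(C,C')$ with $C \in \cH_i$, $|C\cap C'|=t$, and the prescribed intersection profile with $S$. Since $\cH_i$ is a matching, each candidate $C$ is pinned by its intersection with $S$. For each such $C$, the number of valid partners $C'$ is controlled by $\codegree{\cH}{Q}$ for various subsets $Q$ of size $r \in \{t, t+1, \ldots, \lfloor (q+t)/2 \rfloor\}$ (from the side of $C'$ overlapping $C$), together with an analogous contribution from subsets meeting $S$ outside of $C$ of size $r \in \{1,\ldots,\lceil (q-t)/2 \rceil\}$. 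The bounds in items (1) and (2) of \Cref{def:goodindex} are tailored so that, after multiplying by the appropriate binomial factor and averaging over $S$, each such contribution is at most $\ol d(K)/(c k \log N)$ per color, the factor of $k$ coming from the number of colors. A Markov argument then removes the bad $(S,i)$ pairs, and item (3) of \Cref{def:goodindex} lower-bounds $d_t$ just enough that the surviving subgraph retains average degree $\Omega(\ol d(K))$. Verifying that every regime in this case analysis is simultaneously dominated by the same threshold is where I expect the main difficulty to lie: the relative bounds in \Cref{def:goodindex} are not optional, but precisely what make all regimes balance.

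Once regularity is established, \Cref{lem:weak-rainbow-cycle} with $s=2$ produces a closed walk $S_0 \xleftrightarrow{C_1,C_1'} S_1 \xleftrightarrow{C_2,C_2'} \cdots \xleftrightarrow{C_m,C_m'} S_0$ of length $m \leq 2\log N$ in which some color $i^\star$ appears in the color set of exactly one edge. The resulting multiset $\{C_1,C_1',\ldots,C_m,C_m'\}$ of hyperedges of $\cH$ satisfies $\bigoplus_{j=1}^m (C_j \oplus C_j') = \emptyset$, so every vertex of $[n]$ is used an even number of times and we have an even cover of $\cH$. Since $i^\star$ appears in the color set of exactly one $K$-edge, and the two colors on each $K$-edge are precisely the distinct colors of its two constituent hyperedges, color $i^\star$ labels exactly one member of this multiset, giving the desired weak rainbow even cover.
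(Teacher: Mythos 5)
Your high-level plan — build the level-$\ell$ Kikuchi graph, show it is dense, prune to establish the $\ol d(K)/O(\log N)$-proper edge-coloring, and invoke \Cref{lem:weak-rainbow-cycle} with $s=2$ — is the same as the paper's, and your translation of the resulting closed walk back into a weak rainbow even cover at the end is correct. But you have explicitly deferred the ``technical heart,'' and the sketch you give of that step contains a structural error that would not survive careful execution.

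The gap is that your Markov argument is over the wrong distribution. You propose to fix a color $i$, average $\deg_i(S)$ over a uniform vertex $S$, and then remove bad $(S,i)$ pairs. Markov over $S$ would bound the \emph{fraction of bad $(S,i)$ pairs}, but what the pruning step actually needs is a bound on the \emph{fraction of edges removed}. A single bad $(S,i)$ pair can carry degree $\gg \Delta$ — indeed that is exactly why it is bad — so controlling the measure of bad pairs does not control the mass of edges they destroy. The paper instead fixes one pair $(C_1,C_2) \in \cH_\theta^{(t)}$ with colors $i,j$ and does Markov over the uniform distribution on the \emph{edges of the matching that $(C_1,C_2)$ contributes to $K$}: $\Pr_{(S,T)}[(S,T) \text{ removed}] \leq \Delta^{-1}\E_{(S,T)}[\text{incident edges with color $i$ or $j$}]$. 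This bounds the fraction of edges removed from each matching by $1/2$, hence the total. Crucially, the distribution over $S$ here is the biased one in which $S$ already contains the prescribed number of elements of $\widetilde C_1$ and $\widetilde C_2$; working in this conditional distribution is what makes the clustering obstruction of Section 2.4 visible and what drives all the conditions in \Cref{def:goodindex}.

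Two further issues. First, your claimed per-color threshold ``$\ol d(K)/(ck\log N)$, the factor of $k$ coming from the number of colors'' is not what is needed: \Cref{lem:weak-rainbow-cycle} with $s=2$ requires $\deg_i(S)\leq \ol d(K)/(40\log N)$, with no factor of $k$, and under the conditional distribution above the expected color-$i$ degree is genuinely of order $\ol d(K)/\log N$ and not $\ol d(K)/(k\log N)$. Second, the case analysis must distinguish two qualitatively different ways an incident edge can carry color $i$: either it is $(C_1,C_3)$ with $C_3\in\cH_\theta^{(t)}$, in which case you track $|C_2\cap C_3|=t+a$ and pay $d_{t+a}$; or it is $(C_3,C_4)$ with $C_3\neq C_1$ of color $i$ and $C_3,C_4\in\cH_{\theta'}^{(t)}$ for some $\theta'\neq\theta$, in which case $C_1\cap C_3=\emptyset$ and you track $|C_4\cap C_2|=s$ and pay $d_s$. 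These two cases line up with conditions (2) and (1) of \Cref{def:goodindex} respectively; your sketch conflates them into a single regime.
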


Before we prove \Cref{lem:even-cover-regular-hypergraph}, let us first explain the conditions of a good index in \Cref{def:goodindex}.
One can also see these conditions arise directly in the proof of \Cref{lem:even-cover-regular-hypergraph} in \Cref{sec:proof-of-weak-rainbow-even-cover}.
Later in \Cref{sec:decomposition-overview}, we will show that approximate strong regularity can be assumed without loss of generality by decomposing the given hypergraph (though losing a log factor).

\subsection{Avoiding clustering of edges in the Kikuchi graph}

\label{sec:right-index-overview}

In this section, we provide a high-level explanation (with a concrete example) showing why the conditions for a good index in \Cref{def:goodindex} imply that the Kikuchi graph satisfies the proper coloring condition required in \Cref{lem:weak-rainbow-cycle}.

The Kikuchi graph $K$ at level $\ell$ we described for $\cH$ is obtained by picking an integer $t \leq q$ and then constructing edges corresponding to pairs $(C,C')$ such that $C,C'$ intersect in a set of size $t$. The choice of $t$ needs to be carefully done to ensure the $\ol{d}(K)/O(\log N)$-proper coloring property of the resulting Kikuchi graph.
Recall that to apply \Cref{lem:weak-rainbow-cycle}, we need 1) the average degree of the Kikuchi graph to be $\ol{d}(K)\geq O(\log N)$, and 2) the graph is $\sim \ol{d}(K)/\log N$-properly edge-colored.
Let's set $t=1$ for the sake of illustration and first compute the average degree $\ol{d}(K)$ of the Kikuchi graph. Since $d_1 \sim k \sim n^{1-2/q}$ (each matching is near perfect), the number of pairs $(C,C')$ in $\cH$ that intersect in one vertex is $\sim nk^2$. Each such pair then has $2(q-1)$ non-overlapping vertices. A random vertex $S$ of the Kikuchi graph behaves essentially like a $\ell/n$-biased random set, and thus, a given $(C,C')$ generates an edge incident on it with probability $(\ell/n)^{q-1}$. Thus, $\ol{d}(K) \sim (\ell/n)^{q-1} nk^2 \geq \log N \sim \ell \log n$ if $\ell = n^{1-2/q}$ and $k \geq \tilde{O}(n^{1-2/q})$.

Now fix a color $i$. We must argue that the number of pairs $(C,C')$ that induce an edge on a vertex $S$ of the Kikuchi graph $K$ is\footnote{We note that when $t>1$, the degree of the Kikuchi graph will in general be $\gg O(\log N)$ and depend on $d_t$. In that case, the number of edges incident on a typical $S$ with a fixed color $i$ that we can tolerate will scale relative to $d_t$. } at most $\ol{d}(K)/O(\log N) \sim \polylog n$. It suffices to argue that this holds for almost all vertices since we can then delete a negligible fraction of edges to ensure it for all without noticeably hurting the average degree. 

We next show a concrete scenario where the above property fails  (here, due to the co-degree $d_3$ of triples being too large). Let us set $q=5$ for this illustration since the issue already pops up for this first interesting case. For every $C \in \cH_i$, we can have $\sim k$ different $C'_1, C'_2, \ldots,$ hyperedges in $\cH$ that intersect $C$ in exactly one vertex, say $u$. All such pairs $(C,C'_j)$ contribute edges in our Kikuchi graph with a color pair containing $i$.  Suppose that there are $d_3$ different $C'_j$s all containing a fixed pair $\{v,w\}$. In this case, $d_3$ different $C'_j$s contain the same triple $\{u,v,w\}$.
See \Cref{fig:clustering_diagram} for an illustration.

\begin{figure}[ht]
    \centering
    \includegraphics[width=0.85\linewidth]{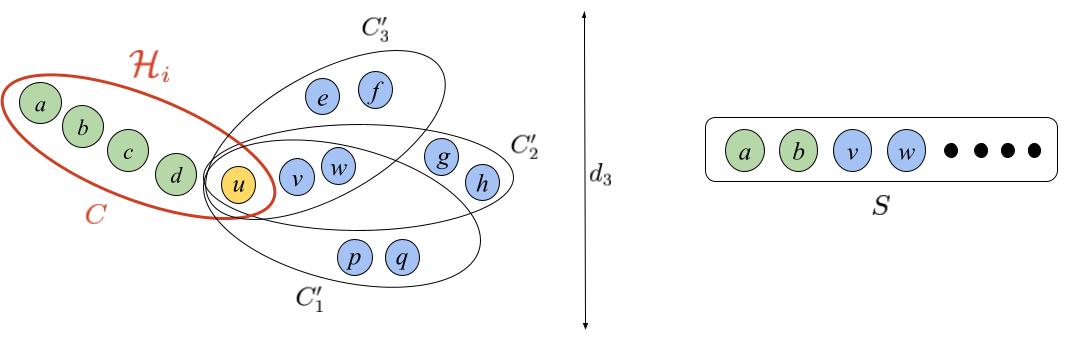}
    \caption{This example shows, for $q = 5$, a scenario where sets $S$ as shown have $\geq d_3$ edges where one of the colors is $i$.}
    \label{fig:clustering_diagram}
\end{figure}

A pair $(C,C')$ contributes an edge on $S$ if $S$ intersects $C$ and $C'$ in $(q-1)/2=2$ elements each for $q=5$. $S$ already contains $2$ vertices from $C$ and by containing the pair $\{v,w\}$ two vertices from \emph{every one of the $d_3$ different} $C'_j$s. Thus, $S$ has an edge corresponding to \emph{every one of the} $(C,C'_j)$ where $C'_j$ contains $\{v,w\}$ and thus  at least $d_3$ edges with color $i$. If $d_3 \gg \polylog n$ (it can be $n^{\Omega(1)}$ in general!), we have failed to satisfy the proper coloring requirement.

One can generalize the above bad example into a setting where setting $t=1$ in our Kikuchi graph simply fails to satisfy the proper coloring condition required in~\Cref{lem:weak-rainbow-cycle} for a constant fraction of vertices. More generally, for any given $t$, the above analysis necessitates that the ratio $d_r/d_t$ be appropriately upper bounded as a function of $r$ and $t$.
Such restrictions give rise to the specific conditions in \Cref{def:goodindex}, where each condition corresponds to a bad scenario analogous to the example above. It turns out that we can argue (done in full in the next subsection!) that the list of bad scenarios the above conditions rule out is exhaustive. In other words, if every one of the conditions above is met, then, we can establish a $\ol{d}/O(\log N)$ upper bound on the number of edges of a fixed color $i$ (out of a pair of colors) incident on almost all vertices of the Kikuchi graph.

\subsection{Proof of \texorpdfstring{\Cref{lem:even-cover-regular-hypergraph}}{Lemma~\ref{lem:even-cover-regular-hypergraph}}}
\label{sec:proof-of-weak-rainbow-even-cover}

Let us now give a full proof of \Cref{lem:even-cover-regular-hypergraph}.
\begin{proof}[Proof of \Cref{lem:even-cover-regular-hypergraph}]
    Recall that $d_1 \geq d_2 \geq \cdots \geq d_q$ denote the co-degrees of $\cH$, and $t$ is a good index as in \Cref{def:goodindex}.
    We first claim that
    \begin{align*}
        d_t \geq \max\braces*{n^{1-\frac{2t}{q}},\ 1} \cdot \Omega(\log n) \mper
        \numberthis \label{eq:dt-lower-bound}
    \end{align*}
    To see this, note that by item~(\ref{item:d-t-lower-bound}) of \Cref{def:goodindex}, we have $d_t \geq d_1 \cdot \max\{n^{-\frac{2}{q}(t-1)},\ n^{-1+\frac{2}{q}}\}$.
    Moreover, $d_1$ is at least the average degree, which is assumed to be $\geq \Omega(n^{1-2/q} \log n)$.
    This establishes \Cref{eq:dt-lower-bound}.

    Next, let $\ell \coloneqq n^{1-\frac{2}{q}}$,
    and consider the level-$\ell$ Kikuchi graph $K$ as described in \Cref{sec:kikuchi-overview} by canceling $t$-tuples.
    By $t$-approximate strong regularity (\Cref{def:approxstrongregdef}), $\cH$ has a partition $\{\cH_{\theta}^{(t)}\}_{\theta\in[p_t]}$, each of size at least $d_t/2$.
    Thus, we can form $\sum_{\theta=1}^{p_t} \binom{|\cH_{\theta}^{(t)}|}{2} \geq \Omega(d_t) \sum_{\theta=1}^{p_t} \big|\cH_{\theta}^{(t)}\big| \geq \Omega(knd_t)$ pairs of hyperedges.
    Thus, the average degree of the Kikuchi graph $K$ is
    \begin{align*}
        \ol{d}(K) \geq \Omega\parens*{\frac{\ell}{n}}^{q-t} \cdot kn d_t
        \geq \Omega(d_t) \cdot n^{\frac{2}{q}(t-1)} \log n \mper
    \end{align*}
    In particular, by \Cref{eq:dt-lower-bound}, we have $\ol{d}(K) \geq \Omega(n^{1-\frac{2}{q}} \log^2 n) \geq \omega(1) \cdot \log |V(K)|$, since $|V(K)| = \binom{2n}{\ell} \leq (2n)^{\ell}$.
    This establishes the average degree lower bound required in \Cref{lem:weak-rainbow-cycle}.

    Next, define
    \begin{align*}
        \Delta \coloneqq \frac{\ol{d}(K)}{100\ell \log n} \geq A' d_t \cdot n^{\frac{2t}{q}-1} \mcom
        \numberthis \label{eq:Delta}
    \end{align*}
    where $A'$ is some large constant depending on $q$.
    It suffices to show that there is a subgraph of $K$ which is $\Delta$-properly edge-colored and has average degree $\ol{d}(K)/2$.
    This would complete the proof since \Cref{lem:weak-rainbow-cycle} guarantees the existence of a cycle, of length at most $2\log |V(K)| = O(\ell \log n)$, that uses a color exactly once.

    Fix any $C_1 \neq C_2 \in \cH_{\theta}^{(t)}$ of colors $i,j$ respectively.
    We know that $i \neq j$ since $\cH$ is properly edge-colored and $|C_1 \cap C_2| \geq t > 0$.
    Recall that the pair $(C_1,C_2)$ forms a matching in $K$, denoted by $\{(S,T): S \xleftrightarrow{C_1,C_2} T\}$.
    We will remove an edge $(S,T)$ in this matching if $S$ or $T$ is incident to more than $\Delta$ edges containing color $i$ or $j$.
    Our goal is to prove that this process removes at most $1/2$ fraction of the edges.

    Consider a random edge $S \xleftrightarrow{C_1,C_2} T$ among the matching edges, which can be sampled by first choosing the symmetric difference $S \oplus T$ from $C_1 \setminus Q_{\theta}$ and $C_2 \setminus Q_{\theta}$, and then sampling $\ell - \floor{\frac{q-t}{2}}$ vertices for $S \cap T$.
    By Markov's inequality,
    \begin{align*}
        \Pr_{(S,T)} \bracks*{(S,T) \text{ is removed}}
        \leq \frac{1}{\Delta} \cdot \E_{(S,T)}\bracks*{\text{number of edges incident to $(S,T)$ with color $i$ or $j$}} \mper
    \end{align*}
    We will upper bound the right-hand side via the co-degree guarantees from \Cref{def:goodindex}.
    There are two possible ways that an incident edge can have color $i$ (same analysis for $j$).
    \begin{itemize}
        \item $(S,T)$ is incident to an edge formed by $C_1, C_3$.
        In this case, $C_3$ must also be in $\cH_{\theta}^{(t)}$ and thus must have $|C_2 \cap C_3| \geq |Q_{\theta}| = t$.
        Suppose $|C_2 \cap C_3| = t+a$ for some $a \geq 0$.
        For a random $(S,T)$ to be incident to an edge formed by $(C_1,C_3)$, one must sample at least $\max\{\floor{\frac{q-t}{2}}-a, 0\}$ elements from $C_3 \setminus C_2$ to complete $S \cap T$.
        This probability is at most $O\left(\frac{\ell}{n}\right)^{\max\{ \floor{\frac{q-t}{2}}-a, 0\}}$.
        On the other hand, there are at most $O(d_{t+a})$ such $C_3$ with $|C_2 \cap C_3| = t+a$.
        Thus, using \Cref{eq:Delta}, the contribution from this case is at most
        \begin{align*}
            \frac{1}{\Delta} \sum_{a=0}^{q-t} O(d_{t+a}) \cdot O\parens*{\frac{\ell}{n}}^{\max\{ \floor{\frac{q-t}{2}}-a, 0\}}
            &\leq \frac{O(1)}{A'} \sum_{a=0}^{\floor{\frac{q-t}{2}}} \frac{d_{t+a}}{d_t} \cdot n^{1-\frac{2t}{q}} \cdot n^{-\frac{2}{q} (\floor{\frac{q-t}{2}}-a)} \\
            &\leq \frac{O(1)}{A'} \sum_{a=0}^{\floor{\frac{q-t}{2}}} \frac{d_{t+a}}{d_t} \cdot n^{\frac{2a}{q} - \frac{1}{q}(t - \1(t\text{ even}))} \\
            &\leq O(1/A') \mper
        \end{align*}
        Here, we use that $d_s$ is decreasing with $s$ to handle the terms with $a > \floor{\frac{q-t}{2}}$, and the assumption $d_{t+a} \leq d_t \cdot n^{-\frac{2a}{q} + \frac{1}{q}(t-\1(t \text{ even}))}$.

        \item $(S,T)$ is incident to an edge formed by $(C_3, C_4)$, where $C_3 \neq C_1$ and $C_3$ has color $i$.
        Since $\cH$ is properly edge-colored, we have $C_1 \cap C_3 = \varnothing$ and $C_3, C_4 \in \cH_{\theta'}^{(t)}$ for some $\theta' \neq \theta$.
        Suppose $|C_4 \cap C_2| = s$.
        Then, to form $S\cap T$, one must sample $\floor{\frac{q-t}{2}}$ (resp.\ $\ceil{\frac{q-t}{2}}$) elements from $C_3$ and $\max\{\ceil{\frac{q-t}{2}}-s, 0\}$ (resp.\ $\max\{\floor{\frac{q-t}{2}}-s, 0\}$) elements from $C_4 \setminus C_2$, hence at least $\max\{q-t-s, \floor{\frac{q-t}{2}}\}$.
        The number of such pairs $(C_3, C_4)$ can be upper bounded as follows: (1) if $s=0$, then there are $n$ choices for $C_3$ with color $i$, which belongs to some $\cH_{\theta'}^{(t)}$, and $|\cH_{\theta'}^{(t)}| \leq d_t$ choices for $C_4$,
        (2) if $s \geq 1$, then there are $O(d_s)$ choices for $C_4$, which belongs to some $\cH_{\theta'}^{(t)}$, and at most $1$ choice for $C_3 \in \cH_{\theta'}^{(t)}$ of color $i$.
        Thus, using \Cref{eq:Delta}, the contribution from this case is at most
        \begin{align*}
            & \frac{1}{\Delta} \parens*{nd_t \cdot O\parens*{\frac{\ell}{n}}^{q-t} + \sum_{s=1}^{q} O(d_{s}) \cdot O\parens*{\frac{\ell}{n}}^{\max\{q-t-s, \floor{\frac{q-t}{2}}\}} } \\
            & \leq \frac{O(1)}{A'} \cdot n^{1-\frac{2t}{q}}   \parens*{n \cdot n^{-\frac{2}{q}(q-t)} + \sum_{s=1}^{\ceil{\frac{q-t}{2}}} \frac{d_s}{d_t} \cdot n^{-\frac{2}{q}(q-t-s)} } \\
            &\leq O(1/A')  \mper
        \end{align*}
        Again, we use that $d_s$ is decreasing with $s$ to handle the terms with $s > \ceil{\frac{q-t}{2}}$, and the assumption $d_s \leq d_t \cdot n^{1-\frac{2s}{q}}$.
    \end{itemize}
    Therefore, for large enough $A'$ (depending only on $q$), we have that $\Pr_{(S,T)}[(S,T) \text{ is removed}] < 1/2$.
    Thus, the edge deletion process will remove at most $1/2$ of the edges.
    This completes the proof.
\end{proof}

\subsection{Regularity Decomposition}
\label{sec:decomposition-overview}

Our notion of approximate strong regularity (\Cref{def:approxstrongregdef}) requires a good index $t$ as in \Cref{def:goodindex}.
For all we know, such an index may not even exist.

\parhead{There is always a good $t$:}
One of our main technical insights is that given any hypergraph, \emph{for every possible sequence of co-degrees} $d_1 \geq d_2 \geq \cdots \geq d_q$, there is always an index $t \in [q]$ that satisfies all the conditions in \Cref{def:goodindex}!
The proof is elementary and short. We argue that either a $t_0<q/2$ that extremizes a natural measure is good, or, if not, then an extremal $d_t$ for $t>q/2$ that violates the requirements for $t_0$ being a good index works. See \Cref{lem:good-index-exists} for the formal statement and \Cref{sec:numberslemma} for the proof.
We remark that such a proof is possible precisely because all inequalities in \Cref{def:goodindex} are relative.

\medskip
\noindent
Given that a good index always exists, we next describe how to decompose the hypergraph into $\sim \log n$ pieces, where each piece is $t$-approximately strongly regular for some $t\in [q]$.
We refer readers to \Cref{decompsec} for more details.

\parhead{Decomposition in prior works:}
All recent works on even covers and LDC lower bounds~\cite{GuruswamiKM22,HsiehKM23, AlrabiahGKM23,HKMMS24} decompose the hypergraph into $\cH^{(1)},\dots,\cH^{(q)}$ as follows: Carefully choose thresholds $\tau_s$ for all $s\in [q]$.
Then, starting with $t=q$, (1) if there exists a $Q \in \binom{[n]}{t}$ such that $|\{C \in \cH: Q \subseteq C\}| \geq \tau_t$, then add these hyperedges to $\cH^{(t)}$ and remove them from $\cH$,
(2) if no such $Q$ exists, then decrement $t$ and repeat unless $t=1$.

By this decomposition, we have the guarantee that for each $\cH^{(t)}$, the co-degrees satisfy $d_s < \tau_s$ for all $s > t$, otherwise these hyperedges would have been put in $\cH^{(s)}$.
Unfortunately, this does not suffice for us, as \Cref{def:goodindex} requires bounds on co-degrees for all $s$.

\parhead{Approximate Strong Regularity decomposition:}
Rather than decomposing to satisfy some fixed thresholds on the co-degrees, we do the following:
\begin{enumerate}[(1)]
    \item Set the ``current'' hypergraph to be $\cH$.
    \item Apply \cref{lem:good-index-exists} on its co-degree sequence $d_1 \geq \cdots \geq d_q$ to obtain a good index $t \in [q]$.
    \item Greedily extract hyperedges containing ``heavy'' $t$-tuples from the current hypergraph, i.e. take $Q \in \binom{[n]}{t}$ that maximizes $\big|\cH^{\mathrm{curr}}_{|Q}\big| = |\{C \in \cH': Q \subseteq C\}|$, and remove $\cH^{\mathrm{curr}}_{|Q}$ from $\cH^{\mathrm{curr}}$.
    \item Repeat step (3) until the $d_t$-value in $\cH^{\mathrm{curr}}$ falls below half of its original value (in the parent hypergraph).
    \item At this point, all the $\cH^{\mathrm{curr}}_{|Q}$s which were removed form a $t$-approximately strongly regular hypergraph. We designate the hypergraph remaining as the next ``current'' hypergraph, and repeat the above process.
\end{enumerate}

Note that in the above process, the value of $d_t$ in the ``current'' graph goes down by a factor of $2$ for some $t\in[q]$ for every iteration, and thus the above process can only run for $\sim q \log_2|\cH|$ steps (if the process goes on for longer, then some $d_t$ gets halved $\geq  q \log_2|\cH|/q \geq \log_2 |\cH|$ times).

Therefore, this decomposition produces $O(q\log n)$ subhypergraphs, each of which is approximately strongly regular, and one of them contains at least $\Omega(\frac{1}{q\log n})$ fraction of $\cH$.

\begin{claim} \label{lem:decomp-overview}
    There exists a subhypergraph $\cH^{\pi} \subseteq \cH$ with at least $\Omega\left(\frac{|\cH|}{q\log n}\right)$ hyperedges and is $t$-approximately strongly regular for some $t\in [q]$.
\end{claim}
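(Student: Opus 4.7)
My plan is to formalize the iterative greedy scheme already sketched in \Cref{sec:decomposition-overview} and combine it with the existence of a good index (\Cref{lem:good-index-exists}, already assumed in the excerpt) plus a pigeonhole argument on the number of iterations.

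The overall procedure is iterative. I maintain a ``current'' hypergraph $\cH^{\mathrm{curr}}$, initialized to $\cH$. At each outer iteration, I inspect the co-degree sequence $d_1 \geq d_2 \geq \cdots \geq d_q$ of $\cH^{\mathrm{curr}}$ and apply \Cref{lem:good-index-exists} to obtain a good index $t\in[q]$ satisfying the three inequalities of \Cref{def:goodindex}. I record the value $d_t^{\mathrm{start}} := d_{\cH^{\mathrm{curr}},t}$. Then, in an inner loop, I greedily peel off ``heavy $t$-stars'': pick $Q\in\binom{[n]}{t}$ maximizing $|\cH^{\mathrm{curr}}_{|Q}| = |\{C \in \cH^{\mathrm{curr}} : Q \subseteq C\}|$, set this block aside as a new piece, and remove its hyperedges from $\cH^{\mathrm{curr}}$. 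I continue peeling until $d_{\cH^{\mathrm{curr}},t}$ has dropped to at most $d_t^{\mathrm{start}}/2$, at which point the pieces extracted during this inner loop are collected into a single subhypergraph $\cH^{(t)}$, and the outer loop restarts on the reduced $\cH^{\mathrm{curr}}$.

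The main step is checking that the subhypergraph $\cH^{(t)}$ produced by one outer iteration is indeed $t$-approximately strongly regular in the sense of \Cref{def:approxstrongregdef}. Condition~(1) is immediate since each block equals $\cH^{\mathrm{curr}}_{|Q_\theta}$ for some $Q_\theta\in\binom{[n]}{t}$. Condition~(2) is where the halving threshold is used: throughout the inner loop the greedy maximum co-degree at level $t$ lies in the window $[d_t^{\mathrm{start}}/2,\, d_t^{\mathrm{start}}]$, so each extracted block has between $d_t^{\mathrm{start}}/2$ and $d_t^{\mathrm{start}}$ edges; taking $d_t := d_t^{\mathrm{start}}$ gives the required sandwich. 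For Condition~(3), note that removing hyperedges from a hypergraph can only decrease co-degrees, so the co-degree sequence of $\cH^{(t)}$ is dominated by that of the $\cH^{\mathrm{curr}}$ at the moment the iteration began. Since $t$ was chosen to be a good index with respect to that sequence, all of the ratio bounds in \Cref{def:goodindex} transfer to $\cH^{(t)}$ (with its own $d_t$ equal to $d_t^{\mathrm{start}}$ by construction).

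The termination and counting argument controls the number of outer iterations. In each outer iteration, the value of $d_t$ for some $t\in[q]$ is halved relative to the previous ``current'' hypergraph. Since every $d_s$ for $s\in[q]$ is bounded above by $|\cH|$ and below by $1$ before it disappears, each particular index $s$ can be halved at most $\log_2 |\cH| = O(\log n)$ times during the whole process. Summing over the $q$ possible indices, the total number of outer iterations, hence the total number of pieces $\cH^{(t)}$ produced, is at most $O(q\log n)$.

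Finally, since the pieces $\{\cH^{(t)}\}$ partition all of $\cH$ (every hyperedge of $\cH$ is removed during some outer iteration, and the current hypergraph eventually becomes empty), by pigeonhole there exists at least one piece $\cH^\pi$ with
\[
|\cH^\pi| \;\geq\; \Omega\!\left(\frac{|\cH|}{q\log n}\right),
\]
and by the argument above this piece is $t$-approximately strongly regular for its corresponding index $t\in[q]$. The only conceptually subtle point is Condition~(3) of \Cref{def:approxstrongregdef}: one must notice that good-index conditions are \emph{monotone under removal of edges} (because they are upper bounds on ratios $d_r/d_t$ where the chosen $d_t$ equals the initial value), which is exactly why the greedy peeling preserves approximate strong regularity of the accumulated piece without any further adjustment.
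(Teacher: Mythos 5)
Your argument is correct and follows exactly the paper's route: it is the greedy halving scheme formalized in Algorithms 5.3 and 5.4, the monotonicity observation $d'_t=d_t$, $d'_r\le d_r$ of Proposition 5.5 to transfer the good-index conditions to each peeled piece, and the $q\lceil\log_2|\cH|\rceil$-bound of Proposition 5.6 via the halving/pigeonhole count, followed by a final pigeonhole to extract a large piece. The only cosmetic difference is that you apply the same monotonicity reasoning to all pieces including the last (which is in fact valid), whereas the paper states Proposition 5.5 only for non-final pieces.
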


See \Cref{lem:new_bucketing_arpon_version} and its proof in \Cref{decompsec} for a more general result that we need for the non-linear case.
For linear LDCs, the largest piece in the decomposition suffices for our analysis.

Now, \Cref{thm:linear-LDCs} is a simple corollary of \Cref{lem:even-cover-regular-hypergraph,lem:decomp-overview}.
\begin{proof}[Proof of \Cref{thm:linear-LDCs}]
    Towards a contradiction, let the message length satisfy $k \geq \Omega\big(\delta^{-1} n^{1-\frac{2}{q}} \log^{2} n\big)$.
    In the underlying hypergraph $\cH$, each of the $k$ matchings has size $\Omega(\delta n)$.
    Let $\cH^{\pi}$ be the subhypergraph of $\cH$ from \Cref{lem:decomp-overview}.
    Then, $\cH^{\pi}$ has average degree $\Omega\big(n^{1-\frac{2}{q}} \log n\big)$, and is $t$-approximately strongly regular for some $t\in[q]$, thus we can apply \Cref{lem:even-cover-regular-hypergraph} and conclude that $\cH^{\pi}$ (hence also $\cH$) must contain a weak rainbow even cover.
    This is a contradiction via \Cref{fact:even-cover}.
\end{proof}

\subsection{Non-linear LDC Lower Bounds}
\label{sec:nonlinear-overview}

To prove lower bounds for non-linear $q$-LDCs, we depart from the combinatorial lens, since merely showing the existence of a weak rainbow even cover does not suffice.
Instead, we use the connection to CSP refutation discovered by \cite{AlrabiahGKM23}:
Given a $q$-query normally decodable code (\Cref{def:normalldc}) $E: \pmo^k \to \pmo^n$ associated with $q$-uniform matchings $\cH_1,\cH_2,\dots,\cH_k$, for each message $b\in \pmo^k$, we can construct a $q$-XOR instance $\Psi_b$ over $x\in \pmo^n$ with clauses $x_C = b_i$ for each $i\in [k]$ and $C \in \cH_i$.
Here, we denote $x_C = \prod_{u\in C} x_u$.

\Cref{def:normalldc} can be interpreted as that every clause $C$ in $\Psi_b$ is satisfied by $x = E(b)$ for ``most'' messages $b \in \pmo^k$.
This implies that the value of $\Psi_b$ --- the maximum number of satisfied constraints among $x$ --- must be large for ``most'' $b\in \pmo^k$.
Thus, to rule out the existence of a $q$-LDC (with $k$ too large), it suffices to prove that for a random $b \in \pmo^k$, $\Psi_b$ is \emph{highly unsatisfiable} --- no assignment satisfies $\frac{1}{2} + \Omega(1)$ fraction of the clauses.
This is closely related to semirandom CSP refutation since both problems involve proving unsatisfiability of instances where the underlying hypergraph structure is arbitrary but the signs are random.

We write $\Psi_b$ as a degree-$q$ polynomial in $x = (x_1,\dots,x_n)$: $\Psi_b(x) = \sum_{i=1}^k \sum_{C \in \cH_i} b_i x_C$.
Denoting $m = |\cH| = \sum_{i=1}^k |\cH_i|$, we have $\Psi_b(x) \in [-m,m]$ for all $x\in \pmo^n$.
Denote $\val(\Psi_b) = \max_{x\in \pmo^n} \Psi_b(x)$.
Note that $\val(\Psi_b) \geq 0$ since $\E_x[\Psi_b(x)] = 0$ under a uniformly random $x$.
A $q$-LDC must satisfy $\E_b[\val(\Psi_b)] \geq \Omega(m)$ (see \Cref{obs:lower-bound}); thus we would like to prove that $\E_b[\val(\Psi_b)] \leq o(m)$ for any arbitrary $\cH$ when $k \gg n^{1-2/q}$.

The key connection to Kikuchi graphs is as follows.
For even $q$, $\Psi_b(x)$ is equivalent to the quadratic form of the \emph{signed} Kikuchi matrix $K_b \coloneqq \sum_{i=1}^k b_i K_i$, where $K_i$ is the unsigned Kikuchi matrix of $\cH_i$.
More specifically, $\Psi_b(x) = c\cdot z^\top K_b z$, where $z \in \pmo^{N}$ is some vector of monomials of $x$ and $c>0$ is some scaling.
Then, since $\|z\|_2^2 = N$ for any $x$, it suffices to prove an upper bound on $\|K_b\|_2$.
In particular, we can upper bound $\E_b[\|K_b\|_2] = \E_b\left[ \norm{\sum_{i=1}^k b_i K_i}_2\right]$ using the well-known Matrix Khintchine inequality (\Cref{lem:mk}).

For odd $q$, however, we cannot write $\Psi_b(x)$ directly as a quadratic form, so we need the \emph{Cauchy-Schwarz trick} which we describe next.

\parhead{Cauchy-Schwarz trick:}
Let us first consider a subhypergraph $\cH'$ such that every $C \in \cH'$ contains some tuple $Q \in \binom{[n]}{t}$, where $t \in [q]$.
Let $\Psi_b^{\cH'}$ be the sub-instance of $\Psi_b$ corresponding to $\cH'$.
With slight abuse of notation, for $C \in \cH_i$, we denote $b_C = b_i$.
Then,
\begin{align*}
    \Psi_b^{\cH'}(x)^2 = \parens*{\sum_{C \in \cH'} b_C x_C}^2
    = |\cH'| + \sum_{C \neq C' \in \cH'} b_C b_{C'} x_{C  \oplus C'} \mper
\end{align*}
Notice that $|C \oplus C'| = 2(q-t)$ when $|C \cap C'| = t$.
Recall that the Kikuchi graph for odd $q$ (\Cref{def:kikuchi}) is formed precisely by taking pairs of hyperedges $C,C'$ with intersection $t$ and canceling them.
It is not hard to show that the second term in the above can be expressed as a quadratic form of the signed Kikuchi matrix (see \Cref{lem:upper-bound-via-infty-to-1-norm}), though each edge $S \xleftrightarrow{C,C'} T$ now gets a correlated sign $b_C b_{C'}$.
Note also that since $C \cap C' \neq \varnothing$ and $\cH_1,\dots,\cH_k$ are hypergraph matchings, $C$ and $C'$ must belong to different $\cH_i$'s and thus $b_C$, $b_{C'}$ are different random variables.

\parhead{Pruning heavy vertices:}
In the discussions above, we ignored an important detail: $\E_b[\|K_b\|_2]$ can be large due to some large degree vertices in the Kikuchi graph, i.e., rows/columns in the matrix with large $\ell_1$ norm.
On the other hand, to upper bound $z^\top K_b z$ for $z\in \pmo^N$, it suffices to upper bound the \emph{infinity-to-$1$ norm} defined as $\|K_b\|_{\infty\to1} \coloneqq \max_{u,v\in \pmo^N} u^\top K_b v$.
Then, letting $\wt{K}_b$ be the matrix obtained by removing heavy rows and columns, we can bound $\|K_b\|_{\infty\to1} \leq \|K_b - \wt{K}_b\|_{\infty\to1} + \|\wt{K}_b\|_{\infty\to1}$.\footnote{This is the row pruning step in \cite{GuruswamiKM22}.}
We can thus bound $\E_b \|\wt{K}_b\|_{\infty\to1}$ by its spectral norm using the Matrix Khintchine Inequality (\cref{lem:mk}).

On the other hand, to bound $\|K_b - \wt{K}_b\|_{\infty\to1}$, we need to upper bound the contribution from those pruned rows and columns.
Consider a random $S \in \binom{[n]}{\ell}$ (i.e., a random vertex in the Kikuchi graph), and let $s\in \zo^n$ be the indicator vector of $S$.
For the purpose of this overview, we think of $s$ as a vector with i.i.d.\ Bernoulli variables with bias $\ell/n$.
Then, the random variable $\deg(S)$ can be expressed as a polynomial in $s$ of degree at most $q$.
To bound the tail probabilities of $\deg(S)$, we now invoke a standard polynomial concentration inequality (\Cref{fact:ss}) by Schudy and Sviridenko~\cite{SS12}, which improves upon the classical work of \cite{KV00}.

Crucially, \Cref{fact:ss} requires upper bounds on the \emph{partial derivatives} of the polynomial.
It turns out that these derivatives can be upper bounded by the co-degrees of the hypergraph $\cH$ (see \Cref{lem:derivativebounds})!
Moreover, the co-degree bounds required coincide with those stated in \Cref{lem:even-cover-regular-hypergraph}.

Therefore, our strategy is to use the decomposition described in \Cref{sec:decomposition-overview} on the $q$-XOR instance $\Psi_b$, which gives us sub-instances whose underlying hypergraphs are $t$-approximately strongly regular.
Then, we obtain upper bounds on each sub-instance by using the Cauchy-Schwarz trick (with the good index $t$) and the infinity-to-1 norm of the corresponding Kikuchi matrix, thus finishing the proof.
This is articulated in \Cref{subsec:kikuchireg}.

\section{Preliminaries}

\label{sec:Setting}

For a $N \times N$ matrix $A$, we define $\Norm{A}_2 \coloneqq \max_{u \neq 0} \Norm{Au}_2/\Norm{u}_2$ to be its spectral norm and $\Norm{A}_{\infty \rightarrow 1} \coloneqq \max_{u,v \in \on^N} u^{\top} A v$ to be the $\infty \rightarrow 1$ norm. 

\parhead{Hypergraphs} 
In this work, we will work with $q$-uniform hypergraphs $\cH$ over $[n]$. Unless otherwise stated, our hypergraphs are allowed to have parallel or ``multi''-edges.
We write $|\cH|$ to denote the number of hyperedges in $\cH$.

We first recall the standard notion of \emph{co-degrees} in a hypergraph.

\begin{definition}[Co-degrees] \label{def:co-degrees}
  Let $\cH$ be a hypergraph on $[n]$. For any $Q\subseteq[n]$, let $\cH_{|Q}:= \{C\in\cH:Q\subseteq C\}$ be the subset of hyperedges containing $Q$. We define the co-degree of $Q$ in $\cH$ to be $d_{\cH, Q} = \big|\cH_{|Q}\big|$. We let $d_{\cH, t} \coloneqq \max_{Q\in\binom{[n]}{t}}d_{\cH, Q}$ be the maximum possible co-degree of any subset of size $t$.
  We'll refer to $d_{\cH, Q}, d_{\cH, t}$ simply as $d_Q, d_t$ respectively when $\cH$ is clear from the context.
\end{definition}

Observe that $d_1 \geq d_2 \geq \cdots\geq d_q$ and $d_1 \geq |\cH|/n$.

A hypergraph $\cH$ is a matching if all its hyperedges are pairwise disjoint. A proper edge-coloring of a hypergraph $\cH$ is an assignment of colors to the hyperedges of $\cH$ such that all hyperedges incident on any vertex get distinct colors.

\subsection{Locally Decodable Codes}

\begin{definition}[Locally Decodable Code (LDC)]
    A code $E:\pmo^k \to \pmo$ is $(q, \delta, \varepsilon)$-locally decodable if there exists a randomized decoding algorithm $\mathsf{Dec}(\cdot)$ such that $\mathsf{Dec}(\cdot)$ is given oracle access to $y\in\{0, 1\}^n$, takes an $i\in[k]$ as an input, and satisfies the following properties:
    \begin{enumerate}
        \item $\mathsf{Dec}(\cdot)$ makes $\leq q$ queries to the string $y$.
        \item For all $b\in\{0, 1\}^k$, and for all $y\in\{0, 1\}^n$ such that $\operatorname{dist}(y, C(b))\leq\delta n$, $\Pr[\mathsf{Dec}^{y}(i) = b_i]\geq\frac{1}{2} + \varepsilon$.
    \end{enumerate}
    Here $\operatorname{dist}(x, y):= \#\{v\in[n]: x_v\neq y_v\}$ denotes the Hamming distance between $x$ and $y$.
\end{definition}
\begin{definition}[Normally Decodable Codes]\label{def:normalldc}
    A code $E:\pmo^k \to \pmo^n$ is called $(q, \delta, \varepsilon)$-normally decodable if there exist $q$-uniform hypergraph matchings $\cH_1, \cdots, \cH_k$, each with at least $\delta n$ hyperedges,
    such that for every $C\in\cH_i$, we have $\Pr_{b\leftarrow\{\pm 1\}^k}[b_i = \prod_{v\in C} E(b)_v] \geq \frac{1}{2} + \varepsilon$. 
\end{definition}
The following reduction, which is a variant of a classical result of \cite{KT00}, allows us to work with normally decodable codes without loss of generality. 
\begin{fact}[Reduction to Normally Decodable Codes, Lemma 6.2 in \cite{Yekhanin12}] \label{fact:normal-form}
    Let $C:\{0, 1\}^k\to\{0, 1\}^n$ be a $(q, \delta, \varepsilon)$-LDC. Then there is a code $C':\pmo^k\to\pmo^{O(n)}$ that is $(q, \delta', \varepsilon')$-normally decodable, with $\delta'\geq\varepsilon\delta/3q^22^{q - 1}$ and $\varepsilon'\geq\varepsilon/2^{2q}$.
\end{fact}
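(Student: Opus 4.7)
The plan is to carry out the classical Katz--Trevisan reduction~\cite{KT00,Yekhanin12} as a sequence of structural transformations of the decoder, each preserving the code length up to a constant factor and losing a bounded multiplicative factor in $\varepsilon$ and $\delta$.

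First, I would convert the adaptive $q$-query decoder to a non-adaptive one. The adaptive decoder is specified by a depth-$q$ binary decision tree; by sampling the $q$ query answers uniformly at random in advance and following the corresponding root-to-leaf path, one obtains a non-adaptive decoder that agrees with the original whenever all $q$ guesses match the actual bits read, an event of probability $2^{-q}$. Hence the advantage drops from $\varepsilon$ to at least $\varepsilon/2^q$.

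Second, I would transform the (now non-adaptive) decoder into one whose output is a signed monomial in the queried bits. For each query set $C$, the output function $f_C:\{\pm 1\}^q \to \{\pm 1\}$ has Fourier expansion $f_C = \sum_{S \subseteq C} \widehat{f_C}(S) \prod_{v \in S} y_v$ with $\sum_S |\widehat{f_C}(S)| \leq 2^{q/2}$ by Parseval and Cauchy--Schwarz. Sampling $S$ with probability proportional to $|\widehat{f_C}(S)|$ and outputting $\operatorname{sign}(\widehat{f_C}(S)) \cdot \prod_{v \in S} y_v$ yields a decoder of the desired ``XOR'' form, with advantage shrinking by at most another factor of $2^{q/2}$; padding $S$ with fresh uniform positions makes $|S|=q$ exactly. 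Third, I would smooth the decoder's query distribution so that no position is queried too often: by appropriately duplicating or re-indexing the codeword (and correspondingly renaming queries) one produces a code of length $O(n)$ in which, for every bit $i$, each coordinate is queried with probability $\Theta(q/n)$.

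Finally, for each index $i$ the smoothed decoder induces a distribution $\mu_i$ on $q$-subsets with $\E_{b,\, C\sim \mu_i}[b_i \prod_{v \in C} E'(b)_v] \geq 2\varepsilon'$ for $\varepsilon' = \Theta(\varepsilon/2^{2q})$. By Markov, an $\Omega(\varepsilon')$-fraction of $C$'s in $\mathrm{supp}(\mu_i)$ are ``good'' in that $\Pr_b[b_i = \prod_{v \in C} E'(b)_v] \geq 1/2 + \Omega(\varepsilon')$; by smoothness each coordinate belongs to at most an $O(1/\delta)$-fraction of them, so a standard greedy matching extraction (pick a good $C$, delete every set sharing a vertex with it, repeat) yields a matching $\cH_i$ of size $\Omega(\delta n / q^2)$. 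The main technical obstacle is calibrating the smoothing step so that the final greedy extraction yields a matching of size truly linear in $n$ (not $o(n)$) while the code length stays $O(n)$; bookkeeping the multiplicative losses through all the steps then gives the claimed $\delta' \geq \varepsilon\delta/(3 q^2 2^{q-1})$ and $\varepsilon' \geq \varepsilon/2^{2q}$.
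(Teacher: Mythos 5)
The paper does not supply its own proof of this fact: it is cited as Lemma~6.2 of Yekhanin's survey, and the proof there is the classical Katz--Trevisan reduction. Your proposal correctly reconstructs the standard chain of transformations in that source --- adaptive~$\to$~non-adaptive (via guessing the query answers, losing $2^{-q}$), arbitrary output function~$\to$~signed parity (via Fourier sampling proportional to $|\widehat{f_C}(S)|$, losing $\|\widehat{f_C}\|_1 \leq 2^{q/2}$), smoothing so no coordinate is queried with probability $\gg 1/\delta n$, then Markov plus greedy extraction to get matchings of size $\Omega(\varepsilon'\delta n)$ --- so the approach is essentially the same as the cited argument. Two small inaccuracies worth fixing: the padding of $S$ up to size $q$ should use \emph{fixed} (constant) dummy coordinates appended to the codeword rather than ``fresh uniform positions'' (uniform positions would randomize the sign of the monomial and destroy the advantage), and the smoothing step is conventionally performed \emph{before} the Fourier step, since the smoothness guarantee is what lets one tolerate the $\{0,1\}^n\to\{\pm 1\}^{O(n)}$ coordinate-duplication and still survive the greedy matching argument with a truly linear-in-$n$ matching. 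Your tracking of the multiplicative losses ($2^q\cdot 2^{q/2} = 2^{3q/2}$, comfortably within the claimed $2^{2q}$) and the final matching size ($\Omega(\varepsilon'\delta n/q^2)$) are consistent with the stated constants.
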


\subsection{Concentration Inequalities}

The following is a standard concentration result of random matrices.
\begin{fact}[Matrix Khintchine Inequality~\cite{MR355667,MR1150376}; Theorem 4.1.1 of \cite{Tropp15}] \label{lem:mk}
Let $A_1,A_2, \ldots, $  be arbitrary $M \times N$ matrices. Then, 
\[
\E_{b_1, b_2, \ldots \in \on}  \Norm{\sum_i b_i A_i}_2 
\leq \sqrt{2\log (M+N)} \cdot \max \left\{ \Norm{\sum_i A_i A_i^{\top}}^{1/2}_2, \Norm{\sum_i A_i^{\top} A_i}^{1/2}_2\right\} \mcom
\]
where the expectation is taken over independent and uniform $b_i \in \on$. 
\end{fact}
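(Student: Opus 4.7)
The plan is to prove \Cref{lem:mk} via the classical noncommutative moment method of Lust-Piquard and Pisier. First I would reduce the rectangular $M\times N$ case to the Hermitian case using the standard dilation: set
\[
\tilde A_i \coloneqq \begin{pmatrix} 0 & A_i \\ A_i^\top & 0 \end{pmatrix} \in \R^{(M+N)\times(M+N)}.
\]
One verifies $\|\tilde A_i\|_2 = \|A_i\|_2$ and $\tilde A_i^2 = \operatorname{diag}(A_i A_i^\top,\, A_i^\top A_i)$, so that $\|\sum_i \tilde A_i^2\|_2 = \max\{\|\sum_i A_i A_i^\top\|_2,\, \|\sum_i A_i^\top A_i\|_2\}$. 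It therefore suffices to prove, for Hermitian $N'\times N'$ matrices $H_i$ with $N'=M+N$ and independent Rademachers $b_i$, the Hermitian version
\[
\E\Norm{\sum_i b_i H_i}_2 \leq \sqrt{2\log N'}\cdot \Norm{\sum_i H_i^2}_2^{1/2}.
\]

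Next, set $Z = \sum_i b_i H_i$ and apply the trace-moment trick: for any even integer $p$, the bound $\Norm{Z}_2 \leq (\tr(Z^p))^{1/p}$ holds because $\tr(Z^p)$ is the sum of $p$-th powers of the eigenvalues of $Z$. Jensen's inequality then gives $\E\Norm{Z}_2 \leq (\E\tr(Z^p))^{1/p}$. The heart of the proof is to establish the noncommutative Khintchine moment bound
\[
\E\tr(Z^p) \leq (p-1)!! \cdot \tr\!\left(\left(\sum_i H_i^2\right)^{p/2}\right),
\]
which I would obtain by expanding $\tr(Z^p) = \sum_{i_1,\ldots,i_p} b_{i_1}\cdots b_{i_p}\,\tr(H_{i_1}\cdots H_{i_p})$ and observing that independence and zero-mean of the $b_i$ kill every tuple in which some index appears an odd number of times. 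A pair-matching induction on $p$ — using cyclicity of the trace to bring any matched pair of factors together as a single $H_j^2$ and then absorbing it into the operator $\sum_i H_i^2$ — yields the bound, with $(p-1)!!$ counting the perfect matchings on $[p]$.

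Finally, using $\tr(B^{p/2}) \leq N'\Norm{B}_2^{p/2}$ for the PSD matrix $B = \sum_i H_i^2$, I would conclude
\[
\E\Norm{Z}_2 \leq \left(N' \cdot (p-1)!! \cdot \Norm{\sum_i H_i^2}_2^{p/2}\right)^{1/p}.
\]
Invoking $(p-1)!! \leq p^{p/2}$ and choosing $p = 2\log N' = 2\log(M+N)$ yields the stated constant $\sqrt{2\log(M+N)}$. The main obstacle is the noncommutative moment bound: in the scalar case it is just Wick's theorem for Rademacher sums, but because matrices inside a trace do not commute, one cannot freely reorder factors, so the pair-matching step must be executed carefully (either via telescoping induction on $p$ as above, or by invoking Haagerup's inequality as a black box) to ensure that the right-hand side really is $\tr((\sum_i H_i^2)^{p/2})$ rather than a larger operator-valued quantity.
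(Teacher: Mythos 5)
The paper does not prove this statement; it cites it as a black box (Theorem 4.1.1 of \cite{Tropp15}), and Tropp's proof proceeds via the matrix Laplace-transform method: one bounds $\E\Norm{Z}_2$ by $\inf_{\theta>0}\theta^{-1}\log\E\tr\exp(\theta Z)$, controls the matrix moment generating function of each Rademacher summand by a Hoeffding-type semidefinite bound $\E e^{\theta b_i H_i}\preceq e^{\theta^2 H_i^2/2}$, and uses Lieb's concavity theorem (or the subadditivity of cumulants it implies) to pass to the sum. Your proposal is a genuinely different and also classical route, the noncommutative moment method of Lust-Piquard and Pisier. Your Hermitian dilation reduction and the reduction $\E\Norm{Z}_2\le(\E\tr Z^p)^{1/p}$ are correct, and you are right that the crux is the moment inequality $\E\tr(Z^p)\le (p-1)!!\cdot\tr((\sum_i H_i^2)^{p/2})$, which is a real theorem (Buchholz; Lust-Piquard--Pisier) and not merely a formal pairing count, since indices can repeat with multiplicity greater than two and the non-commuting factors of a matched pair cannot be brought together for free --- you correctly flag this. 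The trade-off: Tropp's route is shorter once Lieb is granted and optimizes constants cleanly, while the moment route is more elementary in spirit and extends directly to Schatten norms.

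There is one concrete quantitative gap in the final step. Using the crude bound $(p-1)!!\le p^{p/2}$ together with $N'^{1/p}$ at $p=2\log N'$ gives
\[
\Bigl(N'\cdot(p-1)!!\Bigr)^{1/p}\le e^{1/2}\sqrt{p}=\sqrt{2e\log N'}\mcom
\]
so you land on $\sqrt{2e\log(M+N)}$ rather than the stated $\sqrt{2\log(M+N)}$. To recover the exact constant you need the sharper Stirling-type estimate $(p-1)!!=p!/\bigl(2^{p/2}(p/2)!\bigr)\le \sqrt{2}\,(p/e)^{p/2}$, so that $\bigl((p-1)!!\bigr)^{1/p}\lesssim\sqrt{p/e}$; then the $e^{1/2}$ from $N'^{1/p}$ cancels the $e^{-1/2}$ and the optimization at $p=2\log N'$ yields exactly $\sqrt{2\log(M+N)}$. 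With that replacement the argument closes, modulo writing out the pair-matching/telescoping proof of the noncommutative moment bound you invoke.
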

We also utilize the following polynomial concentration inequality.
\begin{fact}[Polynomial Concentration; Theorem 1.1 in \cite{SS12}]
\label{fact:ss}
Let $f(x) = f(x_1, \ldots, x_n)$ be a multilinear polynomial of degree $q$ with non-negative coefficients. Let $Y_1, \ldots, Y_n$ be i.i.d Bernoulli random variables. Write $f(Y):= f(Y_1, \ldots, Y_n)$, and let $\mu_0:= \mathbb{E}f(Y)$. For $r\in[q]$, define 
\[\mu_r:= \max_{Z\in\binom{[n]}{r}}\mathbb{E}_Y\left(\partial_Zf\right)(Y) \mcom \]
where $\partial_Z f:= \left(\prod_{i\in Z}\partial_{x_i}\right)f$. Then there exists an absolute constant $W\geq 1$ such that for any $\lambda > 0$, we have:
\[\Pr(|f(Y)-\mu|\geq\lambda)\leq e^2\cdot\max\left\{\max_{r\in[q]}\exp\left(-\frac{\lambda^2}{\mu_0\mu_rW^q}\right), \max_{r\in[q]}\exp\left(-\left(\frac{\lambda}{\mu_rW^q}\right)^{1/r}\right)\right\} \mper \]
\end{fact}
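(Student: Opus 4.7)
The plan is to prove this via the moment method: bound $\E[(f(Y) - \mu_0)^p]$ for an appropriately chosen even integer $p$, and then apply Markov's inequality $\Pr(|f(Y) - \mu_0| \geq \lambda) \leq \E[(f(Y) - \mu_0)^p]/\lambda^p$. The two terms inside the maximum in the conclusion will arise from two competing optimal choices of $p$: a sub-Gaussian regime (giving the $\exp(-\lambda^2/(\mu_0 \mu_r W^q))$ term) and a sub-exponential, Bennett-type regime (giving the $\exp(-(\lambda/(\mu_r W^q))^{1/r})$ term).

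First I would use the classical $L^p$-Poincar\'e inequality on the Boolean hypercube: $\|g - \E g\|_p^2 \leq Cp \sum_i \|\partial_i g\|_p^2$. This inequality can be iterated, since each partial derivative $\partial_i f$ is again a multilinear polynomial with non-negative coefficients, now of degree $q - 1$. Iterating $q$ times, every surviving derivative is a constant, and one obtains a Kim--Vu-type bound of the rough shape $\|f - \mu_0\|_p \lesssim \sum_{Z} (Cp)^{|Z|/2} \|\partial_Z f\|_{p/|Z|}$, where $Z$ ranges over multi-subsets of $[n]$ with $|Z| \leq q$. The assumption that $f$ has non-negative coefficients together with $Y_i \in \{0,1\}$ lets one replace each derivative norm by $\E[\partial_Z f] \leq \mu_{|Z|}$, and reduces the problem to a combinatorial estimate on the weighted sum over $Z$.

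The main technical step (and the key contribution of Schudy--Sviridenko over Kim--Vu) is a careful splitting of this sum so as to isolate a Gaussian contribution scaling like $p^{r/2}\sqrt{\mu_0 \mu_r}$ from a Bennett-type contribution scaling like $p^r \mu_r$. Heuristically, one groups the multi-subsets $Z$ according to the set $S = \{i : Z_i \geq 1\}$ of distinct indices it uses; the multiplicities above $|S|$ contribute extra derivatives that effectively produce a factor of $\mu_0$ (by an expectation bound on $\partial_S f$ evaluated at $1$), yielding the $\sqrt{\mu_0 \mu_r}$ scaling, while the unused multiplicities contribute pure powers of $p^r \mu_r$. Summing geometrically over $r \in [q]$ gives the $e^2$ prefactor and the union loss over the choice of $r$. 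Combining both regimes and optimizing $p$ via Markov then yields the stated tail bound.

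The hard part is exactly this two-regime separation: naively iterating the $L^p$-Poincar\'e inequality only gives the classical Kim--Vu bound, which produces sub-Gaussian tails up to $\lambda \lesssim \mu_0$ and is useless in the large-deviation range $\lambda \gg \mu_0$ that is crucial in our applications of \Cref{fact:ss}. Carrying out the combinatorial bookkeeping so that both regimes emerge simultaneously, with a constant that is only $W^q$ rather than $W^{q^2}$, requires the careful induction on the degree of the polynomial that Schudy--Sviridenko develop and is the crux of the proof.
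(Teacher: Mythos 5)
The paper imports this statement as a black-box citation to Theorem~1.1 of Schudy--Sviridenko and gives no proof of its own, so there is no in-paper argument to compare against. Taken on its own merits, your proposal is a road map rather than a proof: the high-level plan (bound the $p$-th moment of $f(Y)-\mu_0$, then apply Markov, with a sub-Gaussian regime and a Bennett-type regime arising from two different optimizations of $p$) does match the overall structure of the SS12 argument, but you then explicitly defer the substantive step, writing that ``the hard part is exactly this two-regime separation'' and that it ``is the crux of the proof,'' and you offer only a one-sentence gesture at a ``careful splitting'' in its place. Since you also (correctly) observe that naive iteration of the $L^p$-Poincar\'e inequality yields only the Kim--Vu bound --- sub-Gaussian for $\lambda$ up to roughly $\mu_0$ and vacuous beyond, which is precisely the regime the paper needs --- the proposal in effect demonstrates that your chosen starting point does not reach the target without an additional idea you do not supply. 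That missing idea is the content of the statement.

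There is also a concrete flaw in the one heuristic you do sketch. You propose grouping multi-subsets $Z$ by their distinct-index support $S=\{i: Z_i\geq 1\}$ and charging the ``extra multiplicities'' to a $\mu_0$ factor. But $f$ is assumed multilinear, so $\partial_i^2 f \equiv 0$ and every $\partial_Z f$ with a repeated index vanishes identically; after iterating Poincar\'e on a multilinear $f$, the surviving $Z$ are ordinary subsets of $[n]$ of size at most $q$, and there are no unused multiplicities left to generate the $\sqrt{\mu_0\mu_r}$ scaling. So the mechanism you describe for extracting the mixed Gaussian term --- which is exactly the improvement of SS12 over Kim--Vu and the part of the bound the paper actually relies on --- does not operate in the setting of the statement, and the sketch as written would collapse back to the Kim--Vu estimate.
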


\section{Proof of the Main Theorem}

In this section, we introduce the core components of our argument and use them to prove:

\begin{theorem}[Main Theorem] \label{thm:main-detail}
Let $E:\on^k \rightarrow \on^n$ be a $(q, \delta, \epsilon)$-LDC for any $q\geq 3$. Then, 
\[
k \leq n^{1-\frac{2}{q}} \cdot  \frac{2^{O(q)} \log^{4}n}{\epsilon^{6} \delta^{2}} \mper
\]
\end{theorem}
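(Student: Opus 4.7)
The plan is to follow the CSP refutation strategy outlined in \Cref{sec:nonlinear-overview}. First, via \Cref{fact:normal-form}, I reduce to a $(q, \delta', \epsilon')$-normally decodable code with $\delta', \epsilon' \geq \Omega(\epsilon\delta/2^{O(q)})$, giving matchings $\cH_1,\ldots,\cH_k$ with $|\cH_i|\geq \delta' n$. Setting $\cH=\bigcup_i \cH_i$ and defining the $q$-XOR instance $\Psi_b(x) = \sum_{i=1}^k\sum_{C\in \cH_i} b_i x_C$, normal decodability yields $\E_b[\max_x \Psi_b(x)]\geq \E_b[\Psi_b(E(b))]\geq 2\epsilon'|\cH|$. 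My goal is to show $\E_b[\max_x \Psi_b(x)]\leq o(|\cH|)$ whenever $k$ exceeds the claimed threshold, giving a contradiction.

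Next, I apply the approximate strong regularity decomposition (the general form in \Cref{decompsec}, generalizing \Cref{lem:decomp-overview}) to split $\cH$ into $O(q\log n)$ subhypergraphs, each $t$-approximately strongly regular for some $t\in[q]$. By averaging over the decomposition, some piece $\cH^\pi$ carries an $\Omega(1/(q\log n))$ fraction of the mass and inherits a lower bound $\E_b[\max_x \Psi_b^\pi(x)]\geq \Omega(\epsilon' |\cH^\pi|/q\log n)$. Since $q$ is odd, I invoke the Cauchy--Schwarz trick: by Jensen, $(\E_b\max_x \Psi_b^\pi(x))^2\leq \E_b\max_x \Psi_b^\pi(x)^2$, and
\begin{align*}
\Psi_b^\pi(x)^2 = |\cH^\pi| + \sum_{C\neq C'\in\cH^\pi} b_C b_{C'} x_{C\oplus C'}\mper
\end{align*}
I split the cross terms by $|C\cap C'|$; the pairs intersecting in exactly $t$ vertices are the dominant contribution (this is where the partition structure of $t$-approximate strong regularity is essential), and they encode the quadratic form of the signed Kikuchi matrix $K_b=\sum_{i=1}^k b_i K_i$ at level $\ell=n^{1-2/q}$, evaluated on $\{\pm 1\}^N$ vectors where $N=\binom{n}{\ell}$. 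This contribution is bounded by $\|K_b\|_{\infty\to 1}$ up to a combinatorial factor depending on $\ell,n,q,d_t$; other intersection sizes yield lower-order contributions.

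The technical crux is then to upper bound $\E_b\|K_b\|_{\infty\to 1}$, handled via row-pruning following \cite{GuruswamiKM22}: write $K_b=\widetilde{K}_b+(K_b-\widetilde{K}_b)$, where $\widetilde{K}_b$ zeros out rows and columns indexed by Kikuchi vertices $S$ whose degree contribution deviates from the typical value. The well-behaved part is bounded by $\E_b\|\widetilde{K}_b\|_{\infty\to 1}\leq \sqrt{N}\,\E_b\|\widetilde{K}_b\|_2$, controlled via Matrix Khintchine (\Cref{lem:mk}) in terms of $\sqrt{\log N}$ times the square root of a maximum row-sum of $\sum_i \widetilde{K}_i\widetilde{K}_i^\top$, which the pruning threshold bounds explicitly. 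The pruned portion is controlled by showing that pruning affects only a negligible fraction of rows: for uniform $S\in\binom{[n]}{\ell}$ the relevant row-sum is a degree-$q$ multilinear polynomial in the indicator vector of $S$ with non-negative coefficients, and fixing $r$ coordinates inside $S$ reveals that its order-$r$ partial derivatives are bounded by $O(d_r)$. Invoking Schudy--Sviridenko (\Cref{fact:ss}), exponential concentration then holds precisely when the ratios $d_r/d_t$ satisfy the inequalities dictated by \Cref{def:goodindex}; these are exactly the bounds enforced by $t$-approximate strong regularity on $\cH^\pi$, so the tail probability is $1/N^{\omega(1)}$ and a union bound over vertices and color pairs leaves only a negligible fraction pruned.

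The main obstacle I anticipate is the bookkeeping in the polynomial concentration step: correctly matching the partial-derivative bounds $\mu_r$ in \Cref{fact:ss} to the $O(d_r)$ co-degree bounds on $\cH^\pi$, and verifying that the good-index conditions of \Cref{def:goodindex} yield tail bounds exponentially small in $\ell\log n=\Theta(\log N)$ with enough slack to absorb the polylog losses from the decomposition, the union bound over the $k^2$ color pairs, and the Cauchy--Schwarz squaring. Combining the resulting upper bound on $\E_b\|K_b\|_{\infty\to 1}$ with the CSP-value lower bound, undoing the square, and substituting the $\epsilon,\delta$ losses from \Cref{fact:normal-form} yields $k\leq n^{1-2/q}\cdot 2^{O(q)}\log^4 n/(\epsilon^6\delta^2)$, where the four logarithmic factors arise from Matrix Khintchine, the decomposition, the Cauchy--Schwarz squaring, and the concentration tail budget.
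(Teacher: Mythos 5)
Your high-level architecture matches the paper's: reduce to normal form, form the $q$-XOR instance $\Psi_b$, decompose $\cH$ into $O(q\log n)$ approximately strongly regular pieces, square via a Cauchy--Schwarz trick, pass to the $\infty\to 1$ norm of a signed Kikuchi matrix, prune heavy rows using Schudy--Sviridenko concentration controlled by the good-index co-degree bounds, and finish with Matrix Khintchine. However, there is a genuine gap in the Cauchy--Schwarz step. You write ``I split the cross terms by $|C\cap C'|$; the pairs intersecting in exactly $t$ vertices are the dominant contribution\ldots other intersection sizes yield lower-order contributions.'' This is not what the paper does and the claim is unjustified: naively expanding $\Psi_b^\pi(x)^2 = |\cH^\pi| + \sum_{C\neq C'\in\cH^\pi}b_Cb_{C'}x_{C\oplus C'}$ produces cross terms for \emph{every} pair, the overwhelming majority of which are disjoint ($|C\cap C'|=0$); their symmetric differences have size $2q$, so they do not fit the level-$\ell$ Kikuchi graph you construct, and there is no reason they are ``lower-order.'' The paper instead performs an \emph{outer} Cauchy--Schwarz across the partition pieces $\{\cH_\theta^{(t)}\}_\theta$ supplied by approximate strong regularity (\Cref{lem:cstrickNEW}): writing $\Psi_b^\pi(x)=\sum_\theta x_{Q_\theta}\sum_{C\in\cH_\theta^{(t)}}b_Cx_{C\setminus Q_\theta}$, one gets $\Psi_b^\pi(x)^2\leq p_t\sum_\theta\bigl(\sum_{C\in\cH_\theta^{(t)}}b_Cx_{C\setminus Q_\theta}\bigr)^2$, whose cross terms are \emph{exactly} the pairs $C\neq C'$ within a single piece (both containing the same $t$-set $Q_\theta$), with nothing left to wave away; the factor $p_t\leq 2|\cH^\pi|/d_t$ is also precisely what produces the leading term $2|\cH^\pi|^2/d_t$ in \Cref{lem:refute-approx-regular}.

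Two further devices you omit are needed to make the Kikuchi step go through. First, after Cauchy--Schwarz the signs appear as products $b_Cb_{C'}$ of two distinct message bits, so your $K_b=\sum_{i=1}^k b_iK_i$ cannot be fed directly to Matrix Khintchine (any ``$K_i$'' collecting one side of the pairs would itself depend on $\{b_j\}_{j\neq i}$). The paper introduces a random bipartition $[k]=L\sqcup R$ (\Cref{def:spectralpolydef}, \Cref{def:signed-kikuchi}) so that $K_{i,t}=\sum_{j\in R}b_j\ol K_{i,j,t}$ depends only on $\{b_j:j\in R\}$, and Matrix Khintchine is applied conditionally over the independent $\{b_i:i\in L\}$. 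Second, since pairs $C,C'\in\cH_\theta^{(t)}$ can share more than $Q_\theta$, the Kikuchi graph lives on the \emph{doubled} vertex set $\binom{[n]\times[2]}{\ell}$ with $N=\binom{2n}{\ell}$, not $\binom{n}{\ell}$ as you wrote (\Cref{def:kikuchi}, \Cref{fact:bound-D}), so that $(C\setminus Q_\theta)\times\{1\}$ and $(C'\setminus Q_\theta)\times\{2\}$ are disjoint and every pair contributes a clean matching. Finally, a minor bookkeeping note: \Cref{fact:normal-form} gives $\epsilon'\geq\epsilon/2^{2q}$ (with no $\delta$ loss); the weaker $\epsilon'=\Omega(\epsilon\delta/2^{O(q)})$ you state would yield $\delta^{-6}$ rather than the claimed $\delta^{-2}$ after the $\epsilon'^{-4}\delta'^{-2}$ calculation.
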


Given \Cref{fact:normal-form}, we can assume that the LDC $E:\on^k \rightarrow \on^n$ is in the normal form. Let $\cH_1, \cH_2, \ldots, \cH_k$ be the $q$-uniform matchings of size $\geq \delta n$ for the associated normal decoder. By a slight abuse of notation, we will use $\cH$ to denote both the tuple $(\cH_1, \cH_2, \ldots, \cH_k)$ and the hypergraph formed by their union $\bigcup_{i =1}^k \cH_i$. As in~\cite{AlrabiahGKM23}, we will associate a system of $q$-XOR constraints that encode the success of the decoder for $E$. Let us first define this $q$-XOR system.

\begin{definition}[$q$-XOR Polynomial for a Normal LDC]
Given a collection of $q$-uniform matchings $\cH=(\cH_1, \cH_2, \ldots, \cH_k)$ on $[n]$ and any $b\in \pmo^k$, let 
\[
\Psi^{\cH}_b(x) = \sum_{i = 1}^k b_i \sum_{C \in \cH_i} x_C\mcom
\]
be a degree-$q$ polynomial in variables $x_1, x_2, \ldots, x_n$ corresponding to the codeword bits. We drop the superscript and write $\Psi_b$ when the collection of $q$-uniform matchings $\cH$ is clear from the context. Let $\val(\Psi_b) = \max_{x \in \on^n} \Psi_b(x)$ be the \emph{value} of $\Psi_b$.  
\end{definition}

Informally speaking, on average over $b\in \on^k$, each monomial in $\Psi_b$ corresponds to a local decoding constraint that is satisfied with a non-trivial advantage by the codeword $E(b)$. This is immediate given \cref{def:normalldc}, and we record it as the following observation: 

\begin{observation} \label{obs:lower-bound}
For a $(q,\delta,\eps)$-normally decodable code $E$, we have
$\E_{b\in\{\pm 1\}^k}[\Psi_b(E(b))] \geq 2\varepsilon |\cH|$.
\end{observation}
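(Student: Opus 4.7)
The plan is a direct computation exploiting the defining property of a normally decodable code. First I would expand the polynomial at $x = E(b)$ using its definition:
\begin{align*}
\Psi_b(E(b)) \;=\; \sum_{i=1}^{k} b_i \sum_{C \in \cH_i} E(b)_C \;=\; \sum_{i=1}^{k} \sum_{C \in \cH_i} b_i \prod_{v \in C} E(b)_v \mper
\end{align*}
Taking expectations over $b \in \pmo^k$ and swapping with the finite sums, the quantity of interest becomes $\sum_{i=1}^{k} \sum_{C \in \cH_i} \E_{b}\bigl[\, b_i \cdot \prod_{v \in C} E(b)_v \,\bigr]$.

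Next I would translate the probabilistic guarantee of \cref{def:normalldc} into an expectation bound. For any $i \in [k]$ and $C \in \cH_i$, the product $b_i \cdot \prod_{v \in C} E(b)_v$ is a $\pmo$-valued random variable, equal to $+1$ exactly when $b_i = \prod_{v \in C} E(b)_v$. Hence
\begin{align*}
\E_{b}\Bigl[\, b_i \prod_{v \in C} E(b)_v \,\Bigr] \;=\; 2 \Pr_{b}\!\Bigl[\, b_i = \prod_{v \in C} E(b)_v \,\Bigr] - 1 \;\geq\; 2\bigl(\tfrac{1}{2} + \varepsilon\bigr) - 1 \;=\; 2\varepsilon \mper
\end{align*}

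Finally, summing this lower bound over all $i \in [k]$ and all $C \in \cH_i$ yields $\E_{b}[\Psi_b(E(b))] \geq 2\varepsilon \cdot \sum_{i=1}^{k} |\cH_i| = 2\varepsilon |\cH|$, as desired. There is no real obstacle here: the statement is essentially a restatement of the normal decoding advantage in terms of the first moment of a $\pmo$ correlation, and the only ingredient is the standard identity $\E[Z] = 2\Pr[Z = 1] - 1$ for $Z \in \pmo$, together with linearity of expectation.
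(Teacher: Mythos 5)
Your proof is correct and is exactly the immediate computation the paper has in mind (the paper states it without proof, calling it "immediate given \cref{def:normalldc}"): expand $\Psi_b(E(b))$, apply linearity of expectation, use the identity $\E[Z] = 2\Pr[Z=1]-1$ for $Z\in\pmo$ together with the $\tfrac{1}{2}+\eps$ advantage from normal decodability, and sum over all $|\cH|$ clauses.
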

\begin{remark}
    Note that for a normally decodable code, $\cH$ is a union of $k$ hypergraph matchings and since $k \leq n$, the hypergraph thus has size $\leq nk\leq n^2$.
\end{remark}

Our goal in the rest of this section is to prove an upper bound on $\E_{b}[\val(\Psi_b)]$, which, when combined with \cref{obs:lower-bound}, will give us \cref{thm:main-detail}. To do this, we will first identify a sufficient condition on $\cH$ called \emph{approximate strong regularity} (\Cref{def:approxstrongregdef}), and argue that this condition can be ensured without the loss of generality by an appropriate decomposition applied to an arbitrary hypergraph $\cH$.

\paragraph{Approximate Strong Regularity} Our notion of approximate strong regularity can be seen as a generalization of similar definitions used in in~\cite{GuruswamiKM22,HsiehKM23}. Their notions of regularity demand absolute upper bounds on all co-degrees in the hypergraph. In contrast, our definition below identifies a special index $t$ and demands 1) that all size $t$ sets have their co-degrees close up to a factor $2$ to a fixed value $d_t$, and 2) the co-degree of sets of size $r\neq t$ be appropriately small \emph{relative to} $d_t$. Crucially, our definition neither demands nor guarantees any non-trivial absolute upper bound on any co-degree. The following definition (discussed earlier in \Cref{sec:overview}) notes the relative upper bounds we need so as to make a crucial technical component of our argument go through.

\restatedefinition{def:goodindex}
We can now define approximate strong regularity. Informally speaking, $\cH$ is approximately regular if one can partition the hyperedges of $\cH$ into groups so that 1) the hyperedges in a group all contain some set of size $t$, 2) each group has the same size up to a factor $2$, and, for the co-degree tuple formed by $d_r$ being the maximum possible co-degree of any set of size $r$, $t$ is a good index.

\restatedefinition{def:approxstrongregdef}

\begin{remark}
It is easy to see that we have $p_t\leq\frac{2|\cH|}{d_{t}}$. Also, let $P=\bigsqcup_{t\in[q]} P_t,$ where $P_t$ is the set of all indices $\pi$ such that $\cH^\pi$ is $t$-approximately strongly regular.
\end{remark}

Our key technical argument is captured in the following lemma that shows an upper bound on $ \E_b\left[\val(\Psi_b^{\cH})^2\right]$ if $\cH$ is $t$-approximately strongly regular.

\begin{lemma}[] \label{lem:refute-approx-regular}
Let $\cH_1, \cH_2, \ldots, \cH_k$ be $q$-uniform matchings on $[n]$ such that $\bigcup_{i\in [k]}\cH_i$ is $t$-approximately strongly regular with degree sequence $d_1 \geq d_2 \geq \cdots\geq d_q$ such that $k \geq d_1 \geq \ell \coloneqq n^{1-2/q}\log n$. Then we have
\[
\E_b\left[ \val\left(\Psi^\cH_b\right)^2\right] \leq \frac{2|\cH|^2}{d_t} + |\cH| n \sqrt{k\ell}  O(\log n)^{1/2}\mper 
\]
\end{lemma}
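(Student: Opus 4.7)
The plan is to combine the Cauchy--Schwarz trick outlined in \Cref{sec:nonlinear-overview} with a spectral bound on a signed Kikuchi matrix, exploiting the $t$-approximate strong regularity of $\cH$ to prune outlier rows before invoking the Matrix Khintchine inequality.

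First I would leverage the regularity partition $\cH = \bigsqcup_{\theta=1}^{p_t}\cH_{\theta}^{(t)}$ supplied by \Cref{def:approxstrongregdef}. Writing $\Psi^\cH_b = \sum_\theta \Psi^{\cH_\theta^{(t)}}_b$ and applying Cauchy--Schwarz pointwise in $x$ yields $\Psi_b(x)^2 \leq p_t \sum_\theta \Psi^{\cH_\theta^{(t)}}_b(x)^2$. Expanding each square, the diagonal sums to $\sum_\theta |\cH_\theta^{(t)}| = |\cH|$, and after $\max_x$ and $\E_b$ it contributes $p_t |\cH| \leq 2|\cH|^2/d_t$, matching the first term of the claimed bound. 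The remaining off-diagonal $\mathrm{OffDiag}(x,b) \coloneqq \sum_\theta \sum_{C \neq C' \in \cH_\theta^{(t)}} b_C b_{C'}\, x_{C \oplus C'}$ has the crucial feature that every contributing pair satisfies $|C \cap C'| \geq t$, since $C$ and $C'$ both contain $Q_\theta$; the dominant pairs are those with $|C \cap C'| = t$ exactly, while the lower-degree monomials from $|C \cap C'| > t$ are handled by the same device at coarser Kikuchi levels using the relative bounds in \Cref{def:goodindex}~(2).

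Next I would encode the primary part of $\mathrm{OffDiag}$ as the quadratic form of a signed Kikuchi matrix $K_b$ at level $\ell = n^{1-2/q}\log n$, obtained by canceling $t$-tuples as in \Cref{sec:kikuchi-overview}. With $z_S = \prod_{u \in S} x_u$, the standard identity $z_S z_T = x_{C \oplus C'}$ whenever $S \xleftrightarrow{C,C'} T$, together with the fact that each $t$-intersecting pair produces a common Kikuchi edge multiplicity $D$, identifies the primary part of $\mathrm{OffDiag}$ with $D^{-1} z^\top K_b z$, so that $\max_x |\mathrm{OffDiag}(x,b)| \leq D^{-1}\Norm{K_b}_{\infty \to 1}$. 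To control $\Norm{K_b}_{\infty \to 1}$ I would use the row-pruning strategy of \cite{GuruswamiKM22}: split $K_b = \widetilde{K}_b + (K_b - \widetilde{K}_b)$ where $\widetilde{K}_b$ zeroes out rows and columns whose $\ell_1$-mass exceeds a threshold $\Delta$ of order $\bar{d}(K)/\mathrm{polylog}(n)$, then bound $\Norm{\widetilde{K}_b}_{\infty \to 1} \leq N\Norm{\widetilde{K}_b}_2$ and apply Matrix Khintchine (\Cref{lem:mk}) after a standard decoupling step that replaces the bilinear signs $b_C b_{C'}$ by $b_C b'_{C'}$ with an independent copy $b'$. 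Unpacking $\log N = O(\ell \log n)$ together with the Kikuchi average degree $\bar{d}(K)$ (computed using item~(2) of \Cref{def:approxstrongregdef}) and combining with the $p_t/D$ prefactor accounts for the second summand $|\cH| n\sqrt{k\ell}\cdot O(\log n)^{1/2}$.

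The principal technical obstacle is justifying the pruning step: one must show that only a negligible fraction of vertices $S \in \binom{[n]}{\ell}$ are heavy. Writing $\deg_K(S)$ as a degree-$q$ polynomial in the indicator vector $s \in \{0,1\}^n$, its order-$r$ partial derivatives are upper bounded by the hypergraph co-degrees $d_r$. Feeding these into the Schudy--Sviridenko inequality (\Cref{fact:ss}) yields tails of the form $\exp\bigl(-(\lambda/(\mu_r W^q))^{1/r}\bigr)$ for every $r \in [q]$, and forcing all these exponents to be simultaneously $\omega(\log N)$ requires precisely the relative ratio bounds $d_r/d_t \leq n^{1-2r/q}$ for $r \leq \lceil (q-t)/2\rceil$ and $d_r/d_t \leq n^{-\frac{2}{q}(r-t)+\frac{1}{q}(t-\1(t\text{ even}))}$ for $t \leq r \leq \lfloor (q+t)/2\rfloor$ from \Cref{def:goodindex}, together with the lower bound on $d_t$ in item~(3) so that the threshold $\Delta$ comfortably exceeds the mean. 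Verifying this correspondence between each of the three good-index conditions and the derivative bound needed in each exponent of the Schudy--Sviridenko application is where the somewhat intricate form of the good-index definition earns its keep, and is the technical crux of the argument.
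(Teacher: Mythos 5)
Your plan has the right high-level shape (Cauchy--Schwarz on the partition $\{\cH_\theta^{(t)}\}$ to produce a quadratic form, reduction to the $\infty\to 1$ norm of a signed Kikuchi matrix, row/column pruning of heavy vertices, Matrix Khintchine on the pruned matrix, with the pruning justified via Schudy--Sviridenko bounds powered by the co-degree ratios of \Cref{def:goodindex}), and you correctly identify the derivative-bound correspondence as the technical crux. Two places diverge from the paper's construction in ways that would force you to do more work.

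First, you split the off-diagonal sum by intersection size, reserving the ``primary'' case $|C\cap C'|=t$ for the level-$\ell$ Kikuchi graph and relegating pairs with $|C\cap C'|>t$ to ``coarser levels.'' The paper avoids this entirely: \Cref{def:kikuchi} builds a \emph{single} Kikuchi graph on the doubled vertex set $[n]\times[2]$ and cancels only the common anchor $Q_\theta$ (not the full intersection), so that $S\oplus T = \widetilde{C}\times\{1\}\,\cup\,\widetilde{C}'\times\{2\}$ always has size $2(q-t)$ regardless of $|C\cap C'|$. With your single-variable formulation $z_S z_T=x_{C\oplus C'}$ the symmetric difference has variable size $2(q-t-a)$ for pairs with $|C\cap C'|=t+a$, and you would have to run the whole pruning/Khintchine machinery separately at each level and then sum the contributions; the co-degree conditions in \Cref{def:goodindex} were calibrated for the single doubled-variable graph, so you'd need to re-derive the required partial-derivative bounds at every coarser level and verify they still close. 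This is doable in principle but nontrivial, and it is not a cost the paper pays.

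Second, you propose decoupling the bilinear signs $b_C b_{C'}\mapsto b_C b'_{C'}$. The paper instead takes a random partition $L\sqcup R=[k]$ of the colors, defines $f^\cH_{L,R,t}$ to only keep cross pairs $(C\in\cH_i,\, i\in L;\, C'\in\cH_j,\, j\in R)$, and recovers the full sum in expectation over $(L,R)$ with a factor $4$ (\Cref{lem:cstrickNEW}); inside each such $f^\cH_{L,R,t}$ the matrices $K_{i,t}=\sum_{j\in R}b_j\ol{K}_{i,j,t}$ for $i\in L$ are then genuinely independent as functions of $(b_i)_{i\in L}$, which is exactly what Matrix Khintchine needs. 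Generic decoupling accomplishes the same goal, but you would need to be careful about the resulting constants and about the fact that the decoupled quantity upper-bounds the original only in expectation, which interacts with the $\max_x$ in $\val$; the paper's device of fixing an arbitrary partition, taking $\max_x$, and \emph{then} averaging over $(L,R)$ by Jensen sidesteps this. Neither issue is fatal, but as written your proposal would still need the doubling trick (or a careful multi-level analysis) and a precise decoupling argument to actually close the proof at the stated quantitative bound.
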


We will prove \Cref{lem:refute-approx-regular} in \Cref{sec:refute-approx-regular} modulo a few technical results proved in subsequent sections. To finish the proof, we will show in \Cref{decompsec} that every $\cH = (\cH_1, \cH_2, \ldots, \cH_k)$ can be decomposed into approximately strongly regular pieces up to a negligible error.

\begin{lemma}
\torestate{
    \label{lem:new_bucketing_arpon_version}
    Let $\cH$ be a $q$-uniform hypergraph on $[n]$. Let $\eta>0$. Then, there is a collection $\{\mathcal{H}^\pi : \pi\in P\}$, with $|P|\leq q\lceil\log_2|\cH|\rceil + 1$, of hypergraphs such that:
\begin{enumerate}
    \item For every $\pi\in P$, there exists $t\in[q]$ such that $\cH^{\pi}$ is $t$-approximately strongly regular.
    \item $\bigsqcup_{\pi\in P} \mathcal{H}^{\pi}\subseteq \mathcal{H}$,
    \item $\left| \bigsqcup_{\pi\in P} \mathcal{H}^{\pi} \right|\geq (1-\eta)|\cH|$,
    \item $|\cH^\pi| \geq \frac{\eta|\cH|}{q\lceil\log_2|\cH|\rceil + 1}$ for all $\pi\in P$.
\end{enumerate}
}
\end{lemma}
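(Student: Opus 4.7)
The plan is to produce the decomposition by an iterative greedy procedure driven by the good-index lemma, and then discard the pieces that end up too small. I would set up the variables: let $\cH^{\mathrm{curr}}$ denote the current ``leftover'' hypergraph, starting with $\cH^{\mathrm{curr}}=\cH$. At each iteration, compute the co-degree sequence $d_1\geq d_2\geq\cdots\geq d_q$ of $\cH^{\mathrm{curr}}$, apply \Cref{lem:good-index-exists} to extract a good index $t=t(\cH^{\mathrm{curr}})\in[q]$, and then \emph{peel} a single new piece $\cH^\pi$ as follows. Greedily pick $Q_1\in\binom{[n]}{t}$ maximizing the co-degree in $\cH^{\mathrm{curr}}$, move the $d_t^{\mathrm{curr}}$ hyperedges through $Q_1$ into $\cH^\pi$, update $\cH^{\mathrm{curr}}$, and iterate: at step $\theta$, take the $t$-tuple $Q_\theta$ of maximum co-degree in the (now further shrunken) $\cH^{\mathrm{curr}}$ and transfer those hyperedges into $\cH^\pi$. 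Stop this inner loop as soon as the max co-degree over size-$t$ subsets in $\cH^{\mathrm{curr}}$ drops below $d_t^{\mathrm{curr}}/2$ (where $d_t^{\mathrm{curr}}$ is the value at the start of this outer iteration). Declare the resulting $\cH^\pi$ to be the next piece, recompute a good index on the updated $\cH^{\mathrm{curr}}$, and repeat.

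The key claim is that each such piece $\cH^\pi$ is $t$-approximately strongly regular in the sense of \Cref{def:approxstrongregdef}. Condition (1) of \Cref{def:approxstrongregdef} holds by construction, with $\{Q_\theta\}$ as the heavy $t$-tuples. For condition (2), note that the first transfer guarantees $d_t^{\cH^\pi}\geq d_t^{\mathrm{curr}}$, and since $\cH^\pi$ is a subhypergraph of the original $\cH^{\mathrm{curr}}$ we also have $d_t^{\cH^\pi}\leq d_t^{\mathrm{curr}}$, hence $d_t^{\cH^\pi}= d_t^{\mathrm{curr}}$; the stopping rule then forces each group $\cH^\pi_{|Q_\theta}$ to have size in $[d_t^{\cH^\pi}/2,\,d_t^{\cH^\pi}]$. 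For condition (3), observe that $d_r^{\cH^\pi}\leq d_r^{\mathrm{curr}}$ for every $r$ since $\cH^\pi\subseteq\cH^{\mathrm{curr}}$ (at the start of the outer iteration), while $d_t^{\cH^\pi}=d_t^{\mathrm{curr}}$, so every ratio $d_r^{\cH^\pi}/d_t^{\cH^\pi}$ is bounded above by $d_r^{\mathrm{curr}}/d_t^{\mathrm{curr}}$. Since $t$ was chosen as a good index of the latter sequence, all three conditions in \Cref{def:goodindex} transfer verbatim to $\cH^\pi$.

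Next I would bound the number of pieces. After transferring $\cH^\pi$, the new current hypergraph satisfies $d_t^{\mathrm{curr\ (new)}}<d_t^{\mathrm{curr}}/2$ for the particular index $t$ chosen in that outer iteration, while $d_s^{\mathrm{curr\ (new)}}\leq d_s^{\mathrm{curr}}$ for every $s\in[q]$. Define the potential $\Phi(\cH^{\mathrm{curr}})=\sum_{s=1}^{q}\lfloor\log_2(1+d_s^{\mathrm{curr}})\rfloor$. Each outer iteration strictly decreases $\Phi$ by at least $1$, and $\Phi(\cH)\leq q\lceil\log_2|\cH|\rceil$, so the process terminates in at most $q\lceil\log_2|\cH|\rceil$ outer iterations, yielding at most that many pieces (plus a leftover bucket for the final empty/near-empty $\cH^{\mathrm{curr}}$, giving $|P|\leq q\lceil\log_2|\cH|\rceil+1$ and establishing items (1) and (2) of the lemma statement).

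Finally, to enforce item (4), I would post-process by discarding every piece whose size is below the threshold $\tau\coloneqq\frac{\eta|\cH|}{q\lceil\log_2|\cH|\rceil+1}$. Since there are at most $q\lceil\log_2|\cH|\rceil+1$ pieces in total, the total mass discarded is at most $(q\lceil\log_2|\cH|\rceil+1)\cdot\tau=\eta|\cH|$, leaving item (3) intact: the surviving pieces still cover at least $(1-\eta)|\cH|$ hyperedges. The main technical subtlety—and the step I expect to need the most care—is the transfer of the good-index property from the parent $\cH^{\mathrm{curr}}$ to the child piece $\cH^\pi$: one must argue that the greedy heaviest-first peeling (rather than some other extraction rule) is what keeps $d_t^{\cH^\pi}$ pinned at $d_t^{\mathrm{curr}}$, and hence keeps every ratio $d_r^{\cH^\pi}/d_t^{\cH^\pi}$ inside the window mandated by \Cref{def:goodindex}. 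Everything else is bookkeeping with the potential $\Phi$ and the threshold $\tau$.
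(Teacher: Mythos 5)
Your proposal is correct and follows essentially the same two-level greedy decomposition as the paper's Algorithms~\ref{alg:newstrongreg} and~\ref{alg:caterpillar}: peel heavy $t$-tuples (with $t$ a good index from \Cref{lem:good-index-exists}) until the $t$-co-degree halves, declare a piece, recompute the good index, repeat, then discard pieces smaller than $\eta|\cH|/|P|$. The only point of departure is the termination bound, where you use the potential $\Phi=\sum_s\lfloor\log_2(1+d_s)\rfloor$ dropping by at least one per outer iteration, whereas the paper (\Cref{lem:caterpillarfastfinish}) argues by pigeonhole that some fixed index $t$ would have to be chosen more than $\lceil\log_2|\cH|\rceil$ times, forcing $d_t<1$; both give the same $q\lceil\log_2|\cH|\rceil+1$ bound and your version is arguably cleaner. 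One small caution in your write-up: the groups in the partition of $\cH^\pi$ are the sets actually extracted at each inner step (the co-degree of $Q_\theta$ in the shrunken residual), not $\cH^\pi_{|Q_\theta}$ as written, since earlier peels may already have removed some hyperedges through $Q_\theta$; the stopping rule still pins their sizes in $[d_t^{\cH^\pi}/2,d_t^{\cH^\pi}]$, so the conclusion stands, but the notation should reflect this.
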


Let us complete the proof of \Cref{thm:main-detail} using \Cref{lem:refute-approx-regular,lem:new_bucketing_arpon_version}.

\begin{proof}[Proof of \Cref{thm:main-detail}]
Let $E:\on^k \rightarrow \on^n$ be a $(q,\delta,\epsilon)$-LDC. Let $\epsilon' = \epsilon/2^{2q}$ and $\delta' = \epsilon \delta/3q^2 2^{q-1}$ be the parameters obtained from \Cref{fact:normal-form} when we reduce to the normal form. From \Cref{obs:lower-bound}, we thus have that: $\E_{b \sim \on^k}\left[ \val(\Psi_b)\right] \geq 2 \epsilon' |\cH|$.
  
Set $\ell = n^{1-\frac{2}{q}}\log n$.
Suppose towards a contradiction that $k \geq An^{1-\frac{2}{q}} \cdot\frac{q^2\log^{4} n}{\epsilon'^{4} \delta'^{2}}$ for a large enough constant $A > 0$.
Note that this expression is at most $n^{1-\frac{2}{q}} \cdot \frac{2^{O(q)}\log^4 n}{\eps^6 \delta^2}$.

Let us apply the results we developed so far to construct an upper bound on $\E_b[ \val(\Psi_b)]$. 
First, we apply the decomposition from \Cref{lem:new_bucketing_arpon_version} for $\eta = \epsilon'/2$ and get a collection of subhypergraphs $\{\cH^{\pi}: \pi \in P\}$.
Here, we have $|P| \leq O(q\log n)$ and $\sum_{\pi\in P}|\cH^{\pi}| \geq (1-\eta) |\cH|$.
Let $P_t$ be the set of all indices $\pi$ such that $\cH^\pi$ is $t$-approximately strongly regular (where $t$ is the good index).
For simplicity, we denote $d_r^{\pi} \coloneqq d_{\cH^{\pi},r}$ to be the co-degree bounds of $\cH^{\pi}$.
Then we can write:
\begin{align*}
    \E_b[ \val (\Psi_b)] 
    &\leq \E_b\left[ \val \parens*{ \sum_{\pi \in P} \Psi^{\cH^{\pi}}_b } \right] + \eta |\cH| 
    \leq \sum_{\pi \in P} \E_b\left[ \val\left(\Psi^{\cH^{\pi}}_b\right)\right] + \eta |\cH| \\
    &\leq \sqrt{O(q\log n) \cdot \sum_{\pi \in P} \E_b\left[ \val\left(\Psi^{\cH^{\pi}}_b\right)^2\right]} + \eta |\cH| \mcom
    \numberthis \label{eq:val-Psi-bound}
\end{align*}
where the last inequality follows from Cauchy-Schwarz.

For each $\pi\in P$, we have $|\cH^{\pi}| \geq \frac{\eta|\cH|}{q\log_2|\cH| + 1} \geq \frac{\eta \delta' k n}{3q\log n}$ since $|\cH|\leq n^2$.
Thus, we have $d_1^{\pi} \geq |\cH^{\pi}|/n \geq \frac{\eta \delta' k}{3q\log n} \gg \ell$ by the lower bound on $k$.
This allows us to invoke \Cref{lem:refute-approx-regular} (which requires $d_1^{\pi} \geq \ell$) for all $\cH^{\pi}$ for $\pi\in P_t$:
\begin{align*}
    \E_b\left[ \val\left(\Psi^{\cH^{\pi}}_b\right)^2\right]
    \leq \frac{2|\cH^{\pi}|^2}{d_t^\pi} + O(1) \cdot n \sqrt{k\ell}  (\log n)^{1/2} |\cH^{\pi}| \mper
\end{align*}
Since $t$ is a good index for $\cH^{\pi}$, condition (\ref{item:d-t-lower-bound}) of \Cref{def:goodindex} gives that $d_t^{\pi} \geq d_1^{\pi} n^{-1+2/q} \geq |\cH^{\pi}| n^{-2+2/q}$.
Thus, the first term above is upper bounded by $2|\cH^{\pi}| n^{2-2/q}$.
Thus, summing over $\pi\in P_t$ and $t\in[q]$, we have
\begin{align*}
    \sum_{\pi\in P} \E_b\left[ \val\left(\Psi^{\cH^{\pi}}_b\right)^2\right]
    &\leq  O(1) \cdot \abs*{\cH} \parens*{ n^{2-\frac{2}{q}} + n\sqrt{k\ell \log n} } \mper
\end{align*}
We next show that the above is upper bounded by $\frac{c\eps'^2}{q\log n} \cdot |\cH|^2$ for some small enough constant $c$.

Note that $|\cH| \geq \delta' kn$.
Clearly, we have $n^{2-\frac{2}{q}} \leq o\big(\frac{\eps'^2}{q\log n}\big) \cdot |\cH|$ by our choice of $k$.
Next, we verify that $n\sqrt{k\ell \log n} \leq O\big(\frac{\eps'^2}{q\log n}\big) \cdot \delta'kn$.
This is equivalent to $k \geq \Omega\big(\frac{q^2}{\eps'^4 \delta'^2}\big) \ell \log^3 n$.
Since $\ell = n^{1-\frac{2}{q}} \log n$, this is satisfied by our choice of $k$.

Therefore, by \Cref{eq:val-Psi-bound} we have $\E_b[ \val(\Psi_b)] \leq (\eps'/2 + \eta) |\cH| \leq \eps'|\cH|$.
This contradicts that $\E_b[ \val(\Psi_b)] \geq 2\eps' |\cH|$.
\qedhere

\end{proof}

\subsection{Proof of \texorpdfstring{\Cref{lem:refute-approx-regular}}{Lemma~\ref{lem:refute-approx-regular}}: refuting regular \texorpdfstring{$q$}{q}-XOR} \label{sec:refute-approx-regular}
In this section, we prove \Cref{lem:refute-approx-regular}. To do this, we would like to write $\Psi_b$ as a quadratic form of a ``Kikuchi'' matrix and then upper bound it using the spectral norm of the matrix. As in prior works, the main issue is that $\Psi_b(x)$ is an odd-degree homogeneous polynomial while Kikuchi matrices naturally yield even-degree polynomials as their quadratic forms. Therefore, we will upper bound $\val(\Psi_b)^2$ and use the Cauchy-Schwarz trick as described in \Cref{sec:nonlinear-overview}. This yields an even-degree polynomial, which we will then analyze a spectral upper bound for. We first define this even-degree polynomial and then show how it can be written as a quadratic form of an appropriate Kikuchi matrix. In order to ensure independence of some matrices that arise later in our analysis, we will define our polynomial with respect to a fixed partition $L \sqcup R$ of $[k]$.

\begin{definition}\label{def:spectralpolydef}

    Let $\cH$ be a $t$-approximately strongly regular hypergraph. Let $\{\cH^{(t)}_\theta\}_{\theta \in[p_t]}$ be the partition given by approximate strong regularity, so that for $\theta \in [p_t]$, there is a $Q_{\theta} \in \binom{[n]}{t}$ such that every $C \in \cH_{\theta}$ satisfies $Q_{\theta} \subseteq C$. For $b\in \pmo^k$ and a partition $L\sqcup R=[k]$, we define
    \[
    f^\cH_{L,R,t}(x) := \sum_{\theta\in [p_t]} \sum_{\substack{i\in L\\j\in R}}b_ib_j \sum_{\substack{C\in \cH^{(t)}_\theta \cap \cH_i,\\ C'\in \cH^{(t)}_{\theta} \cap \cH_j}} x_{ C\setminus Q_{\theta}}x_{C'\setminus Q_{\theta}} \mper
    \]
\end{definition}

\begin{remark}[Unraveling the definition of $f^\cH_{L,R,t}$]
To define $f^\cH_{L,R,t}$, we work with an arbitrary partition of the original matchings $\cH_1, \cH_2, \ldots, \cH_k$ into a ``left'' group $L$ and a ``right'' group $R$ (this partitioning later ensures independence when we apply matrix concentration).

Then, $f^\cH_{L,R,t}$ can be viewed as merging pairs of hyperedges between the left and right groups within each $\cH_{\theta}^{(t)}$.
That is, $f^\cH_{L,R,t}$ includes one monomial for each pair $C,C' \in \cH^{(t)}_{\theta}$ such that $C \in \cH_i$ for $i \in L$ and $C' \in \cH_j$ for $j \in R$, obtained by multiplying $b_i x_C$ and $b_j x_{C'}$.  Since  $Q_{\theta} \subseteq C,C'$, the product has a squared term $x_{Q_{\theta}}^2$ which ``cancels'' out to $1$.
\end{remark}

\paragraph{The Cauchy-Schwarz trick:} Let us relate $\E_{L,R} \big[ \val \big(f^\cH_{L,R,t}\big)\big]$ with $\Psi^\cH_b$ and conclude that it is enough to bound $\E_{L,R} \big[ \val \big(f^\cH_{L,R,t}\big)\big]$. Here, the expectation is over a uniformly random partition $L \cup R$ of $[k]$.
We emphasize again that $t$-approximate strong regularity (\Cref{def:approxstrongregdef}) guarantees a partitioning $\{\cH_{\theta}^{(t)}\}_{\theta\in[p_t]}$.
\begin{lemma}[Cauchy-Schwarz Trick]\label{lem:cstrickNEW}
Let $\cH$ be a $t$-approximately strongly regular hypergraph. Then we have
    \[ \val(\Psi^\cH_b)^2 \leq \frac{2|\cH|^2}{d_t} + \frac{8|\cH|}{d_t}\E_{L,R}\left[\val\big(f^\cH_{L,R,t}\big)\right] \mper \]
\end{lemma}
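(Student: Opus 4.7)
The plan is to run the Cauchy--Schwarz trick directly on the approximate strong regularity partition of $\cH$. Fix $x^* \in \pmo^n$ attaining $\val(\Psi_b^\cH) = \Psi_b^\cH(x^*)$, and let $\{\cH_\theta^{(t)}\}_{\theta\in[p_t]}$ be the partition of $\cH$ from \Cref{def:approxstrongregdef}, with common $t$-sets $Q_\theta \subseteq C$ for every $C\in\cH_\theta^{(t)}$ and $|\cH_\theta^{(t)}|\geq d_t/2$, so $p_t \leq 2|\cH|/d_t$. Write $\Psi_b^\cH = \sum_\theta \Phi_\theta$ with $\Phi_\theta(x) \coloneqq \sum_{C\in\cH_\theta^{(t)}} b_{i(C)} x_C$, where $i(C)$ is the unique $i$ with $C\in\cH_i$. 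A single application of Cauchy--Schwarz across the $p_t$ pieces gives $\val(\Psi_b^\cH)^2 \leq p_t \sum_\theta \Phi_\theta(x^*)^2$, which already manufactures the prefactor $2|\cH|/d_t$ in the target bound.

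The next step is to expand each $\Phi_\theta(x^*)^2$. The key structural observation I would use is that any two hyperedges in $\cH_\theta^{(t)}$ share the vertices of $Q_\theta$, while each $\cH_i$ is a matching; hence at most one hyperedge of $\cH_\theta^{(t)}$ can lie in any single $\cH_i$. Writing $I_\theta\subseteq[k]$ for the set of matchings that touch $\cH_\theta^{(t)}$ and $C_{\theta,i}$ for the unique corresponding hyperedge, this yields the clean expansion $\Phi_\theta(x)^2 = |\cH_\theta^{(t)}| + \sum_{i\neq j\in I_\theta} b_i b_j x_{C_{\theta,i}} x_{C_{\theta,j}}$ (using $x_{C_{\theta,i}}^2 = 1$ for $x\in\pmo^n$), and summing across $\theta$ the diagonal contributes exactly $\sum_\theta |\cH_\theta^{(t)}| = |\cH|$.

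To recognize the off-diagonal part as $f_{L,R,t}^\cH$, I introduce the uniform random partition $[k] = L \sqcup R$ in which each index lands in $L$ or $R$ independently with probability $1/2$. Since $\Pr[i\in L,\,j\in R] = 1/4$ for $i\neq j$, I can rewrite $\sum_{i\neq j\in I_\theta} b_i b_j x_{C_{\theta,i}} x_{C_{\theta,j}} = 4\,\E_{L,R}\bigl[\Phi_\theta^L(x^*)\Phi_\theta^R(x^*)\bigr]$, where $\Phi_\theta^L$ (resp.\ $\Phi_\theta^R$) is the restriction of $\Phi_\theta$ to $i\in L$ (resp.\ $i\in R$). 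The identity $x_C x_{C'} = x_{C\setminus Q_\theta}\, x_{C'\setminus Q_\theta}$ for $x\in\pmo^n$ --- which holds because $Q_\theta\subseteq C\cap C'$ and $x_{Q_\theta}^2 = 1$ --- then matches $\sum_\theta \Phi_\theta^L(x^*)\Phi_\theta^R(x^*)$ with $f_{L,R,t}^\cH(x^*) \leq \val(f_{L,R,t}^\cH)$. Putting the three ingredients together gives $\val(\Psi_b^\cH)^2 \leq p_t\bigl(|\cH| + 4\,\E_{L,R}[\val(f_{L,R,t}^\cH)]\bigr) \leq \tfrac{2|\cH|^2}{d_t} + \tfrac{8|\cH|}{d_t}\,\E_{L,R}[\val(f_{L,R,t}^\cH)]$.

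The only non-routine step I anticipate is the ``at most one edge per matching per piece'' observation: it is what strips the diagonal of $\Phi_\theta^2$ down to the scalar $|\cH_\theta^{(t)}|$ (rather than a messy sum over $C\neq C'$ in the same matching) and makes the cross term a genuine bilinear form in the $(L,R)$ split, so that the expectation identity can produce $f_{L,R,t}^\cH$ exactly. Everything else --- the partition-level Cauchy--Schwarz, the random $L/R$ averaging, and the $\pm 1$ cancellation of $x_{Q_\theta}^2$ --- is a direct algebraic manipulation once that step is in place.
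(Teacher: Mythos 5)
Your argument is correct and follows essentially the same route as the paper's proof: partition-level Cauchy--Schwarz yielding the $p_t\leq 2|\cH|/d_t$ prefactor, the expansion of each square using the fact that distinct hyperedges in $\cH_\theta^{(t)}$ lie in distinct matchings, the $\Pr[i\in L,\,j\in R]=1/4$ averaging to reveal $f_{L,R,t}^\cH$, and the $x_{Q_\theta}^2=1$ cancellation. Fixing a maximizer $x^*$ and bounding $f_{L,R,t}^\cH(x^*)\leq\val(f_{L,R,t}^\cH)$ is a cosmetic simplification of the paper's appeal to $\max_x\E_{L,R}[\cdot]\leq\E_{L,R}[\max_x\cdot]$; both are valid and the two proofs are otherwise the same.
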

\begin{proof}
For convenience, we will slightly abuse notation and denote $b_C$ to be $b_i$ for $C \in \cH_i$.
We have:
\begin{align*}
\Psi^\cH_b(x) &= \sum_{C\in \cH} b_C x_C
= \sum_{\theta \in [p_t]} x_{Q_{\theta}} \sum_{C \in \cH_{\theta}^{(t)}} b_C x_{C \setminus Q_{\theta}} \mper
\end{align*}
Thus, by the Cauchy-Schwarz inequality:
\begin{align*}
\Psi^\cH_b(x)^2 &\leq \left(\sum_{\theta \in [p_t]} x_{Q_{\theta}}^2\right) \sum_{\theta \in [p_t]} \left( \sum_{C \in \cH_{\theta}^{(t)}} b_C x_{C \setminus Q_{\theta}} \right)^2 
= p_t \sum_{\theta\in[p_t]} \parens*{ \abs*{\cH_{\theta}^{(t)}} + \sum_{C \neq C' \in \cH_{\theta}^{(t)}} b_C b_{C'} x_{C\setminus Q_{\theta}} x_{C' \setminus Q_{\theta}} } \mper
\end{align*}
Note that all hyperedges in $\cH_{\theta}^{(t)}$ contain $Q_{\theta}$, so they must belong to distinct matchings.
For any $C \neq C' \in \cH_{\theta}$ where $C \in \cH_i$ and $C' \in \cH_j$ for some $i\neq j \in [k]$, when we take a random partition $L \sqcup R = [k]$, we have that $i\in L$ and $j\in R$ with probability exactly $1/4$.
Thus, the above expression equals
\begin{align*}
    p_t |\cH| + p_t \cdot 4 \E_{L,R} \sum_{\theta \in [p_t]} \sum_{i\in L, j\in R} b_i b_j \sum_{\substack{C \in \cH_i \cap \cH^{(t)}_{\theta},\\ C' \in \cH_j \cap \cH^{(t)}_{\theta}}}  x_{C\setminus Q_{\theta}} x_{C' \setminus Q_{\theta}}
    = p_t \parens*{|\cH| + 4 \E_{L,R}\left[f^\cH_{L,R,t}(x)\right]} \mper
\end{align*}
Here, the last inequality uses our definition of $f^\cH_{L,R,t}$ in \Cref{def:spectralpolydef}.
Thus, since we have $p_t\leq \frac{2|\cH|}{d_t}$ by the remark following \cref{def:approxstrongregdef}, we apply Jensen's Inequality to get
\[ \val(\Psi_b)^2 \leq \frac{2|\cH|^2}{d_t} + \max_{x\in\{\pm 1\}^n} \frac{8|\cH|}{d_t}\E_{L,R}\left[f^\cH_{L,R,t}(x)\right]\leq \frac{2|\cH|^2}{d_t} + \frac{8|\cH|}{d_t}\E_{L,R}\left[f^\cH_{L,R,t}(x)\right] \mper \qedhere \]
\end{proof}

\paragraph{$f^\cH_{L,R,t}$ as a quadratic form} We will write $f^\cH_{L,R,t}$ as a quadratic form of a natural (signed) \emph{Kikuchi} matrix (same as the one appearing in~\cite{HsiehKM23}).
See \Cref{fig:KikuchiGraph} for an example.
For any set $S$, and any $X, Y\subseteq S$, define $X\oplus Y:= (X\setminus Y)\cup(Y\setminus X)$ to be the symmetric difference of $X$ and $Y$. We define Kikuchi graphs associated with the monomials appearing in $f^\cH_{L,R,t}$. 

\begin{definition}[Level-$\ell$ Kikuchi Graph] \label{def:kikuchi}
Given a $t$-approximately strongly regular hypergraph $\cH$, hypergraph matchings $\cH = (\cH_1,\dots,\cH_k)$, and a partition $\{\cH_{\theta}^{(t)}\}_{\theta\in [p_t]}$ from approximate strong regularity, we define the Kikuchi graph as follows. 

We create two copies $(x_u,1)$ and $(x_u,2)$ of the original $n$ variables $x_u$ for each $1 \leq u \leq n$. The Kikuchi graph has vertices indexed by subsets $S \in {[n] \times [2] \choose \ell}$ for a parameter $\ell \in \N$.
For any $C \neq C' \in \cH^{(t)}_{\theta}$, let  $\widetilde{C} \coloneqq C\setminus Q_{\theta}$ and $\widetilde{C}' \coloneqq C'\setminus Q_{\theta}$.
For any $S,T \in {[n] \times [2]\choose \ell}$, we say that $S\xleftrightarrow{C, C'} T$ if:
    \begin{enumerate}
        \item $S\oplus T = X\oplus Y$, for $X \coloneqq \widetilde{C}\times\{1\}, Y \coloneqq \widetilde{C}'\times\{2\}$. Clearly, $X\oplus Y = X\cup Y$ as $X, Y$ are disjoint.
        \item When $q - t$ is even, $|S\cap X| = |S\cap Y| = |T\cap X| = |T\cap Y| = (q - t)/2$.
    \item When $q - t$ is odd, either
    \begin{enumerate}
        \item $|S\cap X| = |T\cap Y| = \left\lceil\frac{q - t}{2}\right\rceil, |S\cap Y| = |T\cap X| = \left\lfloor\frac{q - t}{2}\right\rfloor$ or
        \item $|S\cap X| = |T\cap Y| = \left\lfloor\frac{q - t}{2}\right\rfloor, |S\cap Y| = |T\cap X| = \left\lceil\frac{q - t}{2}\right\rceil$.
    \end{enumerate}
    \end{enumerate}
\end{definition}

\begin{figure}[ht]
    \centering
    \includegraphics[width=0.75\linewidth]{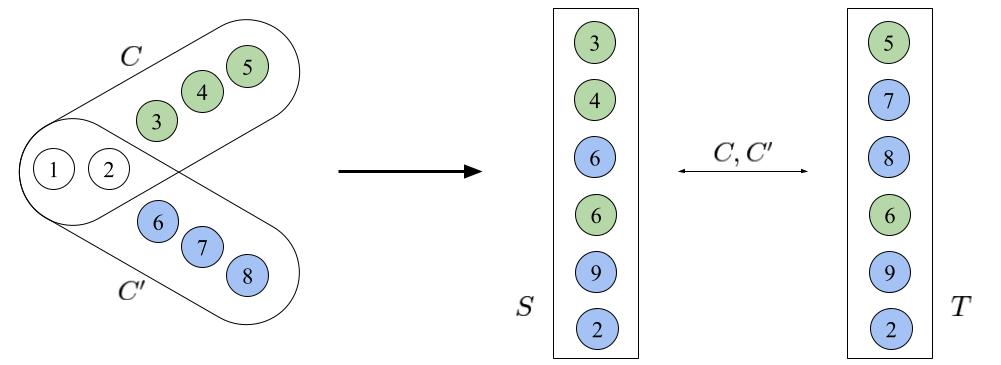}
    \caption{In this example, the hyperedges $C, C'$ induce an edge between $S, T$. Here, $q = 5, t = 2$. Vertices of the form $(x, 1)$ are colored green, while vertices of the form $(x, 2)$ are colored blue. Note that $S\oplus T = (C\setminus Q_\theta)\times\{1\}\bigcup\ (C'\setminus Q_\theta)\times\{2\}$. Also note how $S$ contains $2 = \left\lceil\frac{q - t}{2}\right\rceil$ green elements and $1 = \left\lfloor\frac{q - t}{2}\right\rfloor$ blue element from $S\oplus T$, while $T$ contains $1 = \left\lfloor\frac{q - t}{2}\right\rfloor$ green element and $2 = \left\lceil\frac{q - t}{2}\right\rceil$ blue elements from $S\oplus T$.}
    \label{fig:KikuchiGraph}
\end{figure}
Notice that each pair $(C,C')$ contributes a matching (at most one edge per vertex) in the Kikuchi graph. The following result gives the size of this matching.
\begin{fact}[Observation 3.7 of \cite{HsiehKM23}] \label{fact:bound-D}
    The Kikuchi graph defined in \Cref{def:kikuchi} has $N = \binom{2n}{\ell}$ vertices, and each pair $(C,C')$ contributes a $D$-sized matching in the Kikuchi graph, where
    \begin{equation*}
        D \coloneqq \binom{q-t}{\floor{\frac{q-t}{2}}}^2  \binom{2n-2(q-t)}{\ell-(q-t)}  \cdot 2^{\1(\text{$q-t$ is odd})}
        \geq N \parens*{\frac{\ell}{2n}}^{q-t} \mper
    \end{equation*}
\end{fact}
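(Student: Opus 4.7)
The vertex count $N = \binom{2n}{\ell}$ is immediate since the vertex set $\binom{[n]\times[2]}{\ell}$ has cardinality $\binom{2n}{\ell}$.

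For the matching structure, I would fix $C \neq C' \in \cH_\theta^{(t)}$ and set $\widetilde C \coloneqq C \setminus Q_\theta$, $\widetilde C' \coloneqq C' \setminus Q_\theta$, $X \coloneqq \widetilde C \times \{1\}$, and $Y \coloneqq \widetilde C' \times \{2\}$ --- disjoint $(q-t)$-subsets of $[n]\times[2]$. Since the condition $S\oplus T = X\cup Y$ uniquely determines $T$ from $S$, each vertex is incident to at most one edge arising from $(C,C')$, so the set of such edges forms a matching in the Kikuchi graph. I would then parametrize each edge by the triple $(S\cap X,\,S\cap Y,\,S\setminus(X\cup Y))$. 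Because $|S\cap X|+|S\cap Y|=q-t$ in every case of \Cref{def:kikuchi}, the third component is an arbitrary $(\ell-(q-t))$-subset of the $(2n-2(q-t))$-element set $([n]\times[2])\setminus(X\cup Y)$, contributing a factor of $\binom{2n-2(q-t)}{\ell-(q-t)}$.

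The count of the first two components splits on the parity of $q-t$. When $q-t$ is even, the constraint $|S\cap X|=|S\cap Y|=(q-t)/2$ gives $\binom{q-t}{(q-t)/2}^2$ options. When $q-t$ is odd, the two subcases (a) and (b) in \Cref{def:kikuchi} each contribute $\binom{q-t}{\lceil(q-t)/2\rceil}\binom{q-t}{\lfloor(q-t)/2\rfloor} = \binom{q-t}{\lfloor(q-t)/2\rfloor}^2$ options (using $\binom{n}{k} = \binom{n}{n-k}$), summing to $2\binom{q-t}{\lfloor(q-t)/2\rfloor}^2$. Multiplying by the common-part count yields the stated closed form for $D$.

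For the inequality $D\geq N(\ell/(2n))^{q-t}$, I would combine two routine estimates. First, the central-binomial factor satisfies $\binom{q-t}{\lfloor(q-t)/2\rfloor}^2 \cdot 2^{\1(q-t\text{ odd})} \geq 2^{q-t}$ for all $q-t\geq 1$ (direct computation in small cases, and via $\binom{q-t}{\lfloor(q-t)/2\rfloor} \gtrsim 2^{q-t}/(q-t+1)$ otherwise). Second, the hypergeometric ratio expands as
\[
\frac{\binom{2n-2(q-t)}{\ell-(q-t)}}{\binom{2n}{\ell}} = \prod_{i=0}^{q-t-1} \frac{(\ell-i)(2n-\ell-i)}{(2n-i)(2n-(q-t)-i)} \geq \parens*{\frac{\ell}{4n}}^{q-t}
\]
in the regime $q \ll \ell \leq n$, since each factor is at least $\ell/(4n)$ up to lower-order corrections. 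Combining the two yields $D \geq N(\ell/(2n))^{q-t}$.

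The sole care point is the parity-of-$(q-t)$ bookkeeping and verifying that the combination of the central-binomial factor with the hypergeometric ratio produces the stated clean constant. I do not anticipate any substantive difficulty.
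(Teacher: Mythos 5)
Your proof is correct, and since the paper cites this fact from \cite{HsiehKM23} (Observation 3.7 there) without reproducing a proof, there is no in-paper argument to compare against; the counting you give is the standard and natural one.

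One small bookkeeping point worth flagging: what your parametrization $(S\cap X,\, S\cap Y,\, S\setminus(X\cup Y))$ actually enumerates is the set of \emph{ordered} pairs $(S,T)$ with $S \xleftrightarrow{C,C'} T$, equivalently the set of matched vertices $S$, rather than the set of unordered matching edges $\{S,T\}$ (each of which corresponds to two triples, one for $S$ and one for $T$). This is in fact the quantity the paper needs — $D$ is used as the normalizing factor $z^\top \ol{K}^{(C,C')} z = D\cdot x_{C\setminus Q_\theta} x_{C'\setminus Q_\theta}$ in \Cref{lem:upper-bound-via-infty-to-1-norm}, and the quadratic form sums over ordered index pairs — so the phrase ``$D$-sized matching'' in the paper is itself slightly loose, and your formula lands on the correct value. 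It would be cleaner to say ``parametrize each ordered incidence $S \xleftrightarrow{C,C'} T$'' rather than ``each edge.'' For the inequality, your bound $\binom{q-t}{\lfloor(q-t)/2\rfloor}^2 \cdot 2^{\1(q-t\ \text{odd})} \geq 2^{q-t}$ is exactly tight at $q-t\in\{0,1,2\}$, so you have no slack there; the factor-of-2 cushion you exploit comes entirely from the hypergeometric side, where each ratio $\frac{(\ell-i)(2n-\ell-i)}{(2n-i)(2n-(q-t)-i)}$ is genuinely $\geq \ell/(4n)$ once $\ell \leq n$ and $\ell,n \gg q$, which the paper's parameter regime $\ell = n^{1-2/q}\log n$ satisfies for large $n$. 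Your ``up to lower-order corrections'' caveat is the right note of caution, but the claim does hold in the relevant regime.
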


We next define the \emph{signed} Kikuchi matrix according to $b\in \pmo^k$.
We will use $\ol{K}$ to denote unsigned matrices and $K$ to denote signed matrices according to $b$.

\begin{definition}[Signed Kikuchi Matrix] \label{def:signed-kikuchi}
    Assume the same setting as \Cref{def:kikuchi}.
    Let $N \coloneqq \binom{2n}{\ell}$.
    Let $\ol{K}^{(C,C')} \in \{0,1\}^{N\times N}$ be the matrix with rows/columns indexed by ${[n] \times [2] \choose \ell}$ and $\ol{K}^{(C,C')}(S,T) = \mathbf{1}\{S\overset{C, C'}{\longleftrightarrow}T\}$, i.e., the adjacency matrix of the matching formed by $(C,C')$.
    For $i\neq j\in [k]$, define the unsigned matrix
    \begin{align*}
        \ol{K}_{i,j,t} \coloneqq \sum_{\theta \in [p_t]} \sum_{\substack{C\in\cH_i \cap \cH^{(t)}_{\theta},\\C'\in \cH_j\cap \cH^{(t)}_{\theta}}} \ol{K}^{(C,C')} \mper
    \end{align*}
    Next, for $b\in \pmo^k$ and a partition $L \sqcup R = [k]$, we define the signed matrix
    \begin{align*}
        K_{i,t} \coloneqq \sum_{j \in R} b_j \ol{K}_{i,j,t} \mper
    \end{align*}
    Here, we omit the dependence on $b$, $L$ and $R$ for simplicity.
\end{definition}

We now observe that $f^\cH_{L,R,t}$ is a quadratic form of $\sum_{i \in L} b_i K_{i,t}$. 

\begin{lemma} \label{lem:upper-bound-via-infty-to-1-norm}
For any $x \in \on^n$, let $z \in \on^{N}$ for $N= \binom{2n}{\ell}$ be defined so that for any $S=S_1\sqcup S_2$ where $|S|=\ell,$ $S_1\subseteq [n]\times\{1\},$ and $S_2\subseteq [n]\times\{2\},$ we have $z_S:=\prod_{u\in S_1}x_u\prod_{u\in S_2} x_u.$ Then,
\[
z^{\top} \left(\sum_{i \in L} b_i K_{i,t}\right) z = D \cdot  f^\cH_{L,R,t}(x) \mper
\]
Thus, $\val(f^\cH_{L,R,t}) \leq D^{-1} \Norm{\sum_{i \in L} K_{i,t}}_{\infty \rightarrow 1}$.
\end{lemma}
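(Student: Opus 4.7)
The plan is to prove the identity by directly expanding the quadratic form on both sides and matching contributions, then to deduce the $\infty \to 1$ bound as an immediate consequence. Unfolding the left-hand side using \Cref{def:signed-kikuchi} gives
\[
z^\top \Bigl(\sum_{i \in L} b_i K_{i,t}\Bigr) z = \sum_{\substack{i \in L \\ j \in R}} b_i b_j \sum_{\theta \in [p_t]} \sum_{\substack{C \in \cH_i \cap \cH_\theta^{(t)} \\ C' \in \cH_j \cap \cH_\theta^{(t)}}} z^\top \ol{K}^{(C,C')} z \mper
\]
Matching the outer sums against \Cref{def:spectralpolydef}, the identity reduces to the per-pair statement $z^\top \ol{K}^{(C,C')} z = D \cdot x_{C \setminus Q_\theta}\, x_{C' \setminus Q_\theta}$ for every pair $(C, C')$ with $C, C' \in \cH_\theta^{(t)}$.

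The per-pair identity is the technical core. Since $\ol{K}^{(C,C')}$ is the $(0,1)$-adjacency matrix of the matching $\{(S, T) : S \xleftrightarrow{C, C'} T\}$, which by \Cref{fact:bound-D} has exactly $D$ edges, we have $z^\top \ol{K}^{(C,C')} z = \sum_{S \xleftrightarrow{C, C'} T} z_S z_T$. The key observation is that $z_S z_T$ takes the \emph{same} value on every edge: expanding the definition of $z$, elements of $S \cap T$ contribute $x_u^2 = 1$ and cancel, leaving $z_S z_T = \prod_{(u,c) \in S \oplus T} x_u$. By \Cref{def:kikuchi}, $S \oplus T = (\widetilde{C} \times \{1\}) \sqcup (\widetilde{C}' \times \{2\})$ where $\widetilde{C} \coloneqq C \setminus Q_\theta$ and $\widetilde{C}' \coloneqq C' \setminus Q_\theta$; since the copy-label $c \in \{1, 2\}$ does not influence $x_u$, this gives $z_S z_T = x_{C \setminus Q_\theta}\, x_{C' \setminus Q_\theta}$ independently of $(S, T)$. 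Summing over the $D$ edges of the matching yields the per-pair identity, and substituting back and comparing with \Cref{def:spectralpolydef} delivers $z^\top \bigl(\sum_{i \in L} b_i K_{i,t}\bigr) z = D \cdot f^\cH_{L,R,t}(x)$.

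For the norm bound, since $z \in \pmo^N$ for every $x \in \pmo^n$, the identity together with $\Norm{M}_{\infty \to 1} = \max_{u,v \in \pmo^N} u^\top M v$ immediately gives $D \cdot \val(f^\cH_{L,R,t}) \leq \Norm{\sum_{i \in L} b_i K_{i,t}}_{\infty \to 1}$; the stated form with $\Norm{\sum_{i \in L} K_{i,t}}_{\infty \to 1}$ then follows by absorbing the $b_i \in \pmo$ signs into the supremum defining the $\infty \to 1$ norm (a step which, in any case, is harmless for the subsequent use of this lemma via the Matrix Khintchine inequality).

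I do not expect any serious obstacle: the lemma is essentially a verification that $f^\cH_{L,R,t}$ is, by construction, the quadratic form of the signed Kikuchi matrix scaled by $D$. The one detail worth double-checking is that both the even and odd sub-cases of \Cref{def:kikuchi} (depending on the parity of $q - t$) produce the \emph{same} symmetric difference $S \oplus T = (\widetilde{C} \times \{1\}) \sqcup (\widetilde{C}' \times \{2\})$; only the way this set is split between $S$ and $T$ differs, so the per-pair identity $z_S z_T = x_{\widetilde{C}}\, x_{\widetilde{C}'}$ holds uniformly across both sub-cases.
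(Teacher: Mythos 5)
Your proof takes essentially the same approach as the paper: reduce the identity to the per-pair statement $z^\top \ol{K}^{(C,C')} z = D \cdot x_{C\setminus Q_\theta}\, x_{C'\setminus Q_\theta}$, which you verify by observing that $\ol{K}^{(C,C')}$ is the adjacency matrix of a $D$-sized matching and that $z_S z_T$ is constant equal to $x_{C\setminus Q_\theta} x_{C'\setminus Q_\theta}$ along every matching edge (since coordinates in $S\cap T$ square to one and $S\oplus T = (\widetilde C \times\{1\}) \sqcup (\widetilde C' \times\{2\})$ in both parity sub-cases). The paper invokes Fact 3.7 rather tersely for this identity; your write-up fills in the computation explicitly, and the check that the parity of $q-t$ only affects how $S\oplus T$ is apportioned between $S$ and $T$ (not $S\oplus T$ itself) is exactly the right detail to flag.

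One caveat: your last step claiming that $\Norm{\sum_{i\in L} b_i K_{i,t}}_{\infty\to 1}$ equals (or is bounded by) $\Norm{\sum_{i\in L} K_{i,t}}_{\infty\to 1}$ by ``absorbing the $b_i$ signs into the supremum'' does not hold in general --- the $b_i$ multiply distinct matrices, not rows or columns, so they cannot be pushed into $u$ or $v$, and the two norms can genuinely differ. The dropped $b_i$ in the lemma statement appears to be a typographical slip: the paper's own proof line concludes with $\Norm{\sum_{i\in L} b_i K_{i,t}}_{\infty\to 1}$, and that is the quantity used downstream in \Cref{lem:infty-to-1-norm-bound}. You correctly recognize that the $b_i$-version is what is needed; just drop the ``absorbing'' justification and state the bound with the $b_i$ in place.
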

\begin{proof}
From \Cref{fact:bound-D}, we have that $z^{\top} \ol{K}^{(C,C')}z = D \cdot x_{C \setminus Q_{\theta}} x_{C' \setminus Q_{\theta}}$. Thus,
\[
D \cdot f^\cH_{L,R,t}(x) = \sum_{i\in L, j\in R} b_i b_j \cdot z^{\top} \ol{K}_{i,j,t} z = z^{\top} \left(\sum_{i \in L} b_i K_{i,t}\right)z \leq \Norm{\sum_{i \in L} b_i K_{i,t}}_{\infty\rightarrow 1}\mper\qedhere
\]
\end{proof}

\paragraph{Spectral bound via Matrix Khintchine} We will use the Matrix Khintchine inequality (\Cref{lem:mk}) to upper bound the spectral norm of $\sum_{i \in L} b_i K_{i,t}$.
The left-right partitioning scheme ensures that the $K_{i,t}$'s are independent random matrices as $i$ varies in $L$. Recall that $d_r = \max_{Q\in \binom{[n]}{r}} d_{\cH, Q}$ is the maximum co-degree of size-$r$ subsets.
The following key technical result shows that one can upper bound the maximum degree of $K_{i,t}$ according to $d_t$ for all but a negligible fraction of vertices in $K_{i,t}$ for any $i \in L$.

\begin{lemma}\label{lem:kikuchiregularity}
    Assume the same setting as \cref{def:kikuchi} and suppose that $k \geq d_1 \geq \ell = n^{1-2/q}\log n$. 
    Then $d_t \geq d_1 n^{-2(t-1)/q}$ if $t < q/2$ and $d_t \geq d_1 n^{-1+2/q}$ if $t > q/2$.
    Moreover, for every partition $L \sqcup R = [k]$ and every $K_{i,t}$ for $i \in L$, the following holds: 
    there exists a constant $W$ such that for all but a $n^{-100q}$ fraction of vertices $S\in{[n]\times [2]\choose \ell}=V(K_{i,t}),$ we have
    \[\deg_{K_{i,t}}(S)\leq \left(\frac{\ell}{n}\right)^{q-t} n d_t \cdot W^q\mcom\]
    where $\deg_{K_{i,t}}(S)$ denotes the maximum $\ell_1$ norm of row $S$ in $K_{i,t}$ over all $b_j$ for $j \in R$.
\end{lemma}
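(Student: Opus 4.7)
The lower bounds on $d_t$ are immediate from item~(\ref{item:d-t-lower-bound}) of \Cref{def:goodindex}, since $t$-approximate strong regularity of $\cH$ requires $t$ to be a good index. For the degree bound on $K_{i,t}$, I would first pass to the unsigned setting: since each entry $\ol K_{i,j,t}(S,T)\in\{0,1\}$, we have $|K_{i,t}(S,T)|\leq \sum_{j\in R}\ol K_{i,j,t}(S,T)$ for every choice of signs, so the row-$\ell_1$ norm of $K_{i,t}$ at any vertex $S$ is bounded, uniformly in $b$, by $\deg(S) := \sum_T\sum_{j\in R}\ol K_{i,j,t}(S,T)$. Next, I would express $\deg(S)$ as a polynomial in the $\{0,1\}$-indicator vector $s\in\{0,1\}^{2n}$ of $S$: for each pair $(C,C')$ with $C\in\cH_i$, $C'\in\bigcup_{j\in R}\cH_j$, and $C,C'\in\cH_\theta^{(t)}$ for some $\theta$, the event that $(S,T)$ is an edge for some $T$ is determined by $|S\cap X|=\alpha$ and $|S\cap Y|=\beta$ for the finitely many $(\alpha,\beta)$ allowed by \Cref{def:kikuchi}, where $X = (C\setminus Q_\theta)\times\{1\}$ and $Y = (C'\setminus Q_\theta)\times\{2\}$. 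Upper-bounding $\1[|S\cap X|=\alpha]\leq \binom{|S\cap X|}{\alpha}$ (and similarly for $Y$), which expands as a sum of monomials with nonnegative coefficients, gives a nonnegative polynomial $\tilde f(s)$ of degree $q-t$ with $\deg(S)\leq \tilde f(s)$.

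Next I would replace the uniform distribution on $\binom{[n]\times[2]}{\ell}$ by independent Bernoulli$(\ell/(2n))$ random variables $Y$ (a standard coupling with only constant-factor loss) and apply the polynomial concentration inequality \Cref{fact:ss}. The mean is $\mu_0 = \Theta\bigl((\ell/n)^{q-t}\,nd_t\bigr)$: each pair $(C,C')$ contributes with probability $\Theta((\ell/n)^{q-t})$, and the count of valid pairs is at most $nd_t$ because $C\in\cH_i$ uniquely determines $\theta$ via the matching property of $\cH_i$ and then $|\cH_\theta^{(t)}|\leq d_t$ bounds the choices of $C'$. This mean already matches the target bound up to $W^q$. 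The heart of the proof is bounding $\mu_r = \max_{|Z|=r}\E[\partial_Z \tilde f(Y)]$ for each $1\leq r\leq q-t$. For $Z\subseteq[n]\times[2]$ with projections $Z'_1,Z'_2\subseteq[n]$ of sizes $r_1+r_2=r$, the derivative of a term vanishes unless $Z'_1\subseteq C\setminus Q_\theta$ and $Z'_2\subseteq C'\setminus Q_\theta$ (so $Z'_2\cap Q_\theta=\emptyset$ in particular), and its expectation is $O((\ell/n)^{q-t-r})$. The count of valid pairs splits into four cases: (i)~$r_1=r_2=0$ gives $\leq nd_t$; (ii)~$r_1=0,\, r_2\geq 1$ gives $\leq d_{r_2}$ (pick $C'$ with $Z'_2\subseteq C'$, which determines $\theta$ and a unique $C\in\cH_\theta^{(t)}\cap\cH_i$), bounded via condition~(1) of \Cref{def:goodindex}; (iii)~$r_1\geq 1,\, r_2=0$ gives $\leq d_t$ by the matching property of $\cH_i$; (iv)~$r_1,r_2\geq 1$ gives $\leq d_{t+r_2}$ (since $C$ is unique and $C'$ contains $Q_\theta\sqcup Z'_2$ of size exactly $t+r_2$), bounded via condition~(2) of \Cref{def:goodindex}.

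Under the choice $\ell = n^{1-2/q}\log n$, the required bound $\mu_r\leq \mu_0\cdot W^q/(100q\log n)^r$ (which makes \Cref{fact:ss} at $\lambda = (W^q-1)\mu_0$ for a large constant $W(q)$ yield tail probability $\leq n^{-100q}$) follows in each case precisely because the exponents in the good-index conditions of \Cref{def:goodindex} match. In particular, case~(iii) uses the polynomial-degree constraint $r_1\leq \lceil(q-t)/2\rceil$, which for odd $q$ and $t\geq 1$ forces $r_1<q/2$, just enough to overcome the lack of a co-degree improvement on $d_t$; cases~(ii) and~(iv) align one-to-one with conditions~(1) and~(2) of \Cref{def:goodindex}. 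The main obstacle is this case analysis: verifying that each inequality in \Cref{def:goodindex} matches exactly the ``bad'' derivative pattern that would otherwise dominate is the combinatorial core, and is the derivative counterpart of the Kikuchi-edge analysis illustrated in \Cref{sec:right-index-overview} and carried out in the proof of \Cref{lem:even-cover-regular-hypergraph}. Translating back from Bernoulli to uniform size-$\ell$ subsets introduces only constant factors, completing the desired bound on $\deg(S)$ for all but an $n^{-100q}$ fraction of $S$.
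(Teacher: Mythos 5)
Your proposal is correct and follows essentially the same route as the paper's proof: upper-bound $\deg_{K_{i,t}}(S)$ by a nonnegative degree-$(q-t)$ polynomial $\Deg(s,s')$ in the indicator of $S$, pass from uniform size-$\ell$ subsets to a biased product measure, apply the Schudy--Sviridenko inequality, and bound the partial derivatives $\mu_r$ by co-degrees of $\cH$ that match conditions (1)--(3) of \Cref{def:goodindex}. Your four-way case split on $(r_1,r_2)$ is just a refinement of the paper's three cases (the paper merges your (iii) and (iv) into the single case $Z_1\neq\emptyset$, which yields the bound $d_{t+|Z_2|}$ and covers $|Z_2|=0$), and the derivative-count arguments and the use of the matching property of $\cH_i$ to pin down $C$ and $Q_\theta$ are identical.
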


Note that the lower bound on $d_t$ in \Cref{lem:kikuchiregularity} and $d_1 \geq n^{1-2/q}$ ensure that $(\frac{\ell}{n})^{q-t}nd_t \geq 1$.

We will prove \Cref{lem:kikuchiregularity} in \Cref{subsec:kikuchireg}, using a key technical lemma (\Cref{lem:good-index-exists}) that guarantees a ``good'' index $t$.
Having the above $\ell_1$ norm upper bound for most rows in $K_{i,t}$ allows us to easily obtain an upper bound on $\Norm{\sum_{i \in L} b_i K_{i,t}}_{\infty \to \1}$ using Matrix Khintchine (after removing the bad rows and columns).

\begin{lemma} \label{lem:infty-to-1-norm-bound}
We have $D^{-1} \Norm{\sum_{i \in L} b_i K_{i,t}}_{\infty\rightarrow 1} \leq \sqrt{k\ell} nd_t O(\log n)^{1/2}$.
\end{lemma}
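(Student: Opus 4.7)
The plan is standard row/column pruning paired with the Matrix Khintchine inequality (\Cref{lem:mk}) applied over the independent signs $\{b_i : i \in L\}$.  Set $\Delta \coloneqq (\ell/n)^{q-t} n d_t W^q$, the deterministic per-row $\ell_1$ threshold from \Cref{lem:kikuchiregularity}, and let $B \coloneqq \bigcup_{i \in L} \{S \in V(K_{i,t}) : \deg_{K_{i,t}}(S) > \Delta\}$; by \Cref{lem:kikuchiregularity} we have $|B| \leq k n^{-100q} N$.  Let $\widetilde{K}_{i,t}$ be obtained from $K_{i,t}$ by zeroing all rows and columns indexed by $B$, and write $K_{i,t} = \widetilde{K}_{i,t} + R_i$.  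Since $K_{i,t} = \sum_{j \in R} b_j \overline{K}_{i,j,t}$ depends only on $\{b_j : j \in R\}$, so do $B$ and each $\widetilde{K}_{i,t}$, which is precisely what permits us to apply Matrix Khintchine over $\{b_i : i \in L\}$.

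For the pruned piece, use $\Norm{M}_{\infty \to 1} \leq N \Norm{M}_2$ (valid for every $N \times N$ matrix via Cauchy--Schwarz) and apply \Cref{lem:mk}:
\[
\E_{b} \Norm{\sum_{i \in L} b_i \widetilde{K}_{i,t}}_2 \leq \sqrt{2 \log 2N} \cdot \Norm{\sum_{i \in L} \widetilde{K}_{i,t} \widetilde{K}_{i,t}^{\top}}_2^{1/2} \mper
\]
After pruning, every row and column of every $\widetilde{K}_{i,t}$ has $\ell_1$ norm at most $\Delta$, so bounding the spectral norm of the PSD sum by its maximum row $\ell_1$ norm gives $\Norm{\sum_{i \in L} \widetilde{K}_{i,t} \widetilde{K}_{i,t}^{\top}}_2 \leq k \Delta^2$.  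Combining with $\log N \leq O(\ell \log n)$ and $D \geq N(\ell/2n)^{q-t}$ from \Cref{fact:bound-D},
\[
D^{-1} \E_b \Norm{\sum_{i \in L} b_i \widetilde{K}_{i,t}}_{\infty \to 1} \leq D^{-1} N \cdot O(\sqrt{k \ell \log n}) \cdot \Delta \leq 2^{O(q)} \sqrt{k \ell \log n} \cdot n d_t \mcom
\]
matching the claim after absorbing $q$-dependent constants (including $W^q$) into the $O(\cdot)$.

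For the remainder $R_i$, note that since $\cH_1, \dots, \cH_k$ are matchings, the number of pairs $(C, C')$ contributing to $K_{i,t}$ is at most $|\cH_i| \cdot d_t \leq n d_t$, and each such pair is itself a matching in the Kikuchi graph, so deterministically $\deg_{K_{i,t}}(S) \leq n d_t$ for every $S$.  Hence
\[
\Norm{\sum_{i \in L} b_i R_i}_{\infty \to 1} \leq \sum_{i \in L} \Norm{R_i}_{1,1} \leq 2 \sum_{i \in L} \sum_{S \in B} \deg_{K_{i,t}}(S) \leq 2 k^2 N n^{-100q} \cdot n d_t \mcom
\]
and dividing by $D$ yields $2^{O(q)} k^2 n^{-100q} \cdot n d_t$, dwarfed by the main bound thanks to the $n^{-100q}$ savings.

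Since the heavy lifting (the pointwise degree bound in terms of $d_t$) has already been done in \Cref{lem:kikuchiregularity}, the only real technical care is to prune all matrices by the \emph{same} common set $B = \bigcup_i B_i$: this simultaneously delivers a uniform deterministic per-row $\ell_1$ bound of $\Delta$ and keeps the pruned matrices independent of the $b_i$'s that Matrix Khintchine averages over.  Everything else reduces to bookkeeping.
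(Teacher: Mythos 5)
Your proof is correct and follows essentially the same route as the paper: prune heavy rows/columns using \Cref{lem:kikuchiregularity}, apply Matrix Khintchine (\Cref{lem:mk}) to the pruned sum, and absorb the residual via the $n^{-100q}$ savings. Your two small refinements --- pruning all $K_{i,t}$ by a single common bad set $B$ so that the pruned matrices remain functions of $b_R$ alone (which the paper does implicitly but does not flag), and using the deterministic per-row bound $\deg_{K_{i,t}}(S) \leq n d_t$ for the residual rather than the paper's cruder $n^{3q}$ --- are sound and tidy the bookkeeping without changing the argument.
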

\begin{proof}
Fix any partition $L \sqcup R = [k]$.
For every $i \in L$, let $\tilde{K}_{i,t}$ be obtained by zeroing out any row and column indexed by $S$ such that $\deg_{K_{i,t}}(S) > (\ell/n)^{q-t}nd_t \cdot W^q$.
Then, for any $i \in L$, $K_{i,t}$ and $\tilde{K}_{i,t}$ differ in at most $N \cdot k (n^{-100q} + e^{-\ell/4})$ rows and columns (by \cref{lem:kikuchiregularity}). Further, observe that the $\ell_1$ norm of any row of $K_{i,t}$ is at most $|\cH|^2 \ell^{q} \leq n^{3q}$. As a result,
\begin{equation} \label{eq:khintchine-app}
    \Norm{\sum_{i \in L} b_i K_{i,t}}_{\infty \to \1} \leq \Norm{\sum_{i \in L} b_i \tilde{K}_{i,t}}_{\infty \to \1} + 2N \left(n^{-100q} + e^{-\ell/4}\right) \cdot n^{3q}\leq \Norm{\sum_{i \in L} b_i \tilde{K}_{i,t}}_{\infty \to \1} + o(N) \mcom
\end{equation}
where we use the naive $\ell_1$ norm bound for the rows and columns removed from $K_{i,t}$.

Next, observe that $\Norm{\sum_{i \in L} b_i \tilde{K}_{i,t}}_{\infty \to \1} \leq N \Norm{\sum_{i \in L} b_i \tilde{K}_{i,t}}_{2}$. Applying Matrix Khintchine inequality (\Cref{lem:mk}) and the triangle inequality:
\begin{align*} 
\E_{b \in \on^k} \left[ \Norm{\sum_{i \in L} b_i \tilde{K}_{i,t}}_{2}\right] \leq \sigma \sqrt{2\log N} \mcom
\text{ where } \sigma^2 = \Norm{\sum_{i\in L} \tilde{K}_{i,t}^2}_2 \leq |L| \cdot \max_{i\in L} \Norm{\wt{K}_{i,t}}_2^2 \mper
\end{align*}

Now, since $\tilde{K}_{i,t}$ has small degrees, we can upper bound its spectral norm by the largest $\ell_1$ norm of any row, i.e., $\max_{S} \deg_{\tilde{K}_{i,t}}(S)$.
So, we have $\sigma \leq \sqrt{k} \cdot (\ell/n)^{q-t} nd_t \cdot W^q$. 

Plugging back in \eqref{eq:khintchine-app}, we have:
\begin{equation} \label{eq:intermediate-bound}
\E_{b \in \on^k} \left[ \Norm{\sum_{i \in L} b_i K_{i,t}}_{\infty \to \1}\right] \leq O(N\sqrt{\log N}) \sqrt{k} \left(\frac{\ell}{n}\right)^{q-t} n d_t \cdot W^q + o(N) \mper
\end{equation}
We verify that the first term dominates the second above.
By assumption, $k \geq d_1 \geq \ell = n^{1-2/q}\log n$, and by \Cref{lem:kikuchiregularity}, we have $d_t \geq d_1 n^{-2(t-1)/q}$, thus
$\sqrt{k\ell} \big(\frac{\ell}{n}\big)^{q-t} \cdot nd_t \geq d_t n^{2(t-1)/q} \geq 1$.
Then, using \Cref{fact:bound-D}, we have $N/D \leq (2n/\ell)^{q-t}$, and $\log N \leq O(\ell \log n)$.
Thus, multiplying \eqref{eq:intermediate-bound} by $D^{-1}$ gives the desired upper bound $\sqrt{k\ell} nd_t O(\log n)^{1/2}$.
\end{proof}

With the Cauchy-Schwarz trick (\Cref{lem:cstrickNEW}) and the bound on $\Norm{\sum_{i \in L} b_i K_{i,t}}_{\infty\rightarrow 1}$ (\Cref{lem:infty-to-1-norm-bound}), \Cref{lem:refute-approx-regular} is almost an immediate corollary.

\begin{proof}[Proof of \Cref{lem:refute-approx-regular}]
Using \Cref{lem:cstrickNEW}, we have that 
\begin{equation} \label{eq:recall-CS}
\val\left(\Psi_b\right)^2 \leq \frac{2|\cH|^2}{d_t} + \frac{8|\cH|}{d_t} \E_{L,R} \left[\val\left(f^\cH_{L,R,t}\right)\right]\mper
\end{equation}
By \Cref{lem:upper-bound-via-infty-to-1-norm}, we can write $f^\cH_{L,R,t}(x)$ as $D^{-1}$ times a quadratic form of the signed Kikuchi matrix $\sum_{i\in L} b_i K_{i,t}$ from \Cref{def:signed-kikuchi}.
Then, by \Cref{lem:infty-to-1-norm-bound},
\[
\E_{L,R} \E_b[ \val(f_{L,R,t})] \leq \frac{1}{D} \E_{L,R} \E_b\left[\Norm{\sum_{i \in L}b_i K_{i,t}}_{\infty \to \1}\right]
\leq \sqrt{k\ell} \cdot nd_t O(\log n)^{1/2} \mper
\]
Plugging back this bound in \eqref{eq:recall-CS} completes the proof.
\end{proof}

\subsection{Proof of \texorpdfstring{\Cref{lem:kikuchiregularity}: bounding heavy vertices}{Lemma~\ref{lem:kikuchiregularity}}} \label{subsec:kikuchireg}
In this section, we prove \Cref{lem:kikuchiregularity}. First, we observe that the degree of any vertex $S$ of $K_{i,t}$ can be upper bounded by a polynomial $\Deg$ in the $0$-$1$ indicator of the vertex $S$. We then use a polynomial concentration inequality (\Cref{fact:ss}) to analyze the concentration of $\Deg$ as a function of a uniformly random $s$. The concentration inequality relies on expected partial derivatives of $\Deg$ that have natural combinatorial interpretations and can be described as a function of the co-degree sequence $d_1 \geq d_2 \geq \cdots \geq d_q$ of the input hypergraph $\cH = (\cH_1, \cH_2, \ldots, \cH_k)$.
In particular, the required bounds are precisely those such that $d_t$ is a good index as defined in the technical overview (\Cref{sec:regular-hypergraphs}), which we restate below.
\restatedefinition{def:goodindex}

We now prove \Cref{lem:kikuchiregularity}.

\begin{definition}[The Degree Polynomial]
Assume the same setting as \cref{def:kikuchi}, and fix an $i\in [k]$.
Let $s,s' \in \zo^{n}$ be indicator vectors of the sets $S \cap ([n] \times \{1\})$ and $S \cap ([n] \times \{2\})$, respectively. Consider the polynomial $\Deg(s,s')$ defined so that

\[
\Deg(s,s') = \sum_{\theta \in [p_t]} \sum_{\substack{C \in \cH_i \cap \cH^{(t)}_{\theta}, \\C' \neq C \in \cH^{(t)}_{\theta}}} \sum_{\substack{R \in {C\setminus Q_{\theta} \choose (q-t)/2},\\ R' \in {C' \setminus Q_{\theta} \choose (q-t)/2}}} s_R s'_{R'} \mcom
\]
if $t$ is even, and, 
\[
\Deg(s,s') = \sum_{\theta \in [p_t]} \sum_{\substack{C \in \cH_i \cap \cH^{(t)}_{\theta},\\ C' \neq C \in \cH^{(t)}_{\theta}}} \left( \sum_{\substack{R \in {C\setminus Q_{\theta} \choose \lceil (q-t)/2 \rceil}, \\R' \in {C' \setminus Q_{\theta} \choose \lfloor (q-t)/2 \rfloor}}} s_R s'_{R'} + \sum_{\substack{R \in {C\setminus Q_{\theta} \choose \lfloor (q-t)/2 \rfloor},\\ R' \in {C' \setminus Q_{\theta} \choose \lceil (q-t)/2 \rceil}}} s_R s'_{R'}  \right) \mcom
\]
if $t$ is odd. 
\end{definition}
\begin{figure}[h]
    \centering
    \includegraphics[width=0.75\linewidth]{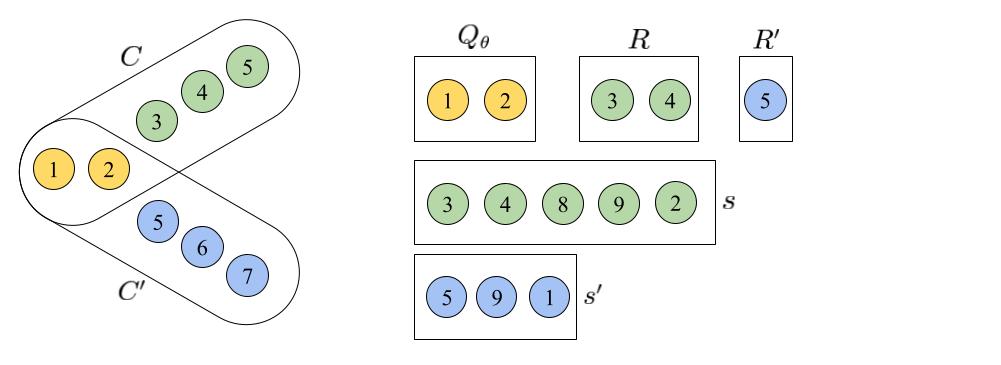}
    \caption{Example of a contributing monomial to $\Deg(s,s')$. Here, $q = 5, t = 2$. As in \cref{fig:KikuchiGraph}, vertices of the form $(x, 1)$ are colored green, while vertices of the form $(x, 2)$ are colored blue. The elements of $Q_\theta$ are colored yellow. Note that $q - t = 3$ is odd, and $S$ indicated by $(s,s')$ contains $2 = \left\lceil\frac{q - t}{2}\right\rceil$ green elements from $\left(C\setminus Q_\theta\right)\times\{1\}$, and $1 = \left\lfloor\frac{q - t}{2}\right\rfloor$ blue element from $\left(C'\setminus Q_\theta\right)\times\{2\}$.}
    \label{fig:Deg(ss')}
\end{figure}

\begin{remark}[Understanding $\Deg(s,s')$]
In either case, note that $\Deg(s,s')$ is a polynomial of degree $q-t$. For any $(s,s')$ that indicates a subset of size $\ell$ in $[n]\times [2]$,  $\Deg(s,s')$ upper bounds the degree of the vertex $S$ in $K_{i,t}$. This is because for each pair $(C,C')$ that contributes a $D$-matching to $K_{i,t}$, there is a corresponding monomial that evaluates to at least $1$ in $\Deg(s,s')$. We note that in $\Deg(s,s')$, we have contributions from all $C,C'$ such that $C \in \cH_i$ and $C,C'$ are in some piece $\cH^{(t)}_{\theta}$ of the partition promised by the approximate strong regularity of $\cH$. In particular, $\Deg(s,s')$ contains contributions from pairs $(C,C')$ such that $C' \in \cH_j$ for some $j \in L$ even though such a pair does not contribute any edges to $K_{i,t}$. This, of course, can only hurt us in our upper bounds on the degree. 
\end{remark}
To understand the distribution of degrees of the vertices $S$ of $K_{i,t}$, we must analyze the distribution of $\Deg(s,s')$ as $(s,s')$ vary over indicators of subsets of size exactly $\ell$. As in~\cite{GuruswamiKM22}, we use a coupling argument to instead work with the $\ell/n$-biased product distribution over $\zo^{2n}$.

\begin{fact}[Similar to Lemma 6.20 of~\cite{GuruswamiKM22}] \label{fact:sscoupling}
Let $U$ be the $\frac{\ell}{n}$-biased product distribution on $(s,s')$. Let $U'$ be the uniform distribution on $(s,s') \in \zo^{2n}$ indicating sets of size exactly $\ell$. Then for any $\lambda$, we have

\[\Pr_{(s,s') \sim U'} [\Deg(s,s')\leq \lambda]\geq \Pr_{(s,s') \sim U} \left[\Deg(s,s') \leq \lambda\right] - e^{-\ell / 4}.\]

\end{fact}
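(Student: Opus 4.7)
The plan is a monotone coupling argument exploiting two facts: the polynomial $\Deg(s,s')$ has non-negative coefficients and is therefore monotone nondecreasing in the coordinate-wise order on $\zo^{2n}$; and the total weight of $(s,s')$ under the $\ell/n$-biased product measure concentrates around $2\ell$, so is at least $\ell$ with probability $1 - e^{-\ell/4}$. I will couple $(s,s') \sim U$ to $(\hat s, \hat s') \sim U'$ so that $(\hat s, \hat s')$ is obtained by discarding $1$-coordinates of $(s,s')$ whenever possible, then read the conclusion off from monotonicity.

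First I would draw $(s,s') \sim U$ and let $T := |s|+|s'|$, which is a sum of $2n$ independent $\mathrm{Bernoulli}(\ell/n)$ variables with mean $2\ell$. The multiplicative Chernoff bound with deviation parameter $\delta = \tfrac{1}{2}$ gives
\[
\Pr[T < \ell] \;\leq\; \exp\!\Paren{-\frac{\delta^{2}\cdot 2\ell}{2}} \;=\; e^{-\ell/4},
\]
which is exactly the error term in the statement. Let $E := \{T \geq \ell\}$. On $E$ I define $(\hat s, \hat s')$ by selecting a uniformly random size-$\ell$ subset of the support of $(s,s')$; on $E^{c}$ I draw $(\hat s, \hat s')$ independently from $U'$.

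The consistency check is that $(\hat s, \hat s') \sim U'$ marginally. Conditioned on $T = k$, the product measure $U$ is uniform over the $k$-slice of $\zo^{2n}$. Hence for any fixed size-$\ell$ set $A \subseteq [n]\times[2]$, the probability that $A$ is the down-sampled set is
\[
\binom{2n-\ell}{k-\ell} \Big/ \Paren{\binom{2n}{k}\binom{k}{\ell}} \;=\; \binom{2n}{\ell}^{-1},
\]
independent of $k$; so $(\hat s, \hat s')\mid E \sim U'$, and on $E^{c}$ we forced the same marginal. The construction also guarantees $(\hat s, \hat s') \leq (s,s')$ coordinate-wise on $E$.

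Finally, monotonicity of $\Deg$ yields $\Deg(\hat s, \hat s') \leq \Deg(s,s')$ on $E$, and chaining the inequalities gives
\[
\Pr_{U'}[\Deg \leq \lambda] \;\geq\; \Pr[\Deg(\hat s, \hat s') \leq \lambda,\, E] \;\geq\; \Pr[\Deg(s, s') \leq \lambda,\, E] \;\geq\; \Pr_{U}[\Deg \leq \lambda] - \Pr[E^{c}],
\]
which is the claimed bound. No individual step is technically deep; the only subtlety is arranging the coupling so that $(\hat s, \hat s')$ has the correct $U'$-marginal while simultaneously coinciding with a down-sample of $(s,s')$ on the high-probability event, so that monotonicity can be invoked.
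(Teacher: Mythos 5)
Your proof is correct. The paper itself does not prove this fact; it states it as a black box, citing Lemma~6.20 of \cite{GuruswamiKM22}. Your argument — a monotone coupling of the biased product measure $U$ to the slice measure $U'$, using that the downsample conditioned on $\{T\geq\ell\}$ is exactly uniform on the $\ell$-slice, followed by a Chernoff bound $\Pr[T<\ell]\leq e^{-\ell/4}$ and the monotonicity of $\Deg$ coming from its non-negative coefficients — is the standard way such a statement is proved and almost certainly matches the argument in the cited source. The one detail worth flagging is that you verified the calibration $\binom{2n-\ell}{k-\ell}\big/\bigl(\binom{2n}{k}\binom{k}{\ell}\bigr)=\binom{2n}{\ell}^{-1}$ is independent of $k$, which is what makes the conditional law of the downsample on $E$ equal $U'$ and hence the unconditional marginal equal $U'$; this is the only non-obvious step and it checks out.
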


Our plan now is to apply the polynomial concentration inequality to understand the behavior of the polynomial $\Deg(s,s')$. The next lemma bounds the expectations of partial derivatives of $\Deg(s,s')$ towards this application. 

\begin{lemma}\label{lem:derivativebounds}
Suppose $(s,s') \in \zo^{2n}$ has the $\frac{\ell}{n}$-biased product distribution. Let $Z_1$ be any multiset of elements from $s_1, s_2, \ldots, s_n$, and let $Z_2$ be any multiset of elements from $s'_1, s'_2, \ldots, s'_n$. Then, 
\[
\E\left[\left(\prod_{ j \in Z_2}\partial_{s'_j} \right)\Deg(s,s') \right]\leq 2^q \left(\frac{\ell}{n}\right)^{q-t-|Z_2|} d_{|Z_2|} \mcom
\]
and,
\[
\E\left[\left(\prod_{i \in Z_1} \partial_{s_i} \prod_{ j \in Z_2}\partial_{s'_j} \right)\Deg(s,s') \right]\leq 2^q \left(\frac{\ell}{n}\right)^{q-t-|Z_1|-|Z_2|} d_{|Z_2|+t}\mper
\]
In particular, if $\gamma_i = \log_n d_i$ for all $i \in [q]$ and $\gamma_1 \geq 1-2/q$, and our hypergraph is $t$-approximately strongly regular,
\[
\E\left[\left(\prod_{i \in Z_1} \partial_{s_i} \prod_{ j \in Z_2}\partial_{s'_j} \right)\Deg(s,s') \right]\leq 2^q \left(\frac{\ell}{n}\right)^{q-t} \frac{nd_t}{(\log n)^{|Z_1| + |Z_2|}} \mper
\]

\end{lemma}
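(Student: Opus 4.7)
The plan is a direct counting argument, followed by a case analysis using the good-index conditions. Since $\Deg(s,s')$ is multilinear, applying $\prod_{i\in Z_1}\partial_{s_i} \prod_{j\in Z_2}\partial_{s'_j}$ to a monomial $s_R s'_{R'}$ produces $s_{R\setminus Z_1} s'_{R'\setminus Z_2}$ when $Z_1\subseteq R$ and $Z_2\subseteq R'$, and zero otherwise. Under the $(\ell/n)$-biased product distribution, the expectation of each surviving monomial equals $(\ell/n)^{q-t-|Z_1|-|Z_2|}$, so the entire left-hand side is this factor times the number of surviving quadruples $(\theta,C,C',R,R')$. For each fixed $(\theta,C,C')$, the number of valid $(R,R')$ is at most $\binom{q-t}{\lceil(q-t)/2\rceil}^2 \leq 2^{O(q)}$, which gets absorbed into the leading $2^q$ factor.

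The combinatorial core is counting the triples $(\theta, C, C')$. Survival forces $Z_1\subseteq C\setminus Q_\theta$ and $Z_2\subseteq C'\setminus Q_\theta$, so in particular $Z_1\cap Q_\theta = Z_2\cap Q_\theta = \emptyset$. For the first bound, I enumerate by $C'$: there are at most $d_{|Z_2|}$ edges $C'\in\cH$ containing $Z_2$; then $\theta$ is uniquely determined by $C'\in\cH^{(t)}_\theta$ from the partition, and $\cH_i\cap\cH^{(t)}_\theta$ has at most one element since $\cH_i$ is a matching and every edge of $\cH^{(t)}_\theta$ contains $Q_\theta$. For the second bound (with $|Z_1|\geq 1$), I instead enumerate by $C$: the constraint $Z_1\subseteq C\in\cH_i$ pins $C$ uniquely (matching); $\theta$ is then determined; and $C'\in\cH^{(t)}_\theta$ must contain $Q_\theta\cup Z_2$, a set of size exactly $t+|Z_2|$ since $Z_2\cap Q_\theta=\emptyset$, giving at most $d_{|Z_2|+t}$ choices.

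For the ``in particular'' bound, using $\ell = n^{1-2/q}\log n$ and hence $(n/\ell)^m = n^{2m/q}/(\log n)^m$, the claim reduces to verifying $d_{|Z_2|}/d_t \leq n^{1-2|Z_2|/q}$ when $|Z_1|=0$ and $d_{|Z_2|+t}/d_t \leq n^{1-2(|Z_1|+|Z_2|)/q}$ when $|Z_1|\geq 1$. In the first case, $|Z_2| \leq \lceil(q-t)/2\rceil$ (otherwise the derivative vanishes), and the inequality is immediate from condition~(1) of \Cref{def:goodindex}. In the second case, when $|Z_2| \leq \lfloor(q-t)/2\rfloor$, I apply condition~(2) at $r = |Z_2|+t$ to obtain $d_{|Z_2|+t}/d_t \leq n^{-2|Z_2|/q + (t - \1(t\text{ even}))/q}$, and then use $2|Z_1| \leq q - t + \1(t\text{ even})$ (which follows from $|Z_1|\leq\lceil(q-t)/2\rceil$) to finish. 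The edge case $|Z_2| = \lceil(q-t)/2\rceil > \lfloor(q-t)/2\rfloor$ (only possible when $t$ is even) is handled via the monotonicity $d_{|Z_2|+t} \leq d_{\lfloor(q+t)/2\rfloor}$ together with condition~(2) at $r = \lfloor(q+t)/2\rfloor$ and the identity $|Z_1|+|Z_2|\leq q-t$ (since $|Z_1|+|Z_2|\leq|R|+|R'|=q-t$).

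The main obstacle is the book-keeping required to confirm that the good-index inequalities cover every $(|Z_1|,|Z_2|)$ pair, especially the boundary situation where $|Z_2|$ sits just above $\lfloor(q-t)/2\rfloor$. Everything else is elementary and leans on two structural facts: each $\cH_i$ is a matching (so a nonempty $Z_1$ or any containment of $Q_\theta$ determines a unique edge), and the partition $\{\cH^{(t)}_\theta\}$ from approximate strong regularity guarantees that $\theta$ is determined once either $C$ or $C'$ is fixed.
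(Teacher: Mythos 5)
Your proof follows the same structure as the paper's: count surviving triples $(\theta,C,C')$, reduce to co-degree ratio inequalities, then invoke conditions (1)--(2) of \Cref{def:goodindex} with a case split around the boundary value $|Z_2|=\lceil(q-t)/2\rceil$. Your counting for $|Z_2|\geq 1$ (enumerate $C'$, then $\theta$, then $C$) and for $|Z_1|\geq 1$ (enumerate $C$, then $\theta$, then $C'$) matches the paper exactly, and your handling of the boundary case via monotonicity $d_{|Z_2|+t}\leq d_{\lfloor(q+t)/2\rfloor}$ plus $|Z_1|+|Z_2|\leq q-t$ is equivalent to the paper's $(f,g)=(|Z_1|+1,|Z_2|-1)$ substitution.

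There is one genuine (though small and easily repaired) gap: the case $Z_1=Z_2=\emptyset$, which is needed because the ``in particular'' bound must give $\E[\Deg(s,s')]\leq 2^{O(q)}(\ell/n)^{q-t}\,n\,d_t$ (this is $\mu_0$ in the later application of \Cref{fact:ss}). Your enumeration by $C'$ yields a count of $d_0=|\cH|$ in this case, and conditions (1)--(2) say nothing about $r=0$; in fact $|\cH|\leq n\,d_t$ need not hold (e.g.\ when $t>q/2$, condition~(3) only gives $n\,d_t\geq|\cH|\,n^{-1+2/q}$). The paper handles this case with a separate count: enumerate $C\in\cH_i$ (there are at most $n$ choices, since $\cH_i$ is a matching), which pins $\theta$ uniquely, and then $C'\in\cH_\theta^{(t)}$ has at most $d_t$ choices, giving $n\,d_t$ directly (this is \eqref{eq:expected-Deg-bound}). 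You should add this one-line argument for the empty case; everything else in your write-up is sound.
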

\begin{proof}
Without loss of generality, we will assume that $Z_1, Z_2$ are sets (since differentiating twice with respect to the same variable immediately makes the polynomial vanish). 

When $Z_1 = Z_2= \emptyset$, we observe that 
\[
\E\left[\Deg(s,s')\right] \leq 2^{2q} \left(\frac{\ell}{n}\right)^{q-t} n d_t \mper
\numberthis \label{eq:expected-Deg-bound}
\]
To see why, observe that the number of $(C,C')$ that contribute monomials to $\Deg(s,s')$ can be upper bounded by $n d_t$ since there are $\leq n$ possible choices of $C \in \cH_i$. Given $C \in \cH_i$, there is at most one $Q_{\theta} \subseteq C$ such that $C \in \cH^{(t)}_{\theta}$, and any $Q_{\theta}$ appears in at most $d_t$ different $C'$ in $\cH$. Further, each pair $(C,C')$ contributes at most $2^{2q}$ monomials corresponding $R,R'$ that range over subsets of $C,C'$ respectively. Finally, each monomial has degree exactly $q-t$ and thus has expectation $(\ell/n)^{q-t}$ under the $\ell/n$-biased product distribution.  

Let us start with the case when $Z_1 = \emptyset$ and $Z_2 \neq \emptyset$. Clearly, $|Z_2|\leq \lceil (q-t)/2 \rceil$. Our overall strategy is similar to the above -- we will count the number of pairs $(C,C')$ that can contribute a non-zero monomial to $\big(\prod_{j \in Z_2}\partial_{s'_j} \big) \Deg(s,s')$. First, each $(C,C')$ contributes at most $2^{2q}$ different monomials, and each monomial has an expectation of $(\ell/n)^{q-t-|Z_2|}$. To count the number of pairs, the only monomials that survive after taking the partial derivatives are ones where $C' \supseteq Z_2$. Thus, the number of possible $C'$ is at most $d_{|Z_2|}$. Given $C'$, we have a unique $\theta$ such that $C' \in \cH^{(t)}_{\theta}$, and given $Q_{\theta}$, there is at most one $C \in \cH_i$ (since $\cH_i$ is a matching) such that $C \supseteq Q_{\theta}$. Thus, we obtain an upper bound of $d_{|Z_2   |} 2^q (\ell/n)^{q-t-|Z_2|}$ in this case.  

Now, $(\ell/n)^{q-t-|Z_2|} d_{|Z_2|} \leq (\ell/n)^{q-t}n d_t\cdot(\log n)^{-|Z_2|}$ if and only if $d_{|Z_2|} \leq (\ell/n)^{|Z_2|} n d_t\cdot(\log n)^{-|Z_2|}$. Since $\ell = n^{1-2/q}\log n$, this is true if and only if $d_{|Z_2|} \leq n^{1-2 |Z_2|/q}d_t$ or $\gamma_{|Z_2|} \leq 1-2/q + \gamma_t$, a condition that is satisfied since $t$ satisfies the first condition in \Cref{def:goodindex}. 

Finally, consider the case when $Z_1$ is non-empty. Since the rest of the argument is similar, let's focus on counting the number of pairs $(C,C')$ such that $C \supseteq Z_1$. Clearly, there is at most one $C \in \cH_i$ that can contain $Z_1$. Given $C$, we know the unique $\cH^{(t)}_{\theta}$ that contains it and thus know $Q_{\theta}$. So we are looking for $C'$ that contain $Q_{\theta} \cup Z_2$ -- a set of size $t+|Z_2|$. Thus, the number of such $C'$ is at most $d_{t+|Z_2|}$. 

Now, $(\ell/n)^{q-t-|Z_1|-|Z_2|} d_{t+|Z_2|} \leq (\ell/n)^{q-t} n d_t\cdot(\log n)^{-(|Z_1| + |Z_2|)}$ for $\ell = n^{1-2/q}\log n$ if and only if $d_{t+|Z_2|} \leq n^{1-2(|Z_1|+|Z_2|)/q} d_t$. If $|Z_2|\neq \lceil(q-t)/2\rceil,$ we set $|Z_1|$ to its maximum of $\lceil (q-t)/2 \rceil$ to see this is satisfied if $d_{t+|Z_2|} \leq n^{-2(|Z_2|-t)/q} n^{-(1/q) \1 (q-t \text{ is odd})} d_t,$ which is true by the second condition in \Cref{def:goodindex}. Otherwise, we let $g = |Z_2|-1$ and $f=|Z_1|+1$, and since the degrees are decreasing, we get $d_{t+|Z_2|}\leq d_{t+g}
\leq n^{1-2(f+g)/q}d_t=n^{1-2(|Z_1|+|Z_2|)/q} d_t.$  \qedhere

\end{proof}

\begin{proof}[Proof of \Cref{lem:kikuchiregularity}]
Given the degree sequence $d_1 \geq d_2 \geq \cdots \geq d_q$, set $\gamma_i = \log_n d_i$ for every $1 \leq i \leq q$. Then, notice that the $\gamma_i$'s satisfy the hypothesis of \Cref{lem:good-index-exists} since we set $k$ large enough so that $d_1 \geq |\cH| /n \geq n^{1-2/q}$. 
Since our hypergraph is $t$-approximately strongly regular, 
property (\ref{item:gamma-t-lower-bound}) of \Cref{def:goodindex} implies that $d_t \geq d_1 n^{-2(t-1)/q}$ if $t < q/2$ and $d_t \geq d_1 n^{-1+2/q}$ if $t > q/2$,
i.e., the desired lower bound for $d_t$ stated in \Cref{lem:kikuchiregularity}.
Note that using the assumption $d_1 \geq \ell = n^{1-2/q}$, it follows that the upper bound on $\E[\Deg(s,s')]$ in \eqref{eq:expected-Deg-bound} is at least $n^{-\frac{2}{q}(q-t)} nd_t \geq 1$.

We apply \Cref{fact:ss} to $\Deg(s,s')$ over the $(\ell/n)$-biased product distribution over $\zo^n$. Given \Cref{lem:derivativebounds}, we immediately obtain that for large enough absolute constant $W>0$,
\[
\Pr\left[\Deg(s,s') \geq  \left(\frac{\ell}{n}\right)^{q-t} n d_t W ^q \right] \leq n^{-100q} \mper
\]
Combining with \Cref{fact:sscoupling} finishes the proof.
\end{proof}

\section{Approximate Strong Regularity Decomposition}
\label{decompsec}
In this section, we prove \cref{lem:new_bucketing_arpon_version}. We use the following technical (but rather crucial!) result, which we prove in \cref{sec:numberslemma}, to show that a good index as defined in \cref{def:goodindex} always exists.

\begin{lemma}
\torestate{
\label{lem:good-index-exists}
    Let $q\geq 3$ be an odd integer, and let $\gamma_1\geq\gamma_2\geq\cdots\geq\gamma_{q - 1}\geq\gamma_q\geq 0$ be real numbers. Then there exists $t\in[q]$ satisfying the following properties:
    \begin{enumerate}[(1)]
        \item $\gamma_r\leq\gamma_t + 1 - \frac{2r}{q}$ for all $1\leq r\leq\left\lceil\frac{q - t}{2}\right\rceil$.
        \label{item:small-i}
        \item $\gamma_r\leq\gamma_t - \frac{2(r - t)}{q} + \frac{1}{q}\left(t - \mathbf{1}(t\text{ even})\right)$ for all $t < r\leq\left\lfloor\frac{q + t}{2}\right\rfloor$.
        \label{item:large-i}
        \item If $t < \frac{q}{2}$, then $\gamma_t- \big(1 - \frac{2t}{q}\big)\geq \gamma_1 - \big(1-\frac{2}{q}\big)$. If $t > \frac{q}{2}$, then $\gamma_t > \gamma_1 - \big(1-\frac{2}{q}\big)$.
        \label{item:gamma-t-lower-bound}
    \end{enumerate}
    Thus, by taking $\gamma_i=\log_n d_i$ for all $i\in [q],$ every $q$-uniform hypergraph contains a good index.
}
\end{lemma}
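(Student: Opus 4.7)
The plan is to identify a natural extremal candidate and verify it is a good index; if this fails, the witness of failure itself provides one. Write $f(t) := \gamma_t + \tfrac{2t}{q}$, let $T_0 = (q-1)/2$ (recall $q$ is odd), and let $t_0 \in [1, T_0]$ maximize $f$. Conditions (1) and (3) for $t_0$ are direct consequences of maximality: condition (3) reduces to $f(t_0) \geq f(1)$; for condition (1), one checks $\lceil (q-t_0)/2 \rceil \leq T_0$ so the relevant $r$ lies in $[1,T_0]$ where $f(r) \leq f(t_0)$, and combining this with $t_0 < q/2$ yields $\gamma_r \leq \gamma_{t_0} + 1 - \tfrac{2r}{q}$.

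If condition (2) also holds for $t_0$, we are done. Otherwise fix any $r^* \in (t_0, \lfloor (q+t_0)/2 \rfloor]$ witnessing the violation. The first key observation is $r^* > T_0$: for $r \in (t_0, T_0]$, maximality of $f$ already gives $\gamma_r \leq \gamma_{t_0} - \tfrac{2(r-t_0)}{q}$, which is strictly stronger than what condition (2) demands. Hence $r^* \geq (q+1)/2 > q/2$, and I claim $r^*$ is a good index.

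The verification for $r^*$ rests on a single arithmetic fact: $r^* \leq \lfloor (q+t_0)/2 \rfloor$ is equivalent to $\tfrac{2r^*}{q} \leq 1 + \tfrac{t_0 - \mathbf{1}[t_0\text{ even}]}{q}$, with exact match in both parities of $t_0$. For condition (1) at $r^*$, the relevant range $[1, \lceil (q-r^*)/2 \rceil]$ sits inside $[1, T_0]$ since $r^* > q/2$, so $f(s) \leq f(t_0)$. Chaining this with the strict violation $f(r^*) > f(t_0) + \tfrac{t_0 - \mathbf{1}[t_0\text{ even}]}{q}$ yields $\gamma_s < \gamma_{r^*} + \tfrac{2(r^*-s)}{q} - \tfrac{t_0 - \mathbf{1}[t_0\text{ even}]}{q}$, and the arithmetic fact converts this into the required $\gamma_s \leq \gamma_{r^*} + 1 - \tfrac{2s}{q}$. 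Condition (3) for $r^*$ (case $t > q/2$) is the same chain specialized to $s = 1$, using $f(t_0) \geq f(1)$. For condition (2) at $r^*$, monotonicity $\gamma_s \leq \gamma_{r^*}$ for $s > r^*$ already suffices, since the analogous bound $\lfloor (q+r^*)/2 \rfloor \leq (3r^* - \mathbf{1}[r^*\text{ even}])/2$ (again using $r^* > q/2$) shows the slack $\tfrac{r^* - \mathbf{1}[r^*\text{ even}]}{q}$ dominates the penalty $\tfrac{2(s-r^*)}{q}$ throughout the relevant range.

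The main obstacle I anticipate is the parity bookkeeping: the indicator term $\mathbf{1}[t\text{ even}]$ in condition (2) is precisely tuned so that the endpoint $r^* = \lfloor (q+t_0)/2 \rfloor$ carries over with equality in the decisive inequalities, and a careful case analysis on the parities of $t_0$ and $r^*$ is needed to confirm alignment. Once this is done, the last sentence of the lemma is immediate, since the sequence $\gamma_i = \log_n d_i$ is nonincreasing and nonnegative by construction.
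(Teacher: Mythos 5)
Your proof is correct and follows essentially the same route as the paper: pick $t_0 < q/2$ maximizing $f(t)=\gamma_t + \tfrac{2t}{q}$, and if condition (2) fails for $t_0$, show that any witness of the failure (necessarily $>q/2$) is itself a good index, using the observation that (2) is automatic for indices above $q/2$ and chaining the violation with $f$-maximality of $t_0$ to get (1) and (3). The only cosmetic differences are that you take an arbitrary violating index rather than the $f$-maximizing one and verify condition (1) for $t_0$ directly on the whole range $[1,\lceil(q-t_0)/2\rceil]$ rather than routing part of it through the "(2)$\Rightarrow$(1)" reduction (the paper's \Cref{lem:2-implies-1}); neither change affects the logic, as the paper's own argument never actually uses the maximality of the violating index.
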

Now, we restate the approximate strong regularity decomposition lemma. We refer the reader to \Cref{sec:decomposition-overview} for an overview.

\restatelemma{lem:new_bucketing_arpon_version}

We prove \Cref{lem:new_bucketing_arpon_version} by describing an algorithm to produce the required partition, which uses as a subroutine the following \Cref{alg:newstrongreg}.
\begin{mdframed}
    \begin{algorithm}
    \label{alg:newstrongreg}    
    \mbox{}
    \begin{description}
        \item[Input:] A hypergraph $\cH$ on vertices $[n]$, and an integer $t\geq 1$.
        \item[Output:] Two hypergraphs $(\cH^{\mathrm{regular}}, \cH^{\mathrm{residual}})$ where $\cH = \cH^{\mathrm{regular}}\sqcup\cH^{\mathrm{residual}}$.
        \item[Operation:] \mbox{}
        \begin{enumerate}
        \item Set $\cH^{\mathrm{regular}}\coloneqq \emptyset, \cH^{\mathrm{residual}}\coloneqq \cH, D\coloneqq d_{\cH, t}$.
        \item While $d_{\cH^{\mathrm{residual}}, t}\geq D/2$:
        \begin{enumerate}
            \item \label{newsrg:peelingstep} Consider $Q\in\binom{V(\cH)}{t}$ such that $\big|\cH^{\mathrm{residual}}_{|Q}\big|$ is the maximum. Then perform $\cH^{\mathrm{regular}}\leftarrow\cH^{\mathrm{regular}}\cup\cH^{\mathrm{residual}}_{|Q}, \cH^{\mathrm{residual}}\leftarrow\cH^{\mathrm{residual}}\setminus\cH^{\mathrm{residual}}_{|Q}$. 
        \end{enumerate}
        \item  Output $(\cH^{\mathrm{regular}}, \cH^{\mathrm{residual}})$.
\end{enumerate}
    \end{description}
    \end{algorithm}
\end{mdframed}
\begin{remark}
\label{newSRDremarks}
Note that the output of this procedure depends on which $Q$ we choose in \cref{newsrg:peelingstep} if there are multiple $Q$ such that $|\cH^{\mathrm{residual}}_{|Q}|$ is maximal. Any arbitrary choice of $Q$ is fine for our application.
\end{remark}
In \cref{alg:newstrongreg}, observe that $d_{\cH^{\mathrm{regular}}, r}\leq d_{\cH, r}$ for all $r\in[q]$, since $\cH^{\mathrm{regular}}\subseteq\cH$. Also observe that if $d_{\cH, t} \geq 2$, then $d_{\cH^{\mathrm{residual}}, t} < d_{\cH, t}/2 < d_{\cH, t} = d_{\cH^{\mathrm{regular}}, t}$.

Our main decomposition algorithm iteratively applies \Cref{alg:newstrongreg}.
\begin{mdframed}
    \begin{algorithm}[Approximate Strong Regularity Decomposition]\mbox{}
    \label{alg:caterpillar}
    \begin{description}
        \item [Input:] $\cH$, a $q$-uniform hypergraph on $[n]$.
        \item [Output:] $\mathcal{T}$, a collection of hypergraphs partitioning $\cH$.
        \item[Operation:] \mbox{}
        \begin{enumerate}
            \item Define $\cH^{\mathrm{curr}}\coloneqq \cH, \mathcal{T}\coloneqq ()$.
            \item While $\cH^{\mathrm{curr}}\neq \emptyset$:
            \begin{enumerate}
                \item Write $\gamma_i\coloneqq \log_n d_{\cH^{\mathrm{curr}}, i}$ for all $i\in[q]$, and let $t\in[q]$ be the index returned by invoking \cref{lem:good-index-exists} on $\{\gamma_i\}_{i\in[q]}$.
                \item \label{item:bisectionstep} Apply \cref{alg:newstrongreg} on $\cH^{\mathrm{curr}}$ and $t$ to get hypergraphs $(\cH^{\mathrm{curr}})^{\mathrm{regular}}, (\cH^{\mathrm{curr}})^{\mathrm{residual}}$. Write $\cH^{\mathrm{curr}}\leftarrow(\cH^{\mathrm{curr}})^{\mathrm{residual}}, \mathcal{T}\leftarrow(\mathcal{T}, (\cH^{\mathrm{curr}})^{\mathrm{regular}})$.
            \end{enumerate}
            \item Return $\mathcal{T}$.
        \end{enumerate}
    \end{description}
    \end{algorithm}
\end{mdframed}
In the above algorithm, $(\mathcal{T}, \cH')$ stands for appending the hypergraph $\cH'$ at the end of the tuple $\mathcal{T}$.\\
We record some small propositions about the above algorithm.
\begin{proposition}
\label{prop:strongregprop}
    Suppose $\mathcal{T}$ is the output of \cref{alg:caterpillar} on a hypergraph $\cH$. Let $\cH'$ be any but the last entry of $\mathcal{T}$.  Let $t$ be the index in the invocation of \cref{item:bisectionstep} which produced $\cH'$. Then:
    \begin{enumerate}
        \item $\cH'$ is $t$-approximately strongly regular.
        \item Write $\gamma'_i\coloneqq \log_n d_{\cH', t}$. Then $\{\gamma'_i\}$ satisfies the conditions of \cref{lem:good-index-exists} with the ``good index'' being $t$.
    \end{enumerate}
\end{proposition}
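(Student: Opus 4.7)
The plan is to unpack the construction of $\cH'$: by definition, $\cH'$ equals the output $(\cH^{\mathrm{curr}})^{\mathrm{regular}}$ of \Cref{alg:newstrongreg} applied to the then-current hypergraph $\cH^{\mathrm{curr}}$, with the good index $t$ returned by \Cref{lem:good-index-exists} on the sequence $\gamma_i \coloneqq \log_n d_{\cH^{\mathrm{curr}}, i}$. The algorithm greedily peels pieces of the form $\cH^{\mathrm{residual}}_{|Q_\theta}$, so $\cH' = \bigsqcup_\theta \cH^{(t)}_\theta$ with every $C \in \cH^{(t)}_\theta$ containing $Q_\theta$, immediately establishing condition~(1) of \Cref{def:approxstrongregdef}.

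Write $D \coloneqq d_{\cH^{\mathrm{curr}}, t}$. For condition~(2), I will show $D/2 \leq |\cH^{(t)}_\theta| \leq D$ and separately that $d_{\cH', t} = D$. The lower bound follows because the while-loop only peels while $d_{\cH^{\mathrm{residual}}, t} \geq D/2$, and the greedy choice of $Q_\theta$ maximizes $|\cH^{\mathrm{residual}}_{|Q_\theta}|$, which is therefore at least $D/2$; the upper bound $|\cH^{(t)}_\theta| \leq D$ holds since $|\cH^{\mathrm{residual}}_{|Q_\theta}| \leq |\cH^{\mathrm{curr}}_{|Q_\theta}| \leq D$. The equality $d_{\cH',t} = D$ hinges on the following observation: let $Q_1$ be the first tuple peeled, which by the greedy rule satisfies $|\cH^{\mathrm{curr}}_{|Q_1}| = D$. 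Every such edge is placed into $\cH'$ at that step; since the edges containing $Q_1$ are thereby removed from the residual, no subsequent iteration can place an edge containing $Q_1$ into $\cH'$. Hence $|\cH'_{|Q_1}| = D$, giving $d_{\cH',t} \geq D$; combined with $\cH' \subseteq \cH^{\mathrm{curr}}$ (which gives $d_{\cH',t} \leq D$) we conclude $d_{\cH',t} = D$. Therefore $|\cH^{(t)}_\theta| \leq D = d_{\cH',t}$, completing condition~(2).

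For part~2 of the proposition (which in turn implies condition~(3) of \Cref{def:approxstrongregdef}), set $\gamma'_i \coloneqq \log_n d_{\cH',i}$. Since $\cH' \subseteq \cH^{\mathrm{curr}}$ we have $\gamma'_i \leq \gamma_i$ for all $i$, and the previous paragraph yields $\gamma'_t = \gamma_t$. Conditions~(1) and (2) of \Cref{lem:good-index-exists} are upper bounds on $\gamma_r - \gamma_t$, which remain valid after replacing each $\gamma_r$ by $\gamma'_r$ and $\gamma_t$ by $\gamma'_t$, because $\gamma'_r \leq \gamma_r$ and $\gamma'_t = \gamma_t$. Condition~(3) is equivalent to an upper bound on $\gamma_1 - \gamma_t$ (namely $\gamma_1 - \gamma_t \leq 2(t-1)/q$ when $t<q/2$ and $\gamma_1 - \gamma_t < 1 - 2/q$ when $t>q/2$), and this bound survives since $\gamma'_1 - \gamma'_t = \gamma'_1 - \gamma_t \leq \gamma_1 - \gamma_t$. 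Hence $t$ is a good index for $\{\gamma'_i\}$, proving part~2 and simultaneously condition~(3) of approximate strong regularity.

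The only step that requires real thought is the identity $d_{\cH',t} = D$; everything else is routine bookkeeping around the greedy peeling process. The proposition's restriction to entries of $\mathcal{T}$ other than the last plays no role in the argument, which applies verbatim to any invocation of \Cref{alg:newstrongreg} inside \Cref{alg:caterpillar}.
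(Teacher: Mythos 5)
Your proof is correct and takes essentially the same route as the paper's. The one point worth flagging: the paper's proof asserts $d_{\cH',t}=d_{\cH^{\mathrm{curr}},t}$ (written there, somewhat cryptically, as ``$d_{\cH',t}=d_{\cH'',t}$'') without further justification, whereas you actually prove it via the clean observation that the first tuple $Q_1$ peeled satisfies $|\cH^{\mathrm{curr}}_{|Q_1}|=D$ and all of those edges land in $\cH'$, so $d_{\cH',t}\geq D$; this closes a small gap left implicit in the paper, and the remainder of your argument (the $D/2\leq|\cH^{(t)}_\theta|\leq D$ bookkeeping and the monotonicity of the good-index inequalities under $\gamma'_r\leq\gamma_r$, $\gamma'_t=\gamma_t$) matches the paper's reasoning exactly.
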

\begin{proof}
    Consider any $C\in\mathcal{H}'$. During the iteration of \Cref{alg:newstrongreg} on inputs $(\cH^{\mathrm{curr}},t)$ 
    where $C$ was added to $\mathcal{H}'$, suppose that $\mathcal{H}^{\mathrm{curr}}=\mathcal{H}''$. Note that $d_{\cH', t} = d_{\cH'', t}$.
    Then there exists $Q\in{V(\mathcal{H})\choose t}$ such that $Q\subseteq C\in \mathcal{H}''_{|Q}$  and $\big|\mathcal{H}'_{|Q}\big|\geq d_{\cH', t}/2$, and $\big|\mathcal{H}'_{|Q}\big|\leq d_{\cH', t}$ holds by definition. Thus $\cH^j$ can  be written as the union of these $\cH'_{|Q}$'s, where we enumerate arbitrarily by $[p_t]$.
    
    The second statement follows by noting that $\gamma'_t = \gamma_t$, and $\gamma'_i\leq\gamma_i$.
\end{proof}
\begin{proposition}
\label{lem:caterpillarfastfinish}
    For the above algorithm, $|\mathcal{T}|\leq q\lceil\log_2|\cH|\rceil + 1$.
\end{proposition}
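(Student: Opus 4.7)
The plan is a standard amortized-counting argument. For each $t\in[q]$, let $T_t$ be the number of iterations of the outer loop of \Cref{alg:caterpillar} in which $t$ is selected as the good index; since every iteration appends exactly one hypergraph to $\mathcal{T}$, we have $|\mathcal{T}|=\sum_{t=1}^q T_t$. The strategy is to charge each outer iteration to a ``halving event'' for the co-degree $d_t$ of the current hypergraph, and then sum the resulting bounds across $t\in[q]$.

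The key observation is the halving property of \Cref{alg:newstrongreg}: when invoked with parameter $t$ on $\cH^{\mathrm{curr}}$, its inner while-loop is designed to exit only once $d_{(\cH^{\mathrm{curr}})^{\mathrm{residual}},t}<D/2=d_{\cH^{\mathrm{curr}},t}/2$. Hence in every outer iteration in which $t$ is chosen, the value of $d_t$ in the current hypergraph strictly more than halves. In iterations where some other $t'\neq t$ is chosen, $d_t$ is non-increasing, since \Cref{alg:newstrongreg} only removes hyperedges. Starting from $d_t\leq|\cH|$ and iterating, after $T_t$ $t$-chosen iterations the running value of $d_t$ is strictly less than $|\cH|/2^{T_t}$.

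Next I would exploit the fact that the outer loop runs precisely as long as $\cH^{\mathrm{curr}}\neq\emptyset$, which for a $q$-uniform hypergraph is equivalent to $d_t\geq 1$ for some (equivalently, every) $t\in[q]$, since any hyperedge contributes to the co-degree of each of its $\binom{q}{t}$ size-$t$ subsets. Combined with the halving bound, whenever the algorithm has not yet terminated we get $|\cH|/2^{T_t}>1$, i.e.\ $T_t<\log_2|\cH|$, so $T_t\leq\lceil\log_2|\cH|\rceil-1$ (by a case split on whether $|\cH|$ is a power of $2$). Exactly one outer iteration can terminate the process; let $t^\ast$ be the good index chosen there. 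The halving argument applied to the state just before that iteration gives $T_{t^\ast}-1\leq\lceil\log_2|\cH|\rceil-1$, hence $T_{t^\ast}\leq\lceil\log_2|\cH|\rceil$ after accounting for the terminating halving itself.

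Summing across $t\in[q]$ then yields
\[
|\mathcal{T}|\;\leq\;(q-1)\bigl(\lceil\log_2|\cH|\rceil-1\bigr)+\lceil\log_2|\cH|\rceil\;\leq\;q\lceil\log_2|\cH|\rceil+1\mper
\]
The only substantive step is the halving property of \Cref{alg:newstrongreg}, which is essentially read off its inner-loop exit condition; all the remaining pieces are elementary bookkeeping, so I do not expect any real obstacle.
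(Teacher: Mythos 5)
Your proof is correct and uses essentially the same approach as the paper: the halving property of \Cref{alg:newstrongreg} together with the fact that co-degrees are integers so that a non-empty $\cH^{\mathrm{curr}}$ forces $d_t \geq 1$. The only cosmetic difference is that you bound each $T_t$ directly and sum (in fact getting the marginally sharper bound $q\lceil\log_2|\cH|\rceil - q + 1$), whereas the paper phrases the same counting as a pigeonhole argument by contradiction.
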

\begin{proof}
    Write $\cH' = \cH^{\mathrm{curr}}$ at any step of this algorithm, and let $\cH''$ be $ \cH^{\mathrm{curr}}$ after we apply \cref{item:bisectionstep}. Note that for some $t\in[q]$, we had that $d_{\cH'', t} < d_{\cH', t}/2$. Now, assume for the sake of contradiction that $|\mathcal{T}|> q\lceil\log_2|\cH|\rceil + 1$. Then by the pigeonhole principle, there must be some $t\in[q]$ which was the index invoked in \cref{item:bisectionstep} more than $\lceil\log_2|\cH|\rceil $ times. Now, let $\cH'''$ be the last entry of $\mathcal{T}$. Then we have $d_{\cH''', t} < d_{\cH, t}/2^{\lceil\log_2|\cH|\rceil } \leq 1$, where the last inequality follows since $d_{\cH, t}\leq|\cH|$. But $d_{\cH''', t} < 1$ is a contradiction, and the result follows.
\end{proof}
We can finally prove \cref{lem:new_bucketing_arpon_version}.
\begin{proof}[Proof of \cref{lem:new_bucketing_arpon_version}]
    Run \cref{alg:caterpillar} on $\cH$ to obtain $\mathcal{T}$. Write $T\coloneqq |\mathcal{T}|$. By \cref{lem:caterpillarfastfinish}, we know that $T\leq q\lceil\log_2|\cH|\rceil + 1$.
    
    Write $\{\cH^\pi:\pi\in P\}$\footnote{$P$ is just an indexing set, and $\pi$ is the corresponding indexing variable.} to be the hypergraphs in $\mathcal{T}$ with size $\geq\frac{\eta|\cH|}{T}$. Note that the number of hyperedges of $\cH$ not contained in $\{\cH^\pi:\pi\in P\}$ has size $ < \frac{\eta|\cH|}{T}\cdot T = \eta|\cH|$. \cref{lem:new_bucketing_arpon_version} now follows from \cref{prop:strongregprop,lem:caterpillarfastfinish}.
\end{proof}
\subsection{Proof of \texorpdfstring{\Cref{lem:good-index-exists}}{Lemma~\ref{lem:good-index-exists}}: a good index exists}
\label{sec:numberslemma}

In this section, we prove \Cref{lem:good-index-exists}.
We first show that for inequality (\ref{item:small-i}) in \Cref{lem:good-index-exists}, we may assume $r < t$.

\begin{lemma} \label{lem:2-implies-1}
    In \Cref{lem:good-index-exists}, when $t < r \leq \ceil*{\frac{q-t}{2}}$, we have (\ref{item:large-i}) $\implies$ (\ref{item:small-i}).
\end{lemma}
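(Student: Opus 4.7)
The plan is a short algebraic calculation: assume (\ref{item:large-i}) and derive (\ref{item:small-i}) in the overlap range $t < r \le \lceil (q-t)/2 \rceil$. Subtracting the right-hand side of (\ref{item:small-i}) from the right-hand side of (\ref{item:large-i}) cancels $\gamma_t$ and reduces the implication to the purely numerical inequality
\begin{equation*}
-\frac{2(r-t)}{q} + \frac{1}{q}\bigl(t - \mathbf{1}(t\text{ even})\bigr) \;\le\; 1 - \frac{2r}{q},
\end{equation*}
which after collecting the $2r/q$ terms is equivalent to $3t - \mathbf{1}(t\text{ even}) \le q$. So the entire lemma reduces to showing that the hypothesis $t < r \le \lceil (q-t)/2\rceil$ forces this bound on $t$.

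To verify that, I would split on the parity of $t$, using that $q$ is odd throughout. If $t$ is odd, then $q-t$ is even, so $\lceil (q-t)/2\rceil = (q-t)/2$; the hypothesis $t < (q-t)/2$ gives $3t < q$, and since $\mathbf{1}(t\text{ even})=0$ this is exactly $3t - \mathbf{1}(t\text{ even}) < q$. If $t$ is even, then $q-t$ is odd, so $\lceil(q-t)/2\rceil = (q-t+1)/2$; the hypothesis $t < (q-t+1)/2$ gives $3t \le q$, and since $\mathbf{1}(t\text{ even}) = 1$ we have $3t - \mathbf{1}(t\text{ even}) \le q-1 < q$. In either parity case the required inequality holds (with room to spare), completing the reduction.

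I do not expect any real obstacle here: the only thing to be careful about is the case split on the parity of $t$ so that the ceiling $\lceil (q-t)/2 \rceil$ is evaluated correctly and matched with the indicator $\mathbf{1}(t\text{ even})$ appearing on the right-hand side of (\ref{item:large-i}). Once those two parities are aligned, the calculation is immediate and the lemma follows by chaining (\ref{item:large-i}) with the derived numerical inequality.
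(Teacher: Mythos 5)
Your proof is correct and follows essentially the same route as the paper: subtract the two right-hand sides, observe $\gamma_t$ cancels, and reduce to the purely numerical bound $3t - \mathbf{1}(t\text{ even}) \le q$, which follows from the non-emptiness of the range $t < r \le \lceil (q-t)/2 \rceil$. The only minor difference is that you split on the parity of $t$ to evaluate the ceiling exactly, whereas the paper simply drops the non-negative indicator $\mathbf{1}(t\text{ even})$ and uses the cruder but sufficient bound $3t \le q$; both are fine.
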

\begin{proof}
    Since $t < \ceil*{\frac{q-t}{2}}\leq \frac{q-t}{2}$, we have $3t \leq q$.
    Then, $-\frac{2(r-t)}{q} + \frac{1}{q}(t-\1(t \text{ even})) \leq \frac{1}{q}(3t-2r) \leq 1 - \frac{2r}{q}$.
    This means that (\ref{item:large-i}) is a tighter upper bound than (\ref{item:small-i}).
\end{proof}

We also show the following useful lemma.

\begin{lemma} \label{lem:no-larger-violation}
    If $t > q/2$, then inequality (\ref{item:large-i}) is always satisfied.
\end{lemma}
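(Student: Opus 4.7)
The plan is to reduce the inequality to a pure arithmetic check using the monotonicity of the $\gamma_i$'s.

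The key observation is that the right-hand side of inequality (\ref{item:large-i}) can be written as $\gamma_t + \Delta(r,t)$, where $\Delta(r,t) \coloneqq \frac{1}{q}\bigl(t - \1(t\text{ even}) - 2(r-t)\bigr)$. Since $\gamma_1 \geq \gamma_2 \geq \cdots \geq \gamma_q$, we have $\gamma_r \leq \gamma_t$ for all $r > t$, so it suffices to show that $\Delta(r,t) \geq 0$ for every $r$ in the allowed range $t < r \leq \lfloor \frac{q+t}{2}\rfloor$. Equivalently, I need to prove $t - \1(t\text{ even}) \geq 2(r-t)$.

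I would bound $2(r-t)$ using the upper endpoint of the range, splitting on the parity of $t$ (recall $q$ is odd). If $t$ is odd, then $q+t$ is even and $\lfloor \frac{q+t}{2}\rfloor = \frac{q+t}{2}$, giving $2(r-t) \leq q-t$; the required inequality becomes $t \geq q-t$, i.e., $2t \geq q$, which holds since $t > q/2$ and $q$ is odd (so $t \geq (q+1)/2$). If $t$ is even, then $q+t$ is odd and $\lfloor \frac{q+t}{2}\rfloor = \frac{q+t-1}{2}$, giving $2(r-t) \leq q-t-1$; the required inequality becomes $t - 1 \geq q-t-1$, again equivalent to $2t \geq q$, which follows from the same argument. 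In both cases, the conclusion holds.

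There is no real obstacle here: the only subtlety is keeping the floor-versus-ceiling parity book-keeping consistent between the range of $r$ and the indicator $\1(t\text{ even})$ appearing in the slack. Once these align, the inequality reduces to the single condition $2t \geq q$, which is exactly the hypothesis $t > q/2$ (sharpened by integrality). As a sanity check, in the edge cases $t \in \{q, q-1\}$ the range $t < r \leq \lfloor \frac{q+t}{2}\rfloor$ is empty and the statement holds vacuously, consistent with the computation above.
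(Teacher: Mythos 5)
Your proof is correct and takes essentially the same approach as the paper: observe that $\gamma_r \leq \gamma_t$ by monotonicity and then verify the additive slack $\frac{1}{q}\bigl(t - \1(t\text{ even}) - 2(r-t)\bigr)$ is nonnegative by bounding $r-t$ at the top of the range, which reduces to $2t \geq q$. The only cosmetic difference is that you split into explicit parity cases, whereas the paper carries the indicator $\1(t\text{ even})$ through a single algebraic chain by substituting $r-t \leq \frac{q-t-\1(t\text{ even})}{2}$.
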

\begin{proof}
    This follows from the descending properties of the $\gamma_t$'s.
    For any $r$ such that $t < r \leq \floor*{\frac{q+t}{2}}$, we have
    $-\frac{2(r-t)}{q} + \frac{1}{q}(t - \1(t\text{ even})) \geq -\frac{2}{q} \cdot \frac{q-t-\1(t\text{ even})}{2} + \frac{1}{q}(t - \1(t\text{ even})) = \frac{1}{q}(2t-q) > 0$.
    Thus, since $r > t$, we have $\gamma_r \leq \gamma_t$, and inequality (\ref{item:large-i}) holds.
\end{proof}

\begin{proof}[Proof of \Cref{lem:good-index-exists}]
    We use the following algorithm to find the desired index.
    \begin{enumerate}
        \item Set $t_0 < q/2$ to be the index that maximizes $\gamma_{t_0} + \frac{2t_0}{q}$.
        \item If there is some $t > q/2$ such that that $\gamma_t$ violates inequality (\ref{item:large-i}) for $\gamma_{t_0}$, then return the $t$ that maximizes $\gamma_t + \frac{2t}{q}$ over the violating indices.
    \end{enumerate}
    First, we claim that for all $r < t_0$, $\gamma_r$ satisfies inequality (\ref{item:small-i}) for $\gamma_{t_0}$.
    This follows from the fact that $t_0$ maximizes $\gamma_{t_0} + \frac{2t_0}{q}$, thus we have
    \begin{align*}
        \gamma_r + \frac{2r}{q} \leq \gamma_{t_0} + \frac{2t_0}{q} \implies \gamma_r \leq \gamma_{t_0} + \frac{2t_0}{q} - \frac{2r}{q} < \gamma_{t_0} + 1 - \frac{2r}{q} \mcom
    \end{align*}
    since $t_0 < q/2$.
    Moreover, we show that no index $r \in [t_0+1, \frac{q-1}{2}]$ violates inequality (\ref{item:large-i}) for $\gamma_{t_0}$.
    Again, we use $\gamma_r \leq \gamma_{t_0} - \frac{2(r-t_0)}{q}$, which immediately implies that inequality (\ref{item:large-i}) is satisfied.
    Therefore, if there is an index $t$ that violates inequality (\ref{item:large-i}) for $\gamma_{t_0}$, we must have $t > q/2$.

    Next, \Cref{lem:no-larger-violation} states that inequality (\ref{item:large-i}) is satisfied for any $r > t$ since $t > q/2$.
    Thus, we now verify that inequality (\ref{item:small-i}) is satisfied for all $r \leq \ceil*{\frac{q-t}{2}}$.
    Since $t$ violates inequality (\ref{item:large-i}) for $\gamma_{t_0}$, we must have $t \leq \floor*{\frac{q+t_0}{2}} = \frac{q+t_0-\1(t_0\text{ even})}{2}$ (as $q$ is odd)
    and $\gamma_t > \gamma_{t_0} - \frac{2(t-t_0)}{q}+ \frac{1}{q}(t_0 - \1(t_0\text{ even}))$, which imply that $\gamma_t > \gamma_{t_0} - 1 + \frac{2t_0}{q}$.
    Combined with $\gamma_r \leq \gamma_{t_0} + \frac{2(t_0-r)}{q}$, we get
    \begin{align*}
        \gamma_r \leq \gamma_t + 1 - \frac{2t_0}{q} + \frac{2(t_0-r)}{q}
        = \gamma_t + 1 - \frac{2r}{q} \mcom
    \end{align*}
    thus satisfying inequality (\ref{item:small-i}).

    Finally, we show lower bounds on $\gamma_{t_0}$ or $\gamma_t$.
    First, we have $\gamma_{t_0} + \frac{2t_0}{q} \geq \gamma_1 + \frac{2}{q}$ since $t_0 < q/2$. But $\gamma_{t_0} + \frac{2t_0}{q} \geq \gamma_1 + \frac{2}{q}\implies\gamma_{t_0}\geq 1 - \frac{2t_0}{q} + \big(\gamma_1 - \big(1 - \frac{2}{q}\big)\big)$, as desired. For $\gamma_t$ (if we reach step 2 of the algorithm), we have that $\gamma_t > \gamma_{t_0} - 1 + \frac{2t_0}{q} \geq \gamma_1 - \big(1 - \frac{2}{q}\big)$.
    This completes the proof.
\end{proof}

\section*{Acknowledgments}
We thank Matija Buci{\'c} for helpful discussions on related topics. A.B. thanks Rohit Agarwal for helpful discussions and comments.

\bibliographystyle{alpha}
\bibliography{odd-query}

\newcommand{\etalchar}[1]{$^{#1}$}
\begin{thebibliography}{BDYW11}

\bibitem[ABS{\etalchar{+}}23]{ABSZZ}
Noga Alon, Matija Buci{\'c}, Lisa Sauermann, Dmitrii Zakharov, and Or~Zamir.
\newblock {Essentially tight bounds for rainbow cycles in proper edge-colourings}.
\newblock {\em arXiv preprint arXiv:2309.04460}, 2023.

\bibitem[AG24]{AlrabiahG24}
Omar Alrabiah and Venkatesan Guruswami.
\newblock Near-tight bounds for 3-query locally correctable binary linear codes via rainbow cycles.
\newblock {\em CoRR}, abs/2404.05864, 2024.

\bibitem[AGKM23]{AlrabiahGKM23}
Omar Alrabiah, Venkatesan Guruswami, Pravesh~K. Kothari, and Peter Manohar.
\newblock A near-cubic lower bound for 3-query locally decodable codes from semirandom {CSP} refutation.
\newblock In {\em Proceedings of the 55th Annual {ACM} Symposium on Theory of Computing, {STOC} 2023, Orlando, FL, USA, June 20-23, 2023}, pages 1438--1448. {ACM}, 2023.

\bibitem[ALM{\etalchar{+}}98]{ALMSS98}
Sanjeev Arora, Carsten Lund, Rajeev Motwani, Madhu Sudan, and Mario Szegedy.
\newblock Proof verification and the hardness of approximation problems.
\newblock {\em Journal of the ACM (JACM)}, 45(3):501--555, 1998.

\bibitem[AS98]{AS98}
Sanjeev Arora and Shmuel Safra.
\newblock Probabilistic checking of proofs: A new characterization of np.
\newblock {\em Journal of the ACM (JACM)}, 45(1):70--122, 1998.

\bibitem[BCG20]{BCG20}
Arnab Bhattacharyya, L~Sunil Chandran, and Suprovat Ghoshal.
\newblock {Combinatorial Lower Bounds for 3-Query LDCs}.
\newblock In {\em 11th Innovations in Theoretical Computer Science Conference (ITCS 2020)}, volume 151, page~85. Schloss Dagstuhl--Leibniz-Zentrum fuer Informatik, 2020.

\bibitem[BDYW11]{BarakDYW11}
Boaz Barak, Zeev Dvir, Amir Yehudayoff, and Avi Wigderson.
\newblock Rank bounds for design matrices with applications to combinatorial geometry and locally correctable codes.
\newblock In {\em Proceedings of the 43rd {ACM} Symposium on Theory of Computing, {STOC} 2011, San Jose, CA, USA, 6-8 June 2011}, pages 519--528. {ACM}, 2011.

\bibitem[CGW10]{ChenGW10}
Victor Chen, Elena Grigorescu, and {Ronald de} Wolf.
\newblock Efficient and error-correcting data structures for membership and polynomial evaluation.
\newblock In {\em 27th International Symposium on Theoretical Aspects of Computer Science, {STACS} 2010, March 4-6, 2010, Nancy, France}, volume~5 of {\em LIPIcs}, pages 203--214. Schloss Dagstuhl - Leibniz-Zentrum f{\"{u}}r Informatik, 2010.

\bibitem[DG16]{DG16}
Zeev Dvir and Sivakanth Gopi.
\newblock {2-server PIR with subpolynomial communication}.
\newblock {\em Journal of the ACM (JACM)}, 63(4):1--15, 2016.

\bibitem[DLS13]{DAS2013905}
Shagnik Das, Choongbum Lee, and Benny Sudakov.
\newblock {Rainbow Turán problem for even cycles}.
\newblock {\em European Journal of Combinatorics}, 34(5):905--915, 2013.

\bibitem[DS05]{DvirS05}
Zeev Dvir and Amir Shpilka.
\newblock Locally decodable codes with 2 queries and polynomial identity testing for depth 3 circuits.
\newblock In {\em Proceedings of the 37th Annual {ACM} Symposium on Theory of Computing, Baltimore, MD, USA, May 22-24, 2005}, pages 592--601. {ACM}, 2005.

\bibitem[Dvi10]{Dvir10}
Zeev Dvir.
\newblock On matrix rigidity and locally self-correctable codes.
\newblock In {\em Proceedings of the 25th Annual {IEEE} Conference on Computational Complexity, {CCC} 2010, Cambridge, Massachusetts, USA, June 9-12, 2010}, pages 291--298. {IEEE} Computer Society, 2010.

\bibitem[Efr09]{Efremenko09}
Klim Efremenko.
\newblock 3-query locally decodable codes of subexponential length.
\newblock In {\em Proceedings of the 41st Annual {ACM} Symposium on Theory of Computing, {STOC} 2009, Bethesda, MD, USA, May 31 - June 2, 2009}, pages 39--44. {ACM}, 2009.

\bibitem[GKM22]{GuruswamiKM22}
Venkatesan Guruswami, Pravesh~K. Kothari, and Peter Manohar.
\newblock {Algorithms and certificates for Boolean {CSP} refutation: smoothed is no harder than random}.
\newblock In {\em {STOC} '22: 54th Annual {ACM} {SIGACT} Symposium on Theory of Computing, Rome, Italy, June 20 - 24, 2022}, pages 678--689. {ACM}, 2022.

\bibitem[GKST06]{GKST06}
Oded Goldreich, Howard Karloff, Leonard~J Schulman, and Luca Trevisan.
\newblock Lower bounds for linear locally decodable codes and private information retrieval.
\newblock {\em Computational Complexity}, 15(3):263--296, 2006.

\bibitem[HKM23]{HsiehKM23}
Jun{-}Ting Hsieh, Pravesh~K. Kothari, and Sidhanth Mohanty.
\newblock A simple and sharper proof of the hypergraph moore bound.
\newblock {\em ACM-SIAM Symposium on Discrete Algorithms, SODA}, 2023.

\bibitem[HKM{\etalchar{+}}24]{HKMMS24}
Jun-Ting Hsieh, Pravesh~K Kothari, Sidhanth Mohanty, David Munh{\'a}~Correia, and Benny Sudakov.
\newblock {Small Even Covers, Locally Decodable Codes and Restricted Subgraphs of Edge-Colored Kikuchi Graphs}.
\newblock {\em arXiv preprint arXiv:2401.11590}, 2024.

\bibitem[IK04]{IshaiK04}
Yuval Ishai and Eyal Kushilevitz.
\newblock On the hardness of information-theoretic multiparty computation.
\newblock In {\em Advances in Cryptology - {EUROCRYPT} 2004, International Conference on the Theory and Applications of Cryptographic Techniques, Interlaken, Switzerland, May 2-6, 2004, Proceedings}, volume 3027 of {\em Lecture Notes in Computer Science}, pages 439--455. Springer, 2004.

\bibitem[Jan20]{Janzer2020RainbowTN}
Oliver Janzer.
\newblock {Rainbow Tur{\'a}n number of even cycles, repeated patterns and blow-ups of cycles}.
\newblock {\em Israel Journal of Mathematics}, 253:813--840, 2020.

\bibitem[JM24]{JM24}
Oliver Janzer and Peter Manohar.
\newblock {A $k^{\frac{q}{q-2}}$ Lower Bound for Odd Query Locally Decodable Codes from Bipartite Kikuchi Graphs}.
\newblock In {\em Personal communication}, 2024.

\bibitem[JS24]{JanzerSudakov}
Oliver Janzer and Benny Sudakov.
\newblock On the turán number of the hypercube.
\newblock {\em Forum of Mathematics, Sigma}, 12:e38, 2024.

\bibitem[KLLT24]{Kim2024}
Jaehoon Kim, Joonkyung Lee, Hong Liu, and Tuan Tran.
\newblock {Rainbow Cycles in Properly Edge-Colored Graphs}.
\newblock {\em Combinatorica}, 44(4):909--919, Aug 2024.

\bibitem[KM23]{KothariM23}
Pravesh~K. Kothari and Peter Manohar.
\newblock An exponential lower bound for linear 3-query locally correctable codes.
\newblock {\em CoRR}, abs/2311.00558, 2023.

\bibitem[KM24]{KothariM24}
Pravesh~K. Kothari and Peter Manohar.
\newblock Superpolynomial lower bounds for smooth 3-lccs and sharp bounds for designs.
\newblock {\em CoRR}, abs/2404.06513, 2024.

\bibitem[KMSV07]{KMSV07}
Peter Keevash, Dhruv Mubayi, Benny Sudakov, and Jacques Verstra{\"e}te.
\newblock {Rainbow Tur{\'a}n Problems}.
\newblock {\em Combinatorics, Probability and Computing}, 16(1):109--126, 2007.

\bibitem[KT00]{KT00}
Jonathan Katz and Luca Trevisan.
\newblock On the efficiency of local decoding procedures for error-correcting codes.
\newblock In {\em Proceedings of the thirty-second annual ACM symposium on Theory of computing}, pages 80--86, 2000.

\bibitem[KV00]{KV00}
Jeong~Han Kim and Van~H Vu.
\newblock Concentration of multivariate polynomials and its applications.
\newblock {\em Combinatorica}, 20(3):417--434, 2000.

\bibitem[KW04]{KdW04}
Iordanis Kerenidis and {Ronald de} Wolf.
\newblock Exponential lower bound for 2-query locally decodable codes via a quantum argument.
\newblock {\em Journal of Computer and System Sciences}, 69(3):395--420, 2004.

\bibitem[LPP91]{MR1150376}
Fran\c~coise Lust-Piquard and Gilles Pisier.
\newblock Noncommutative {K}hintchine and {P}aley inequalities.
\newblock {\em Ark. Mat.}, 29(2):241--260, 1991.

\bibitem[Rom06]{Romashchenko06}
Andrei~E. Romashchenko.
\newblock Reliable computations based on locally decodable codes.
\newblock In {\em {STACS} 2006, 23rd Annual Symposium on Theoretical Aspects of Computer Science, Marseille, France, February 23-25, 2006, Proceedings}, volume 3884 of {\em Lecture Notes in Computer Science}, pages 537--548. Springer, 2006.

\bibitem[SS12]{SS12}
Warren Schudy and Maxim Sviridenko.
\newblock Concentration and moment inequalities for polynomials of independent random variables.
\newblock In {\em Proceedings of the Twenty-Third Annual ACM-SIAM Symposium on Discrete Algorithms}, SODA '12, page 437–446, USA, 2012. Society for Industrial and Applied Mathematics.

\bibitem[TJ74]{MR355667}
Nicole Tomczak-Jaegermann.
\newblock The moduli of smoothness and convexity and the {R}ademacher averages of trace classes {$S\sb{p}(1\leq p<\infty )$}.
\newblock {\em Studia Math.}, 50:163--182, 1974.

\bibitem[Tom24]{Tomon2024Robust}
Istv{\' a}n Tomon.
\newblock Robust (rainbow) subdivisions and simplicial cycles.
\newblock {\em Advances in Combinatorics}, Jan 8 2024.

\bibitem[Tre04]{Tre04}
Luca Trevisan.
\newblock Some applications of coding theory in computational complexity.
\newblock {\em arXiv preprint cs/0409044}, 2004.

\bibitem[Tro15]{Tropp15}
Joel~A. Tropp.
\newblock {An Introduction to Matrix Concentration Inequalities}.
\newblock {\em Found. Trends Mach. Learn.}, 8(1-2):1--230, 2015.

\bibitem[WAM19]{WeinAM19}
Alexander~S. Wein, Ahmed~El Alaoui, and Cristopher Moore.
\newblock {The Kikuchi Hierarchy and Tensor {PCA}}.
\newblock In David Zuckerman, editor, {\em 60th {IEEE} Annual Symposium on Foundations of Computer Science, {FOCS} 2019, Baltimore, Maryland, USA, November 9-12, 2019}, pages 1446--1468. {IEEE} Computer Society, 2019.

\bibitem[Wol09]{Wolf09}
{Ronald de} Wolf.
\newblock Error-correcting data structures.
\newblock In {\em 26th International Symposium on Theoretical Aspects of Computer Science, {STACS} 2009, February 26-28, 2009, Freiburg, Germany, Proceedings}, volume~3 of {\em LIPIcs}, pages 313--324. Schloss Dagstuhl - Leibniz-Zentrum f{\"{u}}r Informatik, Germany, 2009.

\bibitem[Woo07]{Woo07}
David Woodruff.
\newblock New lower bounds for general locally decodable codes.
\newblock In {\em Electronic Colloquium on Computational Complexity (ECCC)}, volume~14, 2007.

\bibitem[Woo08]{Woo08}
David Woodruff.
\newblock Corruption and recovery-efficient locally decodable codes.
\newblock In {\em Approximation, Randomization and Combinatorial Optimization. Algorithms and Techniques}, pages 584--595. Springer, 2008.

\bibitem[Woo12]{Woo12}
David~P Woodruff.
\newblock A quadratic lower bound for three-query linear locally decodable codes over any field.
\newblock {\em Journal of Computer Science and Technology}, 27(4):678--686, 2012.

\bibitem[WW05]{WdW05}
Stephanie Wehner and Ronald~de Wolf.
\newblock Improved lower bounds for locally decodable codes and private information retrieval.
\newblock In {\em International Colloquium on Automata, Languages, and Programming}, pages 1424--1436. Springer, 2005.

\bibitem[Yan24]{Yankovitz24}
Tal Yankovitz.
\newblock A stronger bound for linear 3-lcc.
\newblock {\em Electron. Colloquium Comput. Complex.}, pages TR24--036, 2024.

\bibitem[Yek08]{Yek08}
Sergey Yekhanin.
\newblock Towards 3-query locally decodable codes of subexponential length.
\newblock {\em Journal of the ACM (JACM)}, 55(1):1--16, 2008.

\bibitem[Yek10]{Yekhanin10}
Sergey Yekhanin.
\newblock {\em Locally Decodable Codes and Private Information Retrieval Schemes}.
\newblock Information Security and Cryptography. Springer, 2010.

\bibitem[Yek12]{Yekhanin12}
Sergey Yekhanin.
\newblock Locally decodable codes.
\newblock {\em Foundations and Trends® in Theoretical Computer Science}, 6(3):139--255, 2012.

\end{thebibliography}

\end{document}